\algrenewcommand\algorithmicdo{}
\tikzset{every state/.style={minimum size=0pt}}
\tikzset{mstate/.style={draw,rectangle,rounded corners,fill=blue!0,inner xsep=.3em,inner ysep=0em,text height=2ex,text depth=.9ex}}
\let\setbuilder\set
\newcommand{\simpleset}[1]{\{{#1}\}}
\renewcommand{\set}[1]{\normalexpandarg\IfSubStr{#1}{|}{\setbuilder{#1}}{\simpleset{#1}}}
\DeclareRobustCommand{\shortto}{%
  \mathrel{\mathpalette\short@to\relax}%
}
\DeclareRobustCommand{\shortminus}{%
  \mathrel{\mathpalette\short@minus\relax}%
}
\newcommand{\short@to}[2]{%
  \mkern2mu
  \clipbox{{.5\width} 0 0 0}{$\m@th#1\vphantom{+}{\rightarrow}$}%
}
\newcommand{\short@minus}[2]{%
  \mkern2mu
  \clipbox{{.5\width} 0 0 0}{$\m@th#1\vphantom{+}{-}$}%
}
\newcommand{\labeledto}[1]{{{\shortminus}\hspace{-2pt}\raisebox{0.16ex}{$\scriptstyle\{ #1\hspace{-0.28pt}\}$}\hspace{-2.2pt}{\shortto}}}
\newcommand{\scriptlabeledto}[1]{{{\shortminus}\hspace{-1.0pt}\raisebox{0.12ex}{$\scriptscriptstyle\{ #1\hspace{-0.28pt}\}$}\hspace{-1.6pt}{\shortto}}}
\newcommand\move[3]{
\mathchoice
{#1\,\labeledto{#2}\,#3}
{#1\labeledto{#2}#3}
{#1\scriptlabeledto{#2}#3}
{#1\scriptlabeledto{#2}#3}
}
\renewcommand{\phi}{\varphi}
\newcommand{\mem}{\mathfrak{m}}
\newcommand{\smem}{\mathfrak{s}}
\newcommand{\nat}{\mathbb{N}}
\newcommand{\PowerSetFin}[1]{\mathcal{P}_{\text{fin}}(#1)}
\newcommand{\PowerSet}[1]{\mathcal{P}(#1)}
\newcommand{\filter}{\nabla}
\newcommand{\filterof}[1]{\nabla_{\!\!#1}}
\newcommand{\fancyfilter}[1]{F_{c < \mxof{c}}}
\newcommand{\Minterms}[1]{\textit{Minterms}(#1)}
\newcommand{\nfaof}[1]{\mathtt{NFA}(#1)}
\newcommand{\confaut}[1]{\mathtt{Conf}(#1)}
\newcommand{\dfaof}[1]{\mathtt{DA}(#1)}
\newcommand{\scaut}[1]{\mathtt{DCSA}(#1)}
\newcommand{\caof}[1]{\mathtt{CA}(#1)}
\newcommand{\csaof}[1]{\mathtt{CSA}(#1)}
\newcommand{\ii}[1]{#1^{\mathtt{a}}}
\newcommand{\csaiiof}[1]{\ii{\mathtt{DCSA}}(#1)}
\newcommand{\pow}{\PowerSet}
\newcommand{\bigo}[1]{O(#1)}
\newcommand{\regex}[1]{\texttt{#1}}
\newcommand{\rexeps}{\epsilon}
\newcommand{\rexa}{\regex{a}}
\newcommand{\rexuni}[2]{#1\regex{|}#2}
\newcommand{\rexconcat}[2]{#1\,#2}
\newcommand{\rexstar}[1]{#1\regex{*}}
\newcommand{\rexcount}[2]{#1\regex{\{}#2\regex{\}}}
\newcommand{\rexbracket}[1]{\regex{(}#1\regex{)}}
\newcommand{\boundof}[1]{\max_{#1}}
\newcommand{\cntof}[1]{\mathit{cnt}_{#1}}
\newcommand{\concat}{\cdot}
\newcommand{\expof}[1]{(\boundof{#1})^{\cntof{#1}}}
\newcommand{\olof}[1]{\mathit{ol}(#1)}
\newcommand{\ol}{\mathit{l}}
\newcommand{\conf}{\alpha}
\newcommand{\merge}{\mathit{merge}}
\newcommand{\cost}{\mathit{cost}}
\newcommand{\encode}{\mathit{enc}}
\newcommand{\decode}{\mathit{dec}}
\newcommand{\plusmark}{\texttt{\string+}}
\newcommand{\marked}[1]{#1^{\plusmark}}
\newcommand\conflict{\mathit{Conflict}}
\newcommand{\trin}{\Delta^{\mathit{in}}_x}
\newcommand{\trout}{\Delta^{\mathit{out}}_x}
\newcommand{\trinner}{\Delta^{\mathit{inner}}_x}
\newcommand{\trinc}{\Delta^{+}_x}
\newcommand{\trrestart}{\Delta^1_x}
\newcommand{\indexby}[2]{{\mathsf{index}_{#2}(#1)}}
\newcommand{\mxof}[1]{\max_#1}
\newcommand{\mnof}[1]{\min_#1}
\newcommand{\markedregex}[1]{\overline{#1}}
\newcommand{\first}[1]{\text{first}({#1})}
\newcommand{\last}[1]{\text{last}({#1})}
\newcommand{\follow}[1]{\text{follow}({#1})}
\newcommand{\emptyctr}{\text{null}}
\newcommand{\subctrs}[1]{\text{counters}(#1)}
\newcommand{\lowerctr}[1]{\text{lower}(#1)}
\newcommand{\upperctr}[1]{\text{upper}(#1)}
\newcommand{\entry}[1]{\first{#1}}
\newcommand{\states}[1]{Q_#1}
\newcommand{\head}[1]{q_#1}
\newcommand{\ass}{\mathbin{:=}}
\newcommand{\finc}[1]{#1 \oplus 1}
\newcommand\fol{\Gamma}
\newcommand\pset{P}
\newcommand\fset{F}
\newcommand\vset{X}
\newcommand\vass{\nu}
\newcommand\tf{\{0,1\}}
\newcommand\formulasof[2]{\mathit{\mathsf{QFF}}_{#1,#2}}
\newcommand\termsof[2]{\mathit{\mathsf{Terms}}_{#1,#2}}
\newcommand\interp{\mathbb{I}}
\newcommand\interpof[1]{#1^\interp}
\newcommand\jinterp{\interp_{\mathtt{set}}}
\newcommand\jinterpof[1]{#1^{\jinterp}}
\newcommand\cinterp{\interp_{\mathtt{cnt}}}
\newcommand\cinterpof[1]{#1^{\cinterp}}
\newcommand\clang{\fol_{\mathtt{cnt}}}
\newcommand\slang{\fol_{\mathtt{set}}}
\newcommand\sharedu{u^{\mathtt{sh}}}
\newcommand\lval{\mathtt{lval}}
\newcommand\qq {{q^\circ}}
\newcommand\ac{\mathit{Act}}
\renewcommand\sc[1]{#1^{\textup{\ensuremath{\scriptstyle{\texttt{\string{\hspace{-1.3pt}\string}}}}}}}
\title{Fast Matching of Regular Patterns with Synchronizing Counting (Technical Report)\thanks{To appear in FoSSaCS'23.}}
\author{
Luk\'{a}\v{s} Hol\'{i}k
\and 
Juraj S\'{i}\v{c}
\and 
Lenka Turo\v nov\'a
\and
Tom\'{a}\v{s} Vojnar
}
\institute{Brno University of Technology, Czech Republic\\
\email{\{holik,sicjuraj,ituronova,vojnar\}@fit.vut.cz}}
\begin{document}
\maketitle

\begin{abstract}
Fast matching of regular expressions with \emph{bounded repetition}, aka
\emph{counting}, such as $\regex{(ab)\{50,100\}}$, i.e., matching linear in the
length of the text and independent of the repetition bounds,
has been an open problem for at least two decades.  
We show that, for a wide class of regular expressions with counting, which we call
\emph{synchronizing}, fast matching is possible.
We empirically show that the class covers nearly all counting used in usual applications of regex matching.
This complexity result is based on an improvement and analysis of a recent matching algorithm that compiles regexes to deterministic counting-set automata (automata with registers that hold sets of numbers).
\end{abstract}

\section{Introduction}
Fast matching of regular expressions with \emph{bounded repetition}, aka
\emph{counting}, has been an open problem for at least two decades (cf., e.g., \cite{Sperberg-McQueen-ExtendedFAWeb}). 
The time complexity of the standard matching algorithms run on a regex such as $\regex{.*a.\{100\}}$ is, at best, dominated by the \emph{length of the text multiplied by the repetition bounds}. This makes matching prone to unacceptable slowdowns since the length of the text as well as the repetition bounds are often large.
In this paper, we provide a theoretical basis for matching of bounded repetition with a much more reliable performance. 
We show that a large and practical class of regexes with counting theoretically
allows {\bf fast matching}---in time {\bf independent of the counter bounds} and {\bf linear in the length of the text}. 


The problem also has a strong practical motivation.
Regex matching is 
used for searching, data
validation, detection of information leakage, parsing, replacing,
data scraping, syntax highlighting, etc.
It is natively supported in most programming languages~\cite{regexwiki}, and ubiquitous 
(used in 30--40\,\% of Java, JavaScript, and Python software \cite{rethinkingregexes,DBLP:conf/sigsoft/WangS18,DBLP:conf/sigsoft/DavisCSL18,DBLP:conf/issta/ChapmanS16}). 
%
Efficiency and predictability of regex matching is important.  
An extreme run-time of matching 
can have serious consequences, such as a~failed input validation against
injection attacks \cite{ADMIN2020}
and events like the outage of Cloudflare
services~\cite{cloudfareoutage}.  
Regexes vulnerabilities are also a doorway for the \emph{ReDoS (regular expression denial of service) attack}, 
in which the attacker crafts a text to overwhelm a matcher 
(as, e.g., in the case of the outage of
StackOverflow~\cite{stackoutage} or the websites exposed due to their use of
the popular Express.js framework~\cite{expressjsoutage}).
%
%
ReDoS has been widely recognized as a common and serious threat \cite{rethinkingregexes,redosimpact,recentjamie},
\mbox{with counting in regexes begin especially dangerous \cite{usenix}.}

\paragraph{Matching algorithms and complexity.}
The potential instability of the pattern matchers is in line with the worst-case complexity of the matching algorithms.
The most widely used approach to matching is backtracking
(used, e.g., in standard matchers of .NET, Python, Perl, PHP, Java, JavaScript, Ruby)
for its simplicity and ease of implementation of advanced features such as
back-references or look-arounds. It is, however, at worst exponential to the length of the matched text and prone to ReDoS. 
Even though this can be improved, for instance by memoization~\cite{recentjamie},
the fastest matchers used in performance critical applications all use automata-based algorithms instead of backtracking.
The basis of these approaches is Thompson's algorithm \cite{thompsonmatching} (also referred to as
\emph{online NFA-simulation}). Together with many optimizations, it is implemented in Intel's Hyperscan \cite{hyperPaper}.  
When combined with caching, it becomes the on-the-fly subset construction of a
DFA, also called \emph{online DFA-simulation}
(implemented in RE2 from Google, GNU grep, SRM, or the standard matcher of Rust~\cite{re2,grep,VSXW19,rust}).
Without counting, the major factor in the worst-case complexity is 
$O(n m^2)$, 
with $n$ being the length of the text and $m$ the size of the number of character occurrences in the regex ($m$ is smaller than size of the regex, the length of string defining it).
We say that the \emph{character cost}, i.e., the cost of extending the text with one character, is $m^2$.
This is the cost of iterating through transitions of an NFA with $O(m)$ states and $O(m^2)$ transitions compiled from the regex by some classical construction \cite{Antimirov96_partderivatives,Glu61,hro97}.  

Extending the syntax of regexes with \emph{bounded quantifiers} (or
\emph{counters}), such as
$\regex{(ab)\{50,100\}}$, increases the character complexity dramatically.  
Given $k$ counters with the maximum bound $\ell$, 
the number of NFA states rises to $O(m\ell^k )$, the number of transitions as well as the character cost to $O((m\ell^k)^2)$. 
%
For instance, the minimal DFA for $\regex{.*a.\{k\}}$ (i.e., $a$ appears $k$
characters from the end) has more than $2^k$ states.
Moreover, note that, since $k$ is written as a~decadic numeral, its value is exponential in the size of the regex. 
This makes matching with already moderately high $k$ prone to significant slowdowns and ReDoS vulnerabilities  with virtually every mainstream matcher (see
\cite{oopsla,usenix}). 
At the same time, repetition bounds easily reach thousands, in extreme tens of millions (in real-life XML \cite{cikm15}). 
Writing a~dangerous counting expression is easy and it is hard to identify.
%
%
Security-critical solutions may be vulnerable to counting-related ReDoS \cite{usenix} despite an extra effort spent in regex design and testing,
hence developers sometimes avoid counting, use workarounds and restrict functionality. 

The problem of matching with bounded repetition 
%
has been addressed from the theoretical as well as from the practical perspective by a number of authors \cite{GGM12,cikm15,Hovland09,KilTuh07,SEJ08,aplas19,KT03,oopsla}.
%
%
From these, the recent work \cite{oopsla} is the only one 
offering fast matching 
for a practically significant class of regexes.
%
%
The algorithm of \cite{oopsla} compiles a regex with counting to a non-deterministic \emph{counting automaton (CA)}, 
an automaton with counters that can be incremented, reset, and compared with a constant. 
%
The crux of the problem is then to convert the CA to a succinct deterministic machine that could be simulated fast in matching.  
The work \cite{oopsla} achieves this by determinizing the CA into a \emph{counting-set automaton (CSA)}, 
an automaton with registers that hold \emph{sets} of numbers. 
%
Its size is independent of the counter bounds and it updates the sets by a handful of operations that are all constant time, regardless the size of the sets. 
%
%
%
However, regexes outside the supported class do appear, the class has no syntactic characterization, and it is hard to recognize (as demonstrated also by an incorrect proposal of a syntactic class in \cite{oopsla} itself, see \cref{app:counterexample}).
%
%
For instance, $\regex{.*a\{5\}}$ or $\regex{(ab)\{5\}}$ are handled, but $\regex{.*(aa)\{5\}}$ or  $\regex{.*(ab)\{5\}}$ are not (the requirement is technical, see \cref{sec:determinization}).
%
%
%
%
%
%
%
%
%

\paragraph{Our contribution.}
In this paper, we

{\bf \begin{enumerate}
\item  generalize the algorithm of \cite{oopsla} to extend the class of handled regexes and
\item derive a useful syntactic characterization of the extended class.
\end{enumerate}}
The derived class is characterized by \emph{flat counting} (counting operators are not nested) where repetitions of each counted expression
$R$ are \emph{synchronizing} (a word
from $R^n$ cannot have a prefix from $R^{n+1}$).
It is the first clearly delimited practical class of regexes with counting that allows fast matching.
It includes the easily recognizable and frequent case 
where every word in $R$ has exactly one occurrence of a \emph{marker},
a letter or a word from a finite set of markers that unambiguously identifies
each occurrence of $R$ (note that even this simple class was not handled by any previous fast algorithms, including \cite{oopsla}). 
%
%
%
In a our experiment with a large set of regexes from various sources,
99.6\,\% of non-trivial flat counting was synchronizing and 99.2\,\% was letter-marked. 


To obtain the results (1) and (2) above, {\bf we first modify the
determinization of \cite{oopsla} to include the entire class of regexes with flat counting}.
In a nutshell, this is achieved by two changes:  
(i) We allow copying and uniting of sets stored in registers, and 
(ii) in the determinization, we index counters of the CA by its states
to handle CA in which nondeterministic runs that reach different states reach different counter values.

These modifications come with the main technical challenge that we solve in this paper: 
copying and uniting sets is not constant-time but linear to the size of the sets.
This would make the character cost linear in the counter bound $\ell$ again. 
%
%
To remove the dependency on the counter bounds, 
we augment the determinization by optimizations that avoid the copying and uniting. 
First, to alleviate the cost of uniting, we store intersections of sets stored in registers in new shared registers, 
so that the intersection does not contribute to the cost of uniting the registers.   
Then, to increase the impact of intersection sharing, 
we synchronize register updates in order to make their intersections larger. 
We then show that if the CSA \emph{does not replicate registers}, i.e, each register can in a transition appear on the right-hand side of only one register assignment, then it never copies registers and the cost of unions can be amortised. 
Finally, {\bf we define the class of regexes with \emph{synchronizing counting} for which the optimized CsA do not replicate counters so their simulation in matching is fast.} 

\paragraph{Related work.}
In the context of regex matching, counting automata were used in several forms under several names (e.g. \cite{aplas19,oopsla,cikm15,GGM12,SEJ08,jha_extended,Sperberg-McQueen-ExtendedFAWeb,Gelade09_countingregex,Hovland-Membership2012}).
Besides \cite{oopsla} discussed above, other solutions to matching of counting regexes \cite{GGM12,cikm15,Hovland09,KilTuh07,SEJ08,aplas19,KT03} handle small classes of regexes 
or do not allow matching linear in the text size and independent of counter bounds. 
%
%
The work \cite{aplas19} proposes a CA-to-CA determinization
producing smaller automata than the explicit CA determinization 
for the limited class of monadic regexes, covered by letter-marked counting,
and the size of their deterministic automata is still dependent on the counter bounds. 
%
%
The work \cite{cikm15} uses a notion of automata with counters of
\cite{GGM12}.
It focuses mostly on deterministic regexes, a class much smaller than regexes with synchronizing counting, and proposes a matching algorithm still dependent on the counter bounds.
The paper \cite{KT03} proposes an  algorithm that takes time at worst quadratic to the length of the text. 
Extended FA (XFA) of \cite{SEJ08,jha_extended} augment NFA with a scratch memory of
bits that can represent counters, and their determinization is exponential in counter bounds already for regexes such as
\regex{.*a.\{$k$\}}.
The \emph{counter-1-unambiguous}
regexes of \cite{Hovland09,Hovland-Membership2012} can be directly compiled into
deterministic automata called FACs, similar to our CA, independent of counter bounds, but the class
is limited, excluding e.g., \regex{.*a.\{$k$\}}.  

\section{Preliminaries}
\label{sec:prelims}
We use $\nat$ to denote the natural numbers including 0. 
For a set $S$, $\PowerSet{S}$ denotes its powerset and $\PowerSetFin{S}$ is the set of all \emph{finite} subsets of $S$.

A \emph{first order language (f.o.l.)}
$\fol = (\fset,\pset)$ consists of a set of \emph{function symbols} $\fset$ and a set of \emph{predicate symbols} $\pset$. 
An \emph{interpretation} $\interp$ of $\fol$ with a \emph{domain} $D_\interp$ assigns a function $\interpof f:D_\interp^n\rightarrow D_\interp$ to each $n$-ary $f\in \fset$ and a function $\interpof p:D_\interp^n\rightarrow \tf$ to each $n$-ary $p\in\pset$.
An \emph{assignment} of a set of variables $\vset$ in $\interp$ is a total function $\vass:\vset\rightarrow D_\interp$.  
The set of \emph{terms} $\termsof \fol \vset$ 
and the set $\formulasof \fol \vset$ of \emph{quantifier free formulae} (boolean combinations of atomic formulae) over $\fol$ and $X$, 
as well as the interpretation of a term, $\interpof t(\vass)$, and a formula, $\interpof \varphi(\vass)$, 
are defined as usual.
%
We denote by $\vass \models_\interp \varphi$ that the formula $\varphi$ is \emph{satisfied} (interpreted as true) by the assignment $\vass$. It is then \emph{satisfiable}.
We drop the sub/superscript $\interp$ when it is clear from the context. 
We write $\varphi[x]$ and $t[x]$ to denote a unary formula $\varphi$ or term $t$, respectively, with the free variable $x$, and we may also abuse this notation to denote the term/formula with its only free variable replaced by $x$.
We write $\interpof t(k)$ and $\interpof\varphi(k)$ to denote the values $\interpof t(\{x\mapsto k\})$ and $\interpof \varphi(\{x\mapsto k\})$.
For a set of formulae 
$\Psi = \set{\psi_1,\ldots,\psi_n}$, 
the set $\Minterms{\Psi}$ consists of all \emph{minterms} of $\Psi$, satisfiable conjunctions
$\phi_1 \land\cdots\land \phi_n$ 
\mbox{where for each $i:1\leq i \leq n$, $\phi_i$ is $\psi_i$ or $\neg\psi_i$.}

We fix a finite \emph{alphabet} $\Sigma$ of \emph{symbols/letters} for the rest of the paper. 
Words are sequences of letters,
with the \emph{empty word} $\epsilon$. The \emph{concatenation} of words $u$ and $v$ is denoted $u\concat v$,  $uv$ for short. A set of words over $\Sigma$ is a \emph{language}, the concatenation of languages is $L\concat L' = \{u\concat v\mid u\in L \land v \in L'\}$, $LL'$ for short.
\emph{Bounded iteration} $x^i$, $i\in \nat$, of a word or a language $x$ is defined by $x^0 = \epsilon$ for a word, $x^0 = \set{\epsilon}$ for a language, and $x^{i+1} = x^i\concat x$. 
Then $x^* = \bigcup_{i\in\nat}x^i$. 
We consider a usual basic syntax of \emph{regular expressions (regexes)}, generated by the grammar 
$
R \ ::= \
\rexeps \ \mid \ 
\rexa \ \mid \ 
\rexbracket R \ \mid \ 
\rexconcat R R \ \mid \ 
\rexuni R R \ \mid \ 
\rexstar R \ \mid \
\rexcount R {m,n}  
$
where $m\in \nat$, $n \in \nat \cup \infty$, $0\leq m$, $0<n$, $m\leq n$,
and $\rexa\in\Sigma$.
We use $\rexcount{R}{m}$ for $\rexcount{R}{m,m}$.
Regexes containing a sub-expression with the \emph{counter} $\rexcount R {m,n}$ or $\rexcount R {m}$ are called \emph{counting regexes} and $m,n$ are \emph{counter bounds}.
We denote by $\boundof R$ the maximum integer occurring in the counter bounds of regex $R$ 
and we denote the number of counters by $\cntof R$.
A~regex with \emph{flat counting} does not have nested counting, that is, in a sub-regex $S\{m,n\}$, $S$ cannot contain counting.  
%
%
The \emph{language} of a regex $R$ is constructed inductively to the structure: $L(\rexeps) = \{\epsilon\}$, $L(\rexa) = \{a\}$ for $a\in \Sigma$, $L(\rexconcat R {R'}) = L(R)\concat L(R')$, $L(\rexstar R) = L(R)^*$, $L(\rexuni R {R'}) = L(R)\cup L(R')$, and
$L(\rexcount R {m,n}) = \bigcup_{m\leq i\leq n} L(R)^i$. 
We understand $|R|$ simply as the length of the defining string, e.g. $|\regex{(ab)\{10\}}| = 8$.
We define $\sharp R$ as the number of character occurrences in $R$, formally, $\sharp a = 1$ for $a\in\Sigma$, $\sharp \epsilon = 0$, $\sharp\!\regex{(}R\regex{)} = \sharp R\regex{\{m,n\}} = \sharp R$, and $\sharp R\cdot S = \sharp R \regex{|} S = \sharp R + \sharp S$.

A \emph{(nondeterministic) automaton (NA)} is a tuple $A = (Q,\Delta,I,F)$ where $Q$ is a set of \emph{states},
$\Delta$ is a set of \emph{transitions} of the form $\move q a r$ with $q,r\in
Q$ and $a\in\Sigma$, $I\subseteq Q$ is the set of \emph{initial states}, and $F\subseteq Q$
is the set of \emph{final states}. 
A run of $A$ over a word $w = a_1 \ldots a_n$ from state $p_0$ to $p_n$, $n \geq 0$
is a sequence of transitions
 $\move {p_0} {a_1} {p_1}$,
 $\move {p_1} {a_2} {p_2}$, $\ldots$,
 $\move {p_{n-1}} {a_n} {p_n}$ from $\Delta$. The empty sequence is a run with $p_0 = p_n$ over $\epsilon$.
The run is \emph{accepting} if $p_0 \in I$ and $p_n\in F$, and the language $L(A)$ of $A$ is the set of all words for which $A$ has an accepting run.
A state $q$ is \emph{reachable} if there is a run from $I$ to it.
The \emph{size} of the NA, $|A|$, is defined as the number of its states plus the number of its transitions.
The automaton is \emph{deterministic (DA)} iff $|I| = 1$ and for every state $q$ and symbol $a$, 
$\Delta$ has at most one transition $\move q a r$.
The \emph{subset construction} transforms the NA to the DA with the same language
$\dfaof A = (\sc Q,\sc{\Delta\!}, \sc I, \sc F)$ 
where $\sc Q \subseteq \pow Q$ and $\sc{\Delta\!}$ are the smallest sets of states and transitions satisfying
$\sc I = \{I\}$, 
$\sc {\Delta\!}$ has for each $a\in\Sigma$ and each $S\in \sc Q$ the transition  
$\move S a {\{s' \mid s\in S \land \move s a {s'}\in\Delta\}}$,
and $\sc F = \{S \in \sc Q\mid S\cap F\neq\emptyset\}$.
%
When the set of states $Q$ is finite, we talk about (deterministic) \mbox{\emph{finite state} automata (NFA, DFA).}\footnote{We do not require finiteness in the basic definition in order to avoid artificial restrictions of the notions of automata with registers/counters/counting sets defined later.}





This paper is concerned with the problem of fast \emph{pattern matching}, basically a membership test: given a regex $R$ and a text $w$, decide whether $w\in L(R)$.
While $w$ may be very long, $R$ is normally small, hence the dependence on $|w|$ is the major factor in the complexity.   
The offline DFA simulation takes time linear in $|w|$. It (1) compiles $R$ into an NFA $\nfaof R$ (2) determinizes it, and (3) follows the DFA run over $w$ (aka \emph{simulates} the DFA on $w$), all in 
time and space $\Theta(2^{|\nfaof R|}+|w|)$. The cost of determinization, exponential in $|\nfaof R|$, is however too impractical.  
Modern matchers such as Grep or RE2 \cite{grep,re2} therefore use the techniques of online DFA simulation, 
where only the part of the DFA used for processing $w$ is constructed. It reduces the complexity to $O(\min(2^{|\nfaof R|}+|w|, |w|\cdot|\nfaof R|))$  
(the first operand of $\min$ is the explicit determinization in case the entire DFA is constructed, plus the cost of DFA-simulation; the second operand is the cost of the online-DFA simulation, coming from that every step may incur construction of a new DFA state and transition in time $O(|\nfaof R|)$).
For counting regexes, the factor $|\nfaof R|$ depends linearly (or more if counting is nested) on $\boundof R$ and thus exponentially on $|R|$. 
This makes counting very problematic in practice \cite{oopsla,usenix,Sperberg-McQueen-ExtendedFAWeb}.
We will present a matching algorithm which is \emph{fast} for a specific class of regexes, meaning that its run-time is still linear in $|w|$ but is independent of $\boundof R$.

\section{Counting Automata}
\label{sec:ca}



We use a rephrased definition of counting automata and counting-set automata of \cite{oopsla}.
We will present them as a special case of a generic notion of automata with registers.
\begin{definition}[Automata with registers]
An \emph{automaton with registers} (RA) operated through an f.o.l. $\fol$ under an interpretation $\interp$ 
is a tuple 
 $A = (\vset, Q, \Delta, I, F)$ where 
 $\vset$ is a set of variables called \emph{registers};
 $Q$ is a finite set of \emph{states}; 
 $\Delta$ is a finite set of \emph{transitions} of the form $\move{q}{a, \varphi, u}{p}$ where
 $p,q\in Q$, $a\in\Sigma$, 
 $u:\vset\rightarrow \termsof \fol \vset$ is an \emph{update},
 and $\varphi \in \formulasof \fol \vset$ is a \emph{guard};  
 $I$ is a set of \emph{initial configurations}, where a \emph{configuration} is a pair of the form $(q, \mem)$ where $q \in Q$ and $\mem:\vset \rightarrow D_\interp$ is a register assignment called a \emph{memory};
 and 
 $F:Q \to \formulasof \fol\vset$ is a \emph{final condition assignment}. 
 

The language of $A$, $L(A)$, is defined as the language of its \emph{configuration automaton} $\confaut A$.  
States of $\confaut A$ are \emph{configurations} of $A$ that are reachable. $I$ is the set of initial states of $\confaut A$. 
It has a transition $\move{(q, \mem)}{a}{(q',\mem')}$ iff $(q, \mem)$ is reachable and $A$ has a transition $\delta = \move{q}{a,\varphi,u}{q'} \in \Delta$ such that $(q',\mem')$ is the \emph{image} of $(q, \mem)$ under $\delta$, denoted $(q', \mem') = \delta(q,\mem)$, meaning that (1) $\delta$ is \emph{enabled} in $(q, \mem)$, $\mem \models \varphi$, 
and (2) $\mem' = u(\mem)$, i.e. $\mem'(x) = u(x)^{\interp}(\mem)$ for each $x \in X$. 
We let $\delta(C) = \{\delta(c)\mid c\in C\}$ for a set of configurations $C$. 
A configuration $(q, \mem)$ is a final if $\mem \models F(q)$.
By \emph{runs of} $A$ we mean runs of $\confaut A$.
 The RA $A$ is \emph{deterministic} if $\confaut A$ is deterministic.
 The size of the RA is $|A| = |Q| + \sum_{\delta\in\Delta}|\delta|$ where $|\delta|$ is the sum of the sizes of the update and the guard.
 \end{definition}

\begin{definition}[Counting automata]
A \emph{counting automaton} (CA) is an automaton with registers, called \emph{counters}, operated through the \emph{counting language} $\clang$ 
that contains the unary increment function, denoted $x+1$, constants $0$ and $1$, and predicates $x > k$ and $x\leq k$, $k \in \nat$,
with the standard interpretation over natural \mbox{numbers, that we denote $\cinterp$.}
\end{definition}

\begin{wrapfigure}[10]{r}{0.5\textwidth}
\vspace{-6mm}
  \centering%
  \begin{tikzpicture}[shorten >=1pt,node distance=1.5cm,on grid,initial text={\scriptsize$x \ass 0$},state/.style={circle, draw, minimum size=0.3cm}]
    \node[state,initial] (q_0)                  {\scriptsize$q_0$};
    \node                (help) [right=of q_0]  {};
    \node[state]         (a_1)  [above=0.6cm of help] {\scriptsize$a_1$};
    \node[state]         (b_1)  [below=0.6cm of help] {\scriptsize$b_1$};
    \node[state,accepting,label={right:\scriptsize$[x \geq 3]$}]         (b_2)  [right=of help] {\scriptsize$b_2$};
    \path[->] (q_0) edge node [above,rotate=22] {\scriptsize$a$} node [below,rotate=22] {\scriptsize$x \ass 1$} (a_1)
              (q_0) edge node [above,rotate=-22] {\scriptsize$b$} node [below,rotate=-22] {\scriptsize$x \ass 1$} (b_1)
              (a_1) edge node [above,rotate=-22] {\scriptsize$b$} node [below,rotate=-22] {\scriptsize$x \ass x$} (b_2)
              (b_1) edge node [above,rotate=22] {\scriptsize$b$} node [below,rotate=22] {\scriptsize$x \ass x$} (b_2)
              (b_2) edge [bend right=50] node [right,pos=0.5] {\vspace{0.2cm}\scriptsize$a; x < 8$} node [right,pos=0.2] {\scriptsize$x \ass x+1$} (a_1)
              (b_2) edge [bend left=50] node [right,pos=0.2] {\scriptsize$b; x < 8$} node [right,pos=0.5] {\vspace{0.2cm}\scriptsize$x \ass x+1$} (b_1);
  \end{tikzpicture}%
  \caption{$\caof{R}$ for $R = \regex{((a|b)b)\{3,8\}}$. The accepting condition of all states is $\bot$ except for $b_2$ whose accepting condition is written in the square brackets.}%
  \label{fig:exCAR}%
\end{wrapfigure}
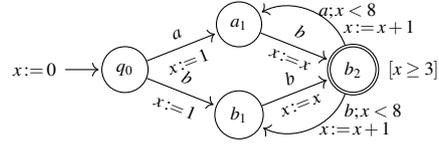
Regexes with counting may be translated to CA by several methods (\cite{oopsla,Sperberg-McQueen-ExtendedFAWeb,Gelade09_countingregex,Hovland-Membership2012}). 
We use a slightly adapted version of \cite{Gelade09_countingregex}---an extension of Glushkov's algorithm \cite{Glu61} to counting.
%
For a regex $R$, it produces a CA $\caof R = (X,Q,\Delta,\{\alpha_0\},F)$. 
\cref{fig:exCAR} shows an example of such CA.
The construction is 
discussed in detail in \cref{sec:caproperties}, here we only overview the important properties needed in Sections~\ref{sec:determinization}-\ref{sec:augmented}: 
\begin{enumerate} 
\item
Every occurrence $S$ of a counted sub-expression $\rexcount T {\mnof S,\mxof S}$ of $R$ corresponds to a unique counter $x_S$ and a substructure $A_S$ of $\caof R$. Outside $A_S$, $x_S$ is inactive (a dead variable) and its value is 0, it is assigned 1 on entering $A_S$, and every iteration through $A_S$ increments the value of $x_S$ while reading a word from $L(T)$. Our minor modification of~\cite{Gelade09_countingregex} is related to the fact that the original assigns $1$ to inactive counters while we need $0$.
\item
$\caof R$ has at most $\sharp R+1$ states, $\cntof R.\sharp R^2$ transitions, $\cntof R$ counters. It has at most $\sharp R^2$ transitions if $R$ is flat. 
\item
$\caof R$ has a single initial configuration $\alpha_0 = (q_0, \smem_0)$ s.t. $\smem_0(x_S) = 0$ for each $x_S \in X$.
\item
Guards and final conditions are conjunctions consisting of at most one conjunct  of the form $\mnof S\leq x_S$ or $\mxof S > x_S$ per counter $x_S \in X$. 
A transition update may assign to $x_S\in X$ only one of the terms $0$, $1$, $x_S$, and $x_S+1$. 
It has no guard on $x_S$ if it is assigned $x_S$, i.e. kept unchanged,
it has the guard $x_S\geq \mnof S$ iff $x_S$ is reset to $0$ or $1$
(a counter cannot be reset before reaching its lower bound),
and it has the guard $x_S < \mxof S$ iff $x_S$ is assigned $x_S+1$
(counter can never exceed its maximum value $\mxof S$).   
Hence, a counter can never exceed $\boundof R$. 
\item
Flatness of $R$ translates to the fact that configurations of $\caof R$ assign a non-zero value to at most one counter. 
This implies that 
$\confaut{\caof R}$ has at most $|Q|.\boundof R$ states and also that
$\caof R$ is \emph{Cartesian},
a property that will be defined in \cref{sec:determinization} and is crucial for correctness of our CA determinization (\cref{theo:complexity} in \cref{sec:augmented}.) 
\end{enumerate}
A~DFA can be obtained by the subset construction in the form $\dfaof{\confaut {\caof R}}$, called \emph{explicit determinization}.
Due to the factor $\boundof R$ in the size of $\confaut{\caof R}$, the explicit determinization is exponential to $\boundof R$ even if $R$ is flat, meaning doubly exponential to $|R|$ ($R$ has $\boundof R$ written as a decadic numeral). 
If $R$ is not flat, 
then 
the factor $\boundof R$ is replaced by $\expof R$.

\section{Counter-subset Construction}
\label{sec:determinization}

In this section, we formulate a modified version 
of determinization of CA from \cite{oopsla} that constructs a machine of a size independent of $\boundof R$.
Our version handles the entire class of Cartesian CA (defined below) and in turn also all regexes with flat counting.

The main idea of the determinization remains the same as in \cite{oopsla}. The standard subset construction is augmented with registers, we call them \emph{counting sets}, that can store sets of counter values that would be generated by non-deterministic runs of the CA. 
The automata with counting-sets as registers are called \emph{counting-set automata}.
Our first modification of \cite{oopsla} is indexing of counters by states. 
%
In intuitively, 
this allows to handle cases such as $\regex{a*(ba|ab)\{5\}}$, where, after reading the first $ab$, the counter is either incremented or not ($b$ is the first letter of the counted sub-expression or not). This would violate the uniformity property of CA necessary in \cite{oopsla}---the set of values generated by the non-deterministic CA runs must be the same for every CA state. 
In our modified version, values at distinct states are stored separately in registers indexed by those states and may differ. 
%
%
%
Then, in order to handle the indexed counters, we have to introduce 
a general assignment of counters, 
allowing to assign the \emph{union} of other counters.\footnote{\cite{oopsla} could assign to a counter $x$ only a constant or function of the current value of $x$.} 
Intuitively, when a run non-deterministically branches into two states, each branch needs to continue with its \emph{own copy} of the set stored in a counter indexed by the state of the branch.
The union of sets is used when the branches join again. 
This brings a technical challenge that we solve in this work: how to simulate the counting-set automata fast when the set union and copy are used? The solution is presented in Sections~\ref{sec:matching} and \ref{sec:augmented}.

\begin{definition}[Counting-set automata]
A \emph{counting-set automaton (CSA)} is an automaton with registers operated through the \emph{counting-set language} $\slang$ under the \emph{num\-ber-set interpretation} $\sc\cinterp$ 
where the language $\slang$
extends the counting language $\clang$  
with the constant $\emptyset$, binary union $\cup$, and set-filter functions $\filterof{p}$ where
$p$ is a predicate symbol of $\clang$. 
For simplicity, we restrict terms assigned to counters by transition updates to the form  $t = t_1 \cup \cdots \cup t_n$ where each $t_i$  is either (a)~a~term of $\clang$ or $\emptyset$, (b) of the form $\filter_{{p}(t')}$ where $t'$ is a term of $\clang$.
Each $t_i$ is called an \emph{$r$-term} of $t$.


The domain of $\jinterp$ is \emph{sets of natural numbers}, $\pow{\nat}$. 
The interpretation of the predicates and functions of $\clang$ under $\jinterp$ is derived from the base number interpretation of the same predicates and functions:
A function returns the image of the set in the argument under the base semantics,
$\jinterpof f(S) = \set{\cinterpof f(n) \mid n \in S}$.
A set satisfies a predicate if some of its elements satisfy the base semantics of that predicate,  
$\jinterpof p(S) \iff \exists e \in S : \cinterpof p(e)$.
Filters then filter out values that do not satisfy the base semantics of their predicate,   
$\jinterpof\filter_{\!\!p}(S) = \set{e \in  S\mid \cinterpof p(e)}$.
Finally, $\emptyset$ is interpreted as the empty set and $\cup$ as the union of sets.
We denote memories of the CSA by $\smem$ to distinguish them from memories of CA.
We write DCSA to abbreviate deterministic CSA. 
\end{definition}
%

Less formally, registers of CSA hold sets of numbers and are manipulated by the increment $x+1$ of all values, 
assignment of constant sets $\{0\}$, $\{1\}$, and $\emptyset$, denoted by $0$, $1$, and $\emptyset$, filtering out values smaller or larger than a constant, denoted $\filter_{x\leq k}(x)$ and $\filter_{x < k}(x)$, and testing on a presence of a value $x$ satisfying $x\leq k$ or $x < k$, $k\in\nat$.


We will present an algorithm that determinizes a CA $A = (\vset,Q,\Delta,I,F)$, fixed for the rest of the section, into a DCSA $\scaut A = (\sc \vset, \sc Q,\sc{\Delta\!},\sc I,\sc F)$.
We assume that guards of transitions in $\Delta$ and final conditions are of the form $\bigwedge_{x\in Y} p_x[x],Y\subseteq X$, i.e. conjunctions with a at most a single atomic predicate per counter.
This is satisfied by all $\caof R$, for any regex $R$ (see the list of properties of $\caof R$ in \cref{sec:ca}).\footnote{Every CA can be transformed to this form by transforming the formulae to DNF and creating clones of transitions/states for individual clauses.} 

Runs of $\scaut A$ will \emph{encode} runs of $\dfaof {\confaut A}$ obtained from the explicit determinization of $A$.
Recall that the states $\dfaof {\confaut A}$ are sets of configurations of $A$, pairs $(q,\mem)$ of a state and a counter assignment. $\scaut A$ will represent the sets of counter values within a DA state as run-time values of its registers. 

Particularly, for every state $q$ and a counter $x$ of the CA,
$\scaut A$ has a register $x_q$ in which it remembers, after reading a word $w$, the set of all values that $x$ reaches in runs of the base CA on $w$ ending in $q$. Hence, we have $\sc \vset = \{x_q \mid x\in \vset \land q\in Q\}$
\begin{definition}[Encoding of sets of CA configurations]
A state $S = \{(q_i,\mem_i)\}_{i=1}^n$ of $\dfaof {\confaut A}$ is encoded as the $\scaut{A}$ configuration 
$\encode(S) = (\{q_i\}_{i=1}^n,\smem)$ where $\smem(x_q) = \{\mem_i(x) \mid q_i = q\}_{i=1}^n$.
\end{definition}

Since a set of assignments appearing with the state $q$ is broken down to sets of values of the individual counters, it disregards relations between values of different counters. 
For instance, in the DA state $S_1 = \{(q, \{x \mapsto 0, y \mapsto 0\}), (q, \{x \mapsto 1, y \mapsto 1\})\}$, the values of $x$ and $y$ are either both 0 or both 1, but 
$\encode(S_1) = (q, \{x_q \mapsto \{0,1\}, y_q \mapsto \{0,1\}\})$ does not retain this information.
It is identical to the encoding of another DA state $S_2 = \{(q, \{x \mapsto 1, y \mapsto 0\}), (q, \{x \mapsto 0, y \mapsto 1\})\}$.
This is the same loss of information as in the so-called Cartesian abstraction.
The encoding is hence precise and unambiguous only when we assume that inside the states of  $\dfaof A$, the relations between counters are always unrestricted---there is no information to be lost.
We then call the CA \emph{Cartesian}, as defined below. 
The encoding function is then unambiguous, and we call the inverse function \emph{decoding}, denoted $\decode$.  

\begin{definition}[Cartesian CA]
Assuming the set of counters of $A$ is $X = \{x_i\}_{i=1}^m$,
then a set $C$ of configurations of $A$ is \emph{Cartesian}
iff, for every state $q$ of $A$, 
there exist sets $N_1,\ldots,N_m\subseteq \nat$ such that 
%
%
$(q,\{x_i\mapsto n_i\}_{i=1}^m) \in C$
iff
$(n_1,\ldots, n_m) \in N_1\times\cdots\times N_m$.
The CA $A$ is \emph{Cartesian} iff all states of $\dfaof{\confaut A}$ are Cartesian.
\end{definition}
For instance, the DA states $S_1$ and $S_2$ above are not Cartesian, while $S_1\cup S_2$ is.

Similarly as the regex to CA construction of \cite{oopsla}, 
our regex to CA construction discussed in \cref{sec:ca} (App. \ref{sec:caproperties}) returns a Cartesian CA when called on a flat regex.

%

\paragraph{Subset construction for Cartesian CA.}
The algorithm below is a generalization of the subset construction. 
%
Let us denote by $\indexby t q$ the term that arises from $t$ by replacing every variable $x\in \vset$ by $x_q$, analogously $\indexby \varphi q$ for formulas. 
%
We have $\sc Q \subseteq \pow Q$,
the initial configuration $\sc I = \{\encode(I)\}$,
and the final conditions assign to $R\in \sc Q$ the disjunction of the final conditions of its elements, $\sc F(R) = \bigvee_{q \in R} \indexby{F(q)} q$.

We will construct $\scaut A$ which is deterministic and its runs encode the runs of DA $\dfaof{\confaut A}$. $\confaut{\scaut A}$ will be isomorphic to $\dfaof{\confaut A}$.  
For that, we need for each transition $\delta$ of $\dfaof{\confaut A}$ one unique transition of $\scaut A$ over the same letter enabled in the encoding 
of the source of $\delta$ and generating the encoding 
of the target of $\delta$. 
In other words, we need 
 for each transition $\move {\decode(R,\smem)} a {\decode(R',\smem')}$ of $\dfaof{\confaut A}$ one unique transition $\delta' = \move  {R} {a,\varphi,u} {R'} \in \sc{\Delta\!}$ with $(R',\smem') = \delta'(R,\smem)$. 
That transition $\delta'$ will be built by summarizing the effect of all base CA $a$-transitions enabled in the CA configurations of $\decode(R,\smem)$. 

To construct the transition $\delta'$, we first translate each base transition $\delta = \move q {a,\varphi_\delta,u_\delta} r\in\Delta$ into its set-version $\sc \delta$,
%
supposed to transform an encoding of a (Cartesian) set $C$ of configurations, $\encode(C)$, into the encoding of the set of their images under $\delta$, $\encode(\delta(C))$, and enabled if $\delta$ is enabled for at least one configuration in $C$.
To that end, assuming $\varphi_\delta = \bigwedge_{x\in \vset}p_x[x]$, we
(1) construct the update $u_\delta^{\!\filter}$ from $u_\delta$ by substituting in every $u_\delta(x), x\in \vset$ variables $y\in \vset$ by their filtered versions $\filterof {p_y}(y)$,
(2) add indices to registers that mark the current state, resulting in the transition
$\sc \delta = \move q {a,\sc \varphi_\delta,\sc u_\delta}  {r}$ where $\sc \varphi_\delta = \indexby{\varphi_\delta} q$ and 
$\sc u_\delta$ assigns to every $x_r,x\in X$ the term $ \indexby {u_\delta^{\!\filter}(x)} q$.
%
%

The states $\sc Q$ and the transitions $\sc {\Delta\!}$ are then constructed as the smallest sets satisfying that $\encode(I)\in \sc Q$ 
and
every $R\in \sc Q$ has for every $a\in\Sigma$ the outgoing transitions constructed as follows.
Let $\{\move {q_j} {a,\varphi_j,u_j} {r_j}\}_{j\in J}$ for some index set $J$ be the set of \emph{constituent $a$-transitions} for $R$, all $a$-transitions $\sc \delta$ where $\delta\in\Delta$ originates in $R$.
To achieve determinism, $\sc {\Delta\!}$ has the transition $\move R {a,\psi,u} {R'}$ for every minterm
$\psi \in \Minterms{\{\varphi_j\}_{j\in J}}$. 
The update $u$ and target $R'$ are constructed 
from the set $\{\move {q_j} {a,\varphi_j,u_j} {r_j}\}_{j\in K}$, $K\subseteq J$, of constituent transitions with guards $\varphi_j$ compatible with the minterm $\psi$, i.e., with satisfiable $\psi\land\varphi_j$. $R'$ is the set of their target states, $R' = \{r_j\}_{j\in K}$,
and
$u(x)$ unites all their update terms $u_j(x)$,
i.e. $u(x) = \bigcup_{j\in K} u_j(x)$, for each $x\in \sc\vset$.

\begin{example}
\label{example:determ}
When showing examples of transition updates, we write $x\ass t$ to denote that $u(x)=t$ and we omit the assignments $x\ass\emptyset$ in CSA. 

Let $R = \{p,q\}$ and let the $a$-transitions originating at $R$ be 
$\move q {a,\top,x\ass x} s$, \\
$\move p {a,x<n,x\ass x+1} r$, and    
$\move p {a,x\geq m,x \ass 1} s$.
They induce three constituent transitions for $R$ and $a$, 
$\move q {a,\top,x_{s} \ass x_q} s$,
$\move p {a,x_p<n,x_{r} \ass \filterof {x<n}(x_p)+1} r$, and
$\move p {a,x_p\geq m,x_{s} \ass 1} s$.             
A transition $\move{R}{a,\psi,u'}{R'}$ is constructed for each of the following minterms $\psi$:
$x_p{<}n \land x_p{\geq} m$,
$\neg x_p{<}n \land x_p{\geq} m$,
$x_p{<}n \land \neg x_p{\geq} m$,
$\neg x_p{<}n \land \neg x_p{\geq} m$.
For the first one, all three constituent transitions are compatible and so the update $u'$ is 
$x_r \ass \filterof {x<n}(x_p)+1; x_s \ass x_q \cup 1$ (update of $x_r$ is taken from the first constituent transitions leading to $r$, update of $x_s$ is the union of the updates of the second two transitions leading to $s$) and the target state is $R' = \{r,s\}$.
%
\qed
\end{example}

$\scaut{A}$ is deterministic since it has a single initial configuration and the guards of transitions originating in the same state are minterms.  
The size of $\scaut A$ obviously depends only on the size of $A$ and not on the interpretation of the language. Especially, when $A$ is $\caof R$ for some regex $R$, the size does not depend on $\boundof R$.  
The theorem below is proved in \cref{app:alg1correctnessproof}.%
\footnote{
It may be interesting to note that, as follows from our formulation of the determinization, the construction is independent of the particular f.o.l. used to manipulate registers and of its interpretation.  
The determinization could be applied to any kind of automata that fits the definition of automata with registers. 
The numbers could be manipulated by other functions and tests, natural numbers could be replaced by reals etc.
The counting-set automata are themselves an instance of automata with registers.
One could also think about push-down automata or, with small modifications, variants of data-word automata with registers. 
}

\begin{restatable}{theorem}{correctness}
\label{theo:alg1corectness}
$\scaut A$ is deterministic,  $|\scaut A|\in O(2^{|A|})$, and if $A$ is Cartesian, then $L(A) = L(\scaut A)$. 
\end{restatable}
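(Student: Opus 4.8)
The claim has three parts. Determinism and the size bound are essentially immediate from the construction, so I would dispose of them first and then concentrate on the language-equality part, which is the real content.

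\emph{Determinism and size.} The automaton $\scaut A$ has the single initial configuration $\sc I = \{\encode(I)\}$, and for a fixed state $R\in\sc Q$ and letter $a$, the guards of the outgoing transitions are exactly the minterms of $\{\varphi_j\}_{j\in J}$; by definition of minterms these are pairwise mutually exclusive and exhaustive, so at most one is satisfied by any memory. Hence $\confaut{\scaut A}$ is deterministic. For the size bound, note $\sc Q\subseteq\pow Q$, giving at most $2^{|Q|}$ states; for each state and each letter we add one transition per minterm, and there are at most $2^{|J|}\le 2^{|\Delta|}$ minterms, each of size polynomial in $|A|$ (the update unites at most $|\Delta|$ $r$-terms, each a bounded-size term). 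Summing gives $|\scaut A|\in O(2^{|A|})$.

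\emph{Language equality assuming $A$ Cartesian.} The plan is to prove the stronger structural statement that $\encode$ is an isomorphism between $\dfaof{\confaut A}$ and $\confaut{\scaut A}$ restricted to reachable states, which immediately yields $L(A)=L(\scaut A)$ since a DA and an NA accept the same language via the subset construction. I would argue by induction on the length of the word read, establishing the invariant: the set of reachable states of $\confaut{\scaut A}$ after reading $w$ is exactly $\{\encode(S)\mid S\text{ reachable in }\dfaof{\confaut A}\text{ after }w\}$, and moreover $\encode$ is well-defined and injective on all these reachable DA states (this is where Cartesianness is used). The base case is $\sc I=\{\encode(I)\}$. For the inductive step, fix a reachable DA state $S$ and a letter $a$, let $S'$ be its $a$-successor, and let $R=\encode(S)$ with memory $\smem$. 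I must show the unique $a$-transition of $\scaut A$ enabled in $(R,\smem)$ leads to $\encode(S')$. The key local lemma is that for a single base transition $\delta=\move q{a,\varphi_\delta,u_\delta}r$ and a Cartesian set $C$ of configurations with $q$-component $N_1\times\cdots\times N_m$, the set-transition $\sc\delta$ applied to $\encode(C)$ computes $\encode(\delta(C))$: the filtered update $u_\delta^{\!\filter}$ first intersects each $N_i$ with the solution set of $p_{x_i}$ (so that only configurations where $\delta$ is enabled survive), then applies the base update componentwise, and since the update of each $x_S$ is one of $0,1,x_S,x_S+1$ — a function of a single counter — the image of a Cartesian set stays Cartesian and is computed coordinatewise, matching $\jinterp$. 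Then, because $\scaut A$'s transition for minterm $\psi$ collects exactly the constituent transitions whose guard is $\psi$-compatible and unites their update terms, and because membership of a configuration in $S'$ is witnessed by exactly one such base transition firing on a compatible source, the union of the $\encode(\delta(C))$ over the selected $\delta$ equals $\encode(S')$; one checks the selected set of targets $\{r_j\}_{j\in K}$ is exactly the state-component of $S'$, and that each counter register gets the correct union. Finally I verify the final conditions match: $(R,\smem)\models\sc F(R)$ iff some $q\in R$ has $\indexby{F(q)}q$ true iff some configuration $(q,\mem)\in\decode(R,\smem)$ satisfies $F(q)$ iff $\decode(R,\smem)$ is final in $\dfaof{\confaut A}$.

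\emph{Main obstacle.} The delicate point is the interaction between the Cartesian hypothesis and the minterm-splitting: one must check that splitting a DA state's $a$-successors according to minterms of $\{\varphi_j\}$ and then taking componentwise unions does not lose or spuriously add counter combinations — i.e. that the "Cartesian abstraction" described in the text is in fact exact here, which rests precisely on each guard being a conjunction with at most one predicate per counter and each update assigning to $x_S$ a term depending only on $x_S$. I would isolate this as a single lemma ("set-transitions preserve Cartesian encodings") and carefully track how the filter $\filterof{p_y}(y)$ in $u_\delta^{\!\filter}$ reproduces the effect of the guard $\varphi_\delta$ on a Cartesian set; the rest of the induction is bookkeeping over minterm compatibility and unions.
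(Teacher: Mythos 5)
Your proposal is correct and takes essentially the same route as the paper: determinism and the size bound are read off the construction, and language equality is obtained by showing that $\encode$ is an isomorphism between $\dfaof{\confaut A}$ and $\confaut{\scaut A}$, with the Cartesian hypothesis used exactly where you place it (injectivity of $\encode$, and producing a witnessing CA configuration for each value generated by a filtered update and for the final-condition check). The only cosmetic difference is that the paper does not need your appeal to updates depending on a single counter (a property of $\caof R$ rather than of general Cartesian CA): it simply builds the witnessing assignment from the values of all counters appearing in the update term and invokes Cartesianness of the source DA state, which the hypotheses of your key lemma already supply.
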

Since for regexes with flat counting, our regex to CA algorithm always returns a Cartesian CA, we can transform them into DCSA.

\section{Fast Simulation of Counting-set Automata}

\label{sec:matching}


In this section, we discuss how a run of a DCSA on a given word can be \emph{simulated} efficiently to achieve fast matching.
Let us fix a word $w = a_1\cdots a_n$ together with the DCSA 
$A = (X, Q, \Delta, \{\alpha_0\}, F)$. 
We wish to construct the run of the DCSA on $w$ and test whether the reached configuration is accepting.
We aim at a running time linear to $|w|$ and independent of the sizes of the sets stored in $A$'s registers at run-time.  

We will assume that the initial configuration $\alpha_0$ of $A$ assigns to every register a singleton or the empty set. The assumption is satisfied by CSA constructed from $\caof R$, $R$ being any regex, by the algorithms of \cref{sec:determinization} and also \cref{sec:augmented}.\footnote{This is a technical assumption important in order for unions of the initial sets not to influence the overall complexity of the simulation.}

Technically, the simulation maintains a configuration $\conf = (q,\smem)$, initialized with $\alpha_0$,
and for every $i$ from $1$ to $n$, 
it constructs the transition $\move {\conf} {a_i} {\conf'}$ of $\confaut{A}$ and replaces $\conf$ by the successor configuration $\conf' = (q',\smem')$. 
We use the key ingredient of fast simulation from \cite{oopsla}, the \emph{offset-list data structure} for sets of numbers with constant time addition of 0/1, comparison of the maximum to a constant, reset, and increment of all values. 
The problem is that the newly added union and copy of sets are still linear to the size of the sets, and hence linear to the maximum counter bounds.
%
We show how, under a condition introduced below, set copy can be avoided entirely and the cost of union can be amortized by the cost of incrementing the sets.
This will again allow a CSA-simulation in time independent of $\boundof A$ and falling into $\bigo{|A|\cdot|w|}$.

First, we define a property of CSA sufficient for fast simulation---that the updates on its transitions do not \emph{replicate counters}.

 
\begin{definition}[Counter replication]
We say that a CSA \emph{replicates counters} if for some transition $\move{q}{a,\phi,u}{r}$, some counter appears in the image of $u$ twice,
that is, it appears in two r-terms of some $u(x)$ or it appears in $u(x)$ as well as in $u(y)$ for $x\neq y$. 
A \emph{non-replicating} CSA does not replicate counters. 
\end{definition}
For instance, $\{x\mapsto x;y\mapsto x+1\}$ and $\{x\mapsto x\cup x+1,y\mapsto y\}$ are updates where $x$ is replicated, $\{x\mapsto x+1,y\mapsto y\}$ is not a replicating update.


\paragraph{Offset-list data structure.}
The \emph{offset-list} data structure of \cite{oopsla} allows constant time implementation of the set operations of increment of all elements, reset to $\emptyset$ or $\{0\}$ or $\{1\}$, addition of $0$ or $1$, and comparison of the maximum with a constant. 

It assigns to every counter $x\in X$ a pointer $\olof x$ to an \emph{offset-list pair} $(o_x,l_x)$ with the \emph{offset} $o_x\in \nat$ and a sorted list $l_x = m_1,\ldots,m_k$ of integers. The data structure implementing the list needs constant access to the first and the last element, forward and backward iteration of a pointer, and insertion/deletion at/before a pointer to an element. This is satisfied for instance by a doubly-linked list that maintains pointers to the first and the last element. The offset-list pair represents the set $\smem(x) = \{m_1+o_x,\ldots, m_k+o_x\}$. Union of two such sets is still linear in their size, but we will show that if the CSA does not replicate counters, the cost of set unions can be amortized by the cost of increments. 

\paragraph{Finding the CSA transition and evaluating the update.}
The first step of computing $\conf'$ from $\conf$ is finding the transition $\move{q}{a_i,\phi,u}{q'}\in \Delta$, the only $a_i$-transition from $q$ that is enabled, i.e. where $\smem\models\phi$. 
The simplest algorithm iterates through the transitions of $\Delta$ and, for each of them, tests whether $\smem$ satisfies its guard. 
The cost of evaluating an atomic counter predicate $p$, i.e., deciding whether $\smem\models p$, is constant: since the lists $l_x$ are sorted, we only need to access the first or the last element and the offset to decide $x < n$ or $x \geq n$, respectively.
With that, the cost of evaluating $\phi$ is linear to the size of $\phi$. 
The cost of the iteration through the transitions of $\Delta$ is then  linear in the sum of their sizes, which is within $\bigo{|A|}$.\footnote{In the later sections, we will show better complexity bounds of finding the transition under additional assumptions on the CSA guards that are produced by our constructions.}

Having found $\move{q}{a_i,\phi,u}{q'}$, we evaluate its update to compute $\smem'$ and compute $\conf'$ as $(q',\smem')$. We will explain the algorithm and argue that the amortized cost of computing $\smem'$ is in $\bigo {|X|}$. 
The update is evaluated by, for each $x\in X$, evaluating all r-terms in $u(x)$, uniting the results, and assigning the union to $\olof x$.

First, we argue that evaluating an r-term $t$ of $u(x)$, i.e. computing $t(\smem)$, is amortized constant time. 
Since the counters are non-replicating, we can compute the value of each r-term $t[y]$ in situ. That is, we modify the offset-list pair $(o_y,l_y)$ and return the pointer $\olof y$. 
%
%
The original value of $y$ can be discarded after evaluating $t[y]$ since $y$ does not appear in any other r-term.  
There are 5 cases:
(1)
If $t$ is $0$ or $1$, then we return a pointer to a fresh offset-list pair with the offset $0$ and the list containing only $0$ or $1$, respectively. 
This is done in constant time. 

(2)
If $t$ is $y\in Y$, then we return $\olof y$. 

(3)
If $t$ is $y+1$, then $o_y$ is incremented by one.  
This constant time implementation of the increment is the reason for pairing the lists with the offsets. 

(4)
If $t$ is $\filterof {p} [y]$, then $l_y$ is filtered by the atomic predicate $p$. 
Filtering with the predicate $x \geq n$ uses the invariant of sortedness of $l_y$.
It is done by iterating the following steps: i)~test whether the list head is smaller than $n - o_y$ and ii) if yes, remove the head, if not, terminate the iteration. 
Every iteration is constant time: The cost of the iterations which remove an element is amortized by the cost of additions of the element to the list. 
What remains is only the constant cost of the last iteration which detects an element greater or equal to $n - o_y$, or that the list is empty. 
Filtering with $x < n$ is analogous (the iterations test and remove the last element instead of the head).

(5)
If $t$ is $\filterof p (y) + 1$, then the construction for the constant increment is applied after the constant filter discussed above.

Next, we argue that computing the union of values of the r-terms in $u(x)$ may be amortized by the cost of evaluating the increment terms. 
Let $\ol_1,\ldots,\ol_n$ be the offset-list representations of the values of the terms in $u(x)$ computed by the algorithm above. 
The offset-list representation of their union is computed by a sequence of merging, as
$
\merge(\ol_1,\merge(\ol_2,\dots\merge(\ol_{n-1},\ol_n)\ldots)).
$
Particularly, given two pointers to offset-lists $\ol,\ol'$, $\merge(\ol,\ol')$ implements their union: it chooses the offset-list that represents a set with the larger maximum, assume that it is $\ol$, and inserts the elements represented by the other list, $\ol'$, to it. We say that $\ol'$ \emph{is merged into} $\ol$.
This is done by the standard sorted-list merging in time 
$\bigo{|\ol'|}$ where $|\ol'|$ is the length of $\ol'$. 
Since $\ol'$ is without duplicities and with minimum 0,   
$\bigo{|\ol'|} \subseteq \bigo{\max(\ol')}$ where $\max(\ol')$ is the maximal element.



The $\bigo{\max(\ol')}$ cost is amortized by the cost of evaluating increments. The offset-list pair at $\ol'$ has seen at least $\max(\ol')-1$ increments since the only elements inserted into it are $0$, $1$, or, during merge, elements from other sets smaller than $\max(\ol')$. 
These increments of $\ol'$ are the budget used to pay for the mergeing of $\ol'$ into $\ol$. 
After the merge, the offset-list pair of $\ol'$ is discarded (as the CSA is non-replicating, it is no longer needed) hence 
the budget is used only once. 
%
Last, the assignment of the union to $c$ is done by a constant time assignment of a pointer to the \mbox{offset-list returned by the merge.}

\paragraph{Overall complexity of the simulation.}
Let us define the cost $\cost(x)$ of manipulations with the counter $x\in X$ during one step of the simulation as the sum of the costs of:
(1) evaluating all r-terms containing $c$,
(2) merging their offset-list into other ones, 
(3) creating offset-lists for terms $0$ or $1$ in $u(x)$ and merging them into other offset-lists,  
(4) the assignment of the result of $u(x)$ to $x$.  
The cost of processing a single letter $a_i$ is then the sum $\sum_{x\in X} \cost(x)$ and $|w|\cdot\sum_{x\in X} \cost(x)$ is the cost of the entire simulation.
Since the CSA is non-replicating and evaluating a single r-term is amortized constant time, the cost of (1) is in amortized constant time.
The cost of (2) is amortized by increments from step (1).
The creation and insertion of singletons in (3), at most two in $u(x)$, is constant time.
The pointer assignment in (4) is constant time.
The $\cost(x)$ is therefore amortized constant time, the amortized time of evaluating the update $u$ is in $\bigo{|X|}$, and the cost of the updates through the simulation is in $\bigo{|X|\cdot|w|}$. 
%
The cost of choosing the transitions, by evaluating their guards, is in $\bigo{|A|\cdot|w|}$ by the above analysis. The cost of testing the accepting condition at the reached configuration is in $\bigo{|A|}$ by an analogous argument. 

\begin{theorem}
If $A$ is non-replicating, then its simulation on $w$ takes $\bigo{|A|\cdot|w|}$ time.
\end{theorem}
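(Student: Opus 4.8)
The plan is to aggregate the per-letter cost analysis already carried out in the text into a global amortized bound, making the amortization argument precise with an explicit potential function. First I would fix the simulation as described: maintain $\conf = (q,\smem)$ starting from $\alpha_0$, and for each $i = 1,\ldots,n$ perform three tasks---(T1) locate the unique enabled $a_i$-transition $\move{q}{a_i,\phi,u}{q'}$ by iterating over $\Delta$ and evaluating guards, (T2) evaluate the update $u$ to obtain $\smem'$, and (T3) at the very end, evaluate the accepting condition $F(q')$ on $\smem'$. For (T1) and (T3) I would invoke the already-established fact that evaluating an atomic counter predicate on an offset-list is constant time (sorted list, so only the head or tail and the offset are inspected), hence evaluating a guard or final condition $\phi$ costs $O(|\phi|)$, and iterating through all transitions of $\Delta$ costs $O(\sum_{\delta\in\Delta}|\delta|) \subseteq O(|A|)$; summed over the $n$ letters this gives $O(|A|\cdot|w|)$, and (T3) is a one-time $O(|A|)$ which is absorbed.

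The heart of the argument is (T2). I would introduce the potential $\Phi = \sum_{x\in X} \max(\smem(x))$ (with $\max(\emptyset)=0$, say), which is nonnegative and starts at $O(|X|)$ since $\alpha_0$ assigns singletons or $\emptyset$ to every register. For a single step, I would bound the \emph{actual} work of evaluating $u$ against the \emph{change} in $\Phi$: as shown in the excerpt, each r-term of $u(x)$ is evaluated in situ in amortized constant time (the five cases: constant-set, identity pointer, offset increment, sorted filter whose deletions are charged to prior insertions, and filter-then-increment); creating the at most two singleton offset-lists for $0$/$1$ r-terms is constant; and the only super-constant operation is the cascade of $\merge$ calls, where merging $\ol'$ into $\ol$ costs $O(\max(\ol'))$ but consumes at least $\max(\ol')-1$ units of accumulated increment-budget stored in $\ol'$. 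The crucial point---this is where non-replication is essential---is that because no counter appears twice across the image of $u$, each offset-list that exists before the step is consumed (read, incremented, filtered, or merged away) at most once, so its budget is spent exactly once and never double-counted; after a merge the donor offset-list is discarded. Thus the total real cost of (T2) in one step is at most $c|X| + (\text{budget consumed})$, and the budget consumed is bounded by the drop in $\Phi$ over that step plus the new budget injected, which is $O(|X|)$ per step (each step adds at most $O(|X|)$ via singletons and increments). Telescoping over $i=1,\ldots,n$: $\sum_i \text{cost}_i \le \sum_i c|X| + \Phi_0 - \Phi_n + \sum_i O(|X|) \le O(|X|\cdot|w|) + O(|X|) = O(|X|\cdot|w|)$.

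Finally I would combine the three contributions: $O(|A|\cdot|w|)$ for finding transitions, $O(|X|\cdot|w|)\subseteq O(|A|\cdot|w|)$ for evaluating updates, and $O(|A|)$ for the final acceptance test, yielding the claimed $O(|A|\cdot|w|)$ total. The main obstacle I anticipate is making the amortization airtight: one must argue carefully that, within a single step, the various in-situ modifications do not interfere---that is, that evaluating $u(x)$ and $u(y)$ truly operate on disjoint offset-list pairs so a budget unit is never charged twice and an in-situ mutation never corrupts a value still needed elsewhere. This is exactly what the non-replication hypothesis buys, and I would state it as the key lemma: \emph{for a non-replicating CSA, each register's offset-list pair present at the start of a step is referenced by at most one r-term across all of $u$, so the whole update can be evaluated with each such pair touched at most once}. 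Everything else is the bookkeeping of summing constant-and-amortized-constant per-counter costs over $|X|$ counters and $|w|$ letters.
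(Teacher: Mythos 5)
Your proposal is correct and follows essentially the same route as the paper: the same decomposition into guard evaluation over $\Delta$ in $O(|A|)$ per letter, in-situ amortized-constant evaluation of r-terms on offset-lists, merge costs charged to previously performed increments, with non-replication guaranteeing each offset-list pair is consumed only once, and a final $O(|A|)$ acceptance test. Your explicit potential function $\Phi=\sum_{x\in X}\max(\smem(x))$ is just a formalized restatement of the paper's direct charging argument, not a different approach.
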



\section{Augmented Determinization}
\label{sec:augmented}

In this section, 
we augment the subset construction from \cref{sec:determinization} with optimizations that prevent counter replication and hence extend the class of regexes that can be matched fast by simulation of the CSA.
It optimizations are tailored 
to CA with the special properties of $\caof R$, for a regex $R$, listed in \cref{sec:ca}.
%

%
%

\begin{figure}[t]
  \centering
  \begin{subfigure}[b]{0.33\textwidth}
    \centering
    \hspace{-0.5cm}%
    \begin{tikzpicture}[shorten >=1pt,node distance=1.8cm,on grid,initial text={},state/.style={circle, draw, minimum size=0.3cm}]
      \node[state] (q)              {\scriptsize$q$};
      \node[state] (r) [left=of q]  {\scriptsize$r$};
      \node[state] (s) [right=of q] {\scriptsize$s$};
      \node (h) at (-2.15,0.35) {\footnotesize\bf a)};
      \path[->] (q.165) edge node [above] {\scriptsize$a; x \ass x+1$} (r.15)
                (q.15) edge node [above] {\scriptsize$a; x \ass x+1$} (s.165)
                (s.-165) edge node [below] {\scriptsize$b; x \ass x$}   (q.-15)
                (r.-15) edge node [below] {\scriptsize$b; x \ass x$}   (q.-165);
    \end{tikzpicture}
  \end{subfigure}%
  \hfill
  \begin{subfigure}[b]{0.33\textwidth}
    \centering
    \begin{tikzpicture}[shorten >=1pt,node distance=1.8cm,on grid,initial text={},state/.style={circle, draw, minimum size=0.3cm}]
      \node[state] (q)               {\scriptsize$q$};
      \node[state] (r) [right=of q]  {\scriptsize$r$};
      \node (h) at (-1,0.35) {\footnotesize\bf b)};
      \path[->] (q.15) edge node [above] {\scriptsize$a; x \ass 1$} (r.165)
                (q) edge [loop left] node [left] {\scriptsize$a;$} node [below] {\scriptsize$x \ass x$} (q)
                (r) edge [loop right] node [right] {\scriptsize$a;$} node [below] {\scriptsize\hspace{1.5em}$x \ass x+1$}   (r)
                (r.-165) edge node [below] {\scriptsize$a; x \ass x+1$}   (q.-15);
    \end{tikzpicture}
  \end{subfigure}%
  \hfill
  \begin{subfigure}[b]{0.33\textwidth}
    \centering
    \begin{tikzpicture}[shorten >=1pt,node distance=1.8cm,on grid,initial text={},state/.style={circle, draw, minimum size=0.3cm}]
      \node[state] (q)              {\scriptsize$q$};
      \node[state] (r) [left=of q]  {\scriptsize$r$};
      \node[state] (s) [right=of q] {\scriptsize$s$};
      \node (h) at (-2.15,0.35) {\footnotesize\bf c)};
      \path[->] (q.165) edge node [above] {\scriptsize$a; x \ass x+1$} (r.15)
                (q.15) edge node [above] {\scriptsize$a; x \ass x$} (s.165)
                (s.-165) edge node [below] {\scriptsize$b; x \ass x+1$}   (q.-15)
                (r.-15) edge node [below] {\scriptsize$b; x \ass x$}   (q.-165);
    \end{tikzpicture}
  \end{subfigure}%
  \caption{Sub-structures of CA that are sources of counter replication.}
  \label{fig:replication}
\end{figure}
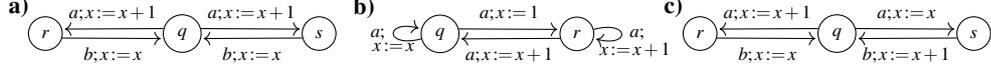

\paragraph{Intuition for the optimizations.}
\begin{sloppypar}
The emergence of counter replication and means of its elimination in the augmented construction, by techniques of \emph{counter sharing} and \emph{increment postponing},  are illustrated on simplified fragments of CA in \cref{fig:replication}.

In a),
$\scaut{\caof R}$ has transitions 
$\move{\move {\{q\}} {a,x_r\ass x_q+1,x_s\ass x_q+1} \{r,s\}} {b,x_q:=x_r \cup x_s} {\{q\}}$.  
%
%
%
The first transition replicates the entire content of the $x_q$, the second one unites the two sets. Both transitions are expensive.
The can be optimized by detecting that the values of $x_s$ and $x_r$ are the same, being generated by \emph{syntactically identical} updates, and storing the values in a \emph{shared counter} $x_{\{s,r\}}$.
This would result in transitions $\move{\move {\{q\}} {a,x_{\{r,s\}}\ass x_{\{q\}}+1} \{s,t\}} {b,x_{\{q\}}:= x_{\{r,s\}}} {\{q\}}$, with the replication and union eliminated.

Figure b) 
then illustrates why a counter $x_P$, $P\subseteq Q$, represents the set of values shared between the original counters $x_p$, $p\in P$.
That is, $x_P$ does not always hold the entire sets stored in the counters $x_p,p\in P$.  If their values are not the same, it stores only their intersection. The value of each $x_p$ is then partitioned among several shared counters $x_S$ with $p\in S$.
In b), 
$\scaut{\caof R}$ has transitions $\move{q} {a,x_q\ass x_q;x_r\ass 1} {\move{\{q,r\}} {a,x_q\ass x_q \cup x_r+1;x_r \ass 1\cup x_r+1} \{q,r\}}$, replicating the counter $x_r$.
Counter sharing would then generate transitions $\move{q} {a,x_{\{q\}}\ass x_{\{q\}};x_{\{r\}}\ass 1} {\move{\{q,r\}} {a,x_{\{q\}} \ass x_{\{q\}};x_{\{r\}} \ass 1;x_{\{q,r\}}\ass x_{\{r\}}+1} {\{q,r\}}}$ with counters $x_{\{q\}}$, $x_{\{r\}}$ for the subsets exclusive to $x_q$ and $x_r$, respectively, and $x_{\{q,r\}}$ for the intersection.

Last, in c), 
we illustrate the technique of \emph{increment postponing}. $\scaut{\caof R}$ would have transitions $\move{\move {\{q\}} {a,x_r\ass x_q+1,x_s\ass x_q} \{s,t\}} {b,x_q:=x_r \cup x_s+1} {\{q\}}$.
Since the increments on the two branches happen in different moments, the values of $x_r$ and $x_s$ differ  until the last increment of $x_s$ synchronizes them.
We avoid replication by storing the non-incremented value, obtained from $x_q$, in a counter shared by $x_r$ and $x_s$ and remembering that an increment of $x_r$ has been postponed. 
This is marked with $\plusmark$ in the name of the shared counter $x_{\{\marked{r},s\}}$. When the values of $x_r$ and $x_s$ synchronize (the increment is applied to $x_s$ too), the postponed increment is evaluated and the $\plusmark$-mark is removed. We would create transitions
$\move{\move {\{q\}} {a,x_{\{\marked r,s\}}\ass x_{\{q\}}} \{s,t\}} {b,x_{\{q\}}:= x_{\{\marked r,s\}}+1} {\{q\}}$.
If, before the synchronization, the value of the marked counter is either tested or incremented for the second time, we declare an \emph{irresolvable replication} and abort the entire construction (we allow postponing of only one increment). To prevent this situation from arising needlessly, we let states remember the counters that must have the empty value and we ignore these counters. 
\end{sloppypar}

\paragraph{Augmented Determinization Algorithm.}
\label{sec:algo2}
The augmented determinization produces from $\caof R = (X, Q, \Delta, \{\alpha_0\}, F)$ the CSA $\csaiiof {\caof R} = (\ii X, \ii Q, \ii \Delta, \{\ii \alpha_0\}, \ii F)$.
Its counters in $\ii X$ are of the form $x_S$ where $x\in X$ and $S\subseteq \marked Q$ and $\marked Q = Q \cup \set{\marked{q}\mid q\in Q}$.  
The guiding principle of the algorithm is that an assignment $\ii \smem$ of $\ii X$ represents an assignment $\smem$ of the counters in $\sc X$ of $\scaut {\caof R}$, namely, 
for each $x_q\in\sc X$, 
\begin{equation}
\smem(x_q) = \bigcup\nolimits_{q\in S,S\subseteq \marked Q} \ii\smem(x_S) \cup
\bigcup\nolimits_{\marked q\in S,S\subseteq \marked Q} \set{n + 1 | n \in \ii\smem(x_S)}.
\label{eq:cntmeaning}
\end{equation}

We will use some simplifying notation. 
As discussed in \cref{sec:ca}, by the construction of $\caof{R}$, the increment of $c$ and the guard  $x < \mxof x$ always appear on its transitions together, without any other guard on $x$.
Hence, in $\scaut{\caof R}$, all terms with an increment or filtering are of the form $\filterof {x < \mxof x}(x_\qq) + 1$. We will denote them by the shorthand $\finc {x_\qq}$ (we are using $\qq$ to denote an element from the set $\marked Q$, either $q$ or $\marked q$, for $q\in Q$).  

The states of $\csaiiof {\caof R}$ will additionally be distinguished according to which of the counters of $\ii X$ are \emph{active}, i.e., could have a non-empty value. 
Counters always valued by $\emptyset$ can be ignored, which simplifies transitions and decreases the chance of an irresolvable counter replication.
The states of $\csaiiof {\caof R}$ are thus of the form 
$(R, \ac)$ where $R\subseteq Q$ and $\ac\subseteq\ii X$ is a set of active counters.

The initial configuration is $\ii\conf_0 = ((\{q_0\},\{x_{\{q_0\}}\mid x\in X\}),\ii \smem_0)$ where $\ii\smem_0$ assigns $\{0\}$ to every $x_{\{q_0\}},x\in X$ and $\emptyset$ to every other counter in $\ii X$. 
The final condition assignment  
$\ii F((R,\ac))$ is, for each $(R,\ac)\in \ii Q$, 
constructed from $\sc F(R)$ 
by replacing every predicate $p[x_q]$ by the disjunction 
$p[x_q]^{\ac} = \bigvee_{x_S\in \ac,q\in S} p[x_S]$ 
that encodes $p[x_q]$ using the counters of $\ac$ in the sense of (\ref{eq:cntmeaning}).

The transitions in $\ii{\Delta\!}$ are constructed from transitions in $\sc {\Delta\!}$.
For source state $(R,\ac)\in \ii Q$, an original transition $\move{R}{a,\varphi,u}{R'}\in \sc {\Delta\!}$,
and set of active counters $\ac\subseteq \ii X$, $\ii{\Delta\!}$ has the transition 
$\move{(R,\ac)}{a,\ii\varphi,\ii u}{(R',\ac')}$, constructed as follows:

The guard $\ii \varphi$ is made from $\varphi$ by replacing every predicate $p[x_q]$ by the equivalent version with shared counters
$p[x_q]^{\ac}$ (as when constructing $\ii F$ above).

The update $\ii u$ is constructed in three steps. 
%
First, 
the update $\sharedu$ is made from $u$ by expressing the r-terms of $u$ using the shared counters $\ii X$. Each $t[x_q]$ is replaced by 
$$
\ii t = \bigcup
\Bigl(\,\bigl\{\, t[x_S]  \mid  x_S\in \ac, q\in S  \, \bigr\} \cup
\bigl\{\,\finc{t[x_S]}  \mid  x_S\in \ac,\marked q\in S  \, \bigr\}\,\Bigr) \ .
$$
Notice that all postponed increments are \emph{evaluated} in $\sharedu$, transformed to normal increments. 
If $\sharedu$ has an r-term $\finc{\finc{t}}$, i.e., a double increment, then the whole construction aborts and declares an \emph{irresolvable counter replication}. We allow postponing only one increment.%
\footnote{
Also transition guards and final conditions of $\csaiiof{\caof R}$ must not contain the $\plusmark$-mark since evaluating them regardless the postponed increments would return incorrect results. However, declaring counter replication on seeing a double increment here covers these cases \mbox{due to the structural properties of $\caof R$. }
}
Otherwise, we proceed to resolve counter replication. 
First, we make sure that every counter appears in the image of the update only in one kind of r-term.
We collect the set $\conflict$ of all r-terms $\finc {x_S}$ of $\sharedu$ with \emph{conflicting increments}, i.e. such that also $x_S$ is an r-term of $\sharedu$.
In update $\marked u$, conflicting increments are \emph{postponed}. For $x\in X$, $q\in Q$, and $\sharedu(x_q) = \bigcup T$,
$$
\marked u(x_q) =  \bigcup\bigl(\, T \setminus \conflict\,\bigr) 
\text{\ \ and\ \ }
\marked u(x_{\marked q})  =  \bigcup \bigl\{\,x_S \mid \finc{x_S} \in T \cap \conflict\,\bigr\} \ .
$$
The final update $\ii u$ then resolves counter replication, by grouping r-terms replicated in $\marked u$ under a common l-value (we call $z$ an \emph{l-value} of r-terms of $\marked u (z)$).
For an r-term $t$ of $\marked u$, let $\lval(t)$ be the set of its l-values.
Note that $\lval(t)$ is always of the form $\{x_\qq\}_{x\in S}$ for some fixed $x\in X$ (see property 4 of $\caof R$ in \cref{sec:ca}). 
We let $\ac'$ be the set of counters $x_S$ with $\lval(t) = \{x_\qq\}_{x\in S}$ for some r-term of $\marked u$. For all $x_S\in \ii X$, if $x_S\not\in \ac'$ then $\ii u(x_S) = \emptyset$ else 
$$
\ii u(x_S) = \bigcup\bigl\{\, t \mid t \text{ is an r-term of $\marked u$ and } \lval(t) = \{x_\qq\}_{\qq\in S}\,\bigr\} \ .  
$$

\begin{example}
\label{example:augmented}
Let us have 
$\move{R}{a,\varphi,u}{R'}\in\sc{\Delta\!}$ created in \cref{example:determ} with $R =
\{p,q\}$, $R' = \{r,s\}$, $\varphi = x_p{<}n \land x_p{\geq} m$, 
and $u = \{
x_r \ass \finc{x_p},
x_s \ass x_q \cup 1
\}$.
Let $\ac = \{x_{\set{p,q}},x_{\set{p,\marked q}}\}$.
Then 
$\sharedu = \{
x_r \ass \finc{x_{\set{p,\marked q}}} \cup \finc{x_{\set{p,q}}},
x_s \ass \finc{x_{\set{p,\marked q}}} \cup x_{\set{p,q}}  \cup 1
\}$.
Note that the $x_q$ in $u(x_s)$ becomes $\finc{x_{\set{p,\marked q}}}$, corresponding to the right part of the definition of $\ii t$ (the postponed increment $x_{\marked q}$ is evaluated in $\sharedu$).
Note that the r-term $\finc{x_{\set{p,q}}}$ is in $\conflict$ as $x_{\set{p,q}}$ is an r-term  of $\sharedu$ too.
Therefore it is postponed in $\marked u$, i.e. $\sharedu(x_r) = \finc{x_{\set{p,q}}} \cup \cdots$ becomes $\marked u(x_{\marked r}) = x_{\set{p,q}}$. We get
$\marked u = \{
x_r \ass \finc{x_{\set{p,\marked q}}},                          
x_s \ass \finc{x_{\set{p,\marked q}}} \cup x_{\set{p,q}} \cup 1,
x_{\marked r} \ass x_{\set{p,q}}                               
\}$.
Finally, $\ii u$ 
groups r-terms replicated in $\marked u$ under a common l-value:
$\ii u = \{
x_{\set{r,s}} \ass \finc{x_{\set{p,\marked q}}},
x_{\set{s}} \ass 1,
x_{\set{s,\marked r}} \ass x_{\set{p,q}}
\}$.
The next active counters are $\ac' = \{x_{\set{r,s}},x_{\set{s}},x_{\set{s,\marked r}}\}$.
%
Note that, for $x_{\set{p,\marked q}}$, the postponed increment at $\marked p$ was synchronized on this transition, while the conflict at $x_{\set{p,q}}$ was solved by postponing increment and marking $r$ with $\marked{}$.
\qed
\end{example}

The algorithm either returns the CSA $\csaiiof{\caof A}$, or detects an irresolvable counter replication, in which case $\csaiiof{\caof A}$ does not exist.\footnote{Aborting the construction here simplifies the description, but it would also be possible to continue the construction and return a DCSA that does not guarantee fast simulation.}
Let $m = \sharp R$ and recall that $n$ denotes the length of the matched text, $|w|$. 
Since $\caof R$ has at most $m$ states and $m^2$ transitions,
a basic analysis of the algorithm's data structures reveals that the resulting CSA has at most $2^{2^m}$ states, each with at most $2^{m^2}$ outgoing transitions, each transition of the size in $O(m2^m)$.
Because $\csaiiof{\caof A}$ encodes $\scaut {\caof A}$, it has the same language, and it also inherits its determinism. Since it does not replicate counters, it can be simulated in pattern matching fast, in time linear to the text and independent of the counter bounds. The following theorem is proved in \cref{app:complexity}.



\begin{theorem}
For $R$ with flat counting, if $\csaiiof {\caof R}$ exists, then 
it does not replicate counters, 
its size is in
$O(2^{2^m} m)$, 
$L(\caof R) = L(\csaiiof {\caof R})$,
and it can be simulated on a word $w$ of the length $n$ in time 
$O(2^{2m} m n)$.
\label{theo:complexity}
\end{theorem}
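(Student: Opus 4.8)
The plan is to establish the four claims of \cref{theo:complexity} one by one, in the order listed, and to treat the first one---non-replication---as the real mathematical heart, since everything else either follows from it or from already established machinery. First I would recall that \cref{sec:ca} guarantees $\caof R$ is Cartesian when $R$ has flat counting, so \cref{theo:alg1corectness} applies and $\scaut{\caof R}$ is a well-defined DCSA with $L(\scaut{\caof R}) = L(\caof R)$. The augmented construction of \cref{sec:algo2} produces $\csaiiof{\caof R}$ precisely so that an assignment $\ii\smem$ of $\ii X$ encodes an assignment $\smem$ of $\sc X$ via equation~(\ref{eq:cntmeaning}); so for the language claim I would show, by induction on the length of the processed word, that this encoding is a simulation in both directions---i.e. $\confaut{\csaiiof{\caof R}}$ is isomorphic to $\confaut{\scaut{\caof R}}$ under decoding of the shared-counter representation. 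The step cases amount to checking that the three-stage update $\sharedu \to \marked u \to \ii u$ preserves the meaning~(\ref{eq:cntmeaning}): $\sharedu$ just rewrites the $\scaut{\caof R}$-update $u$ into shared counters and evaluates postponed increments (sound because every postponed increment is recorded as $x_{\marked q}$ and (\ref{eq:cntmeaning}) adds $1$ to such sets); $\marked u$ only relocates an r-term $x_S$ to $x_{\marked q}$ while dropping it from $x_q$, which again matches (\ref{eq:cntmeaning}) since the $\plusmark$-marked slot contributes the incremented set; and $\ii u$ merely regroups identical r-terms under a shared l-value without changing which r-terms flow into which $x_q$-meaning. Determinism is inherited because $\csaiiof{\caof R}$ has a single initial configuration and its guards, being built by the same minterm mechanism relabelled via $p[x_q]^{\ac}$, remain pairwise exclusive.

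The core of the proof is the non-replication claim: if the construction does not abort, then no transition of $\csaiiof{\caof R}$ has a counter occurring twice in the image of $\ii u$. Here I would argue structurally. By construction $\ii u(x_S) = \bigcup\{t \mid t\text{ an r-term of }\marked u,\ \lval(t) = \{x_\qq\}_{\qq\in S}\}$ and distinct $x_S$ have disjoint l-value sets, so any counter replicated in $\ii u$ is already replicated in $\marked u$. Then I would show $\marked u$ is non-replicating. Two r-terms of $\marked u$ over the same underlying base counter $y = x_{S}$ can only be $x_S$ itself and $\finc{x_S}$; the set $\conflict$ is defined exactly to catch this pair, and the $\marked u$ step removes such conflicting $\finc{x_S}$ from the non-marked slot and records the bare $x_S$ in the $\marked{}$-slot, so after the relocation $x_S$ and $\finc{x_S}$ never coexist as r-terms. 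What could still go wrong is that the relocated term $x_S$ collides with an already-marked occurrence, i.e. one would need $\finc{\finc{x_S}}$ in $\sharedu$---but that is exactly the double-increment case on which the construction aborts. Finally I would invoke the structural property 4 of $\caof R$ (each transition update assigns to $x_S$ only one of $0,1,x_S,x_S+1$, with the increment always paired with the guard $x<\mxof x$) to rule out any other source of two occurrences of a base counter within a single $\ii u(x)$ or across $\ii u(x)$ and $\ii u(y)$ for the \emph{same} underlying $x\in X$ but different state-sets---the Cartesian fragments in \cref{fig:replication} are the only shapes producing replication, and counter sharing plus increment postponing handle all of them. I expect this to be the main obstacle, because the argument has to range over the exact syntax of $\caof R$'s updates and track how the three rewriting stages interact.

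For the size bound I would do the crude counting already sketched before the theorem statement: $\caof R$ has at most $m = \sharp R$ states and $m^2$ transitions; a state of $\csaiiof{\caof R}$ is a pair $(R,\ac)$ with $R\subseteq Q$ and $\ac\subseteq \ii X$, and $|\ii X| = |X|\cdot 2^{|\marked Q|} \le \cntof R\cdot 2^{2m}$, giving at most $2^{2^{O(m)}}$ states; each state has at most $2^{m^2}$ outgoing transitions (one per minterm of at most $m^2$ guards), each of size $O(m\,2^{m})$ because an update assigns $O(|\ii X|)$ counters and each r-term rewriting~$\ii t$ is a union of $O(2^{m})$ shared-counter terms. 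Multiplying and absorbing dominated factors yields a bound of the stated form $O(2^{2^m} m)$ (reading the $2^{2^m}$ as shorthand for $2^{2^{O(m)}}$, matching the paper's informal estimate). For the simulation-time bound I would simply combine non-replication with the theorem of \cref{sec:matching}: a non-replicating CSA $A$ simulates on $w$ in $O(|A|\cdot|w|)$ time. Since $\csaiiof{\caof R}$ is non-replicating, its simulation runs in $O(|\csaiiof{\caof R}|\cdot n)$; here, however, the transition-finding cost is not the full $|A|$ but only the cost of scanning guards out of the current state, which---using the structural form of the guards---is $O(m\,2^{m})$ per step (the number of shared counters times a constant per atomic test), and the update cost is $O(|\ii X|) = O(2^{2m})$ amortized; combining gives $O(2^{2m} m n)$ as claimed. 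I would present the precise bookkeeping of which term dominates in the appendix, but the shape of the bound follows immediately from the two preceding results plus the data-structure analysis of \cref{sec:matching}.
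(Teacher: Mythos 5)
Your proposal is correct and follows essentially the same route as the paper: language equivalence via the encoding (\ref{eq:cntmeaning}) and \cref{theo:alg1corectness}, non-replication from the conflict-resolution and l-value grouping, the size bound by crude counting over $(R,\ac)$ and minterms, and the time bound by plugging non-replication into the simulation theorem of \cref{sec:matching}; in fact you are more explicit than the paper's appendix, which treats non-replication as immediate from the construction. One local misstatement: $\marked u$ is \emph{not} non-replicating in the sense of the definition of counter replication---in the paper's \cref{example:augmented}, the r-term $\finc{x_{\{p,\marked{q}\}}}$ occurs in both $\marked u(x_r)$ and $\marked u(x_s)$; the correct intermediate claim, which your subsequent sentences in effect establish, is that after conflict resolution each counter occurs in $\marked u$ only inside occurrences of a \emph{single} syntactic r-term (either $x_S$ or $\finc{x_S}$, never both), so the grouping of identical r-terms under a common l-value collapses all their occurrences and $\ii u$ is non-replicating. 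Also, your per-step transition-finding cost $O(m2^m)$ accounts for evaluating one guard but not for the number of outgoing minterm-guards that must be scanned (the paper bounds these by $O(2^m)$ guards of size $O(m2^m)$, i.e., $O(m2^{2m})$ per character); this slip is harmless since the corrected accounting still fits the stated $O(2^{2m}mn)$ bound.
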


Matching can be done in time of constructing the CSA plus its simulation, which in the sum is indeed fast, not dependent on $k$ and linear in $n$. It can also be noted that the $m$ in the exponents above is not the size of the entire regex, but only the size of the counted sub-regexes.

\section{Regexes with Synchronizing Counting}
\label{sec:class}


Finally, in this section we define the class of regexes with synchronizing counting, which precisely captures when the CSA created by our construction in \cref{sec:augmented} does not replicate counters and hence allow fast matching (in the sense of \cref{theo:complexity}).

\begin{definition}[Regexes with synchronizing counting]
A regex has \emph{synchronizing counting} iff
it has no sub-expression $S\regex{\{n,m\}}$ 
where for some $k\in \nat$, a word from $L(S)^k$ has a prefix from $L(S)^{k+1}$.
\end{definition}

For instance, \regex{(ac*)\{1,4\}(ab|ba)\{3,5\}(a(ab)*)\{2,8\}} is a regex with synchronizing counting as each word from $L(\regex{ac*})^k$ must contain the symbol $a$ exactly $k$ times%
, words from $L(\regex{ab|ba})^k$ must have exactly $2k$ symbols, and words from $L(\regex{a(ab)*})^k$ can be uniquely split at the first $a$ in the \regex{a(ab)*}.
In comparison, \regex{(a|aa)\{2,5\}} does not have synchronizing counting as $a\cdot a \cdot a$ is a prefix of $aa \cdot aa$.

Intuitively, there is no pair of paths through $\caof{S\regex{\{m,n\}}}$ starting at the same state, over the same word, ending in the same state, where the number of increments differs by two. 
In such case, $\csaiiof {\caof {S\regex{\{m,n\}}}}$ would have to delay two increments, 
which our construction does not allow. 
The theorem below is proved in \cref{app:delayed}.

\begin{restatable}{theorem}{classtheorem}
\label{theo:class}
Given a regex $R$ with flat counting, 
the algorithm of \cref{sec:augmented} returns $\csaiiof{\caof R}$ if and only if $R$ has synchronizing counting.
\end{restatable}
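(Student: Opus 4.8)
The plan is to prove both directions by relating syntactic synchronization of the counted sub-expressions of $R$ to the occurrence of a double increment $\finc{\finc{t}}$ (an \emph{irresolvable counter replication}) during the augmented determinization of Section~\ref{sec:augmented}. The key observation is that, by property~4 of $\caof R$ and flatness (property~5), at most one counter $x_S$ is nontrivial at a time, so the whole analysis localizes to a single counted sub-expression $S\regex{\{m,n\}}$ and its substructure $A_S$ of $\caof R$. For such a sub-expression, I would first establish the following combinatorial bridge: for $k \in \nat$, a word from $L(S)^k$ has a prefix from $L(S)^{k+1}$ \emph{if and only if} there exist two runs of $A_S$ starting in a common state (the entry state of $A_S$), over a common word $w$, ending in a common state, along which the number of increments of $x_S$ differs by at least two. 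The ``only if'' direction uses that a word $u \in L(S)^k$ with prefix $v \in L(S)^{k+1}$ induces, by the Glushkov-style construction, one run of $A_S$ reading $u$ with $k$ increments and one run reading the prefix $v$ of $u$ (extended along the same word) with $k+1$ increments; since $u = v v'$ one pads the shorter run appropriately so that both runs read the same word, and by examining the structure one makes the increment counts differ by exactly (at least) two after reading a suitable common suffix. The ``if'' direction reverses this, reading off from two such runs the needed factorizations of a common word into $k$ and $k+1$ copies of $L(S)$.

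Next I would connect this run-level statement to the determinization. For the ``if $R$ has synchronizing counting, then $\csaiiof{\caof R}$ exists'' direction: suppose for contradiction that the construction aborts, i.e.\ some $\sharedu$ contains an r-term $\finc{\finc{t}}$. By the definition of $\sharedu$ and $\marked u$, a double increment arises precisely when a register $x_{\marked q}$ (carrying an already-postponed increment, reachable via a chain of transitions each of which either preserved the mark or came from a $\conflict$ postponement) is incremented again on the current transition. Unwinding the history of this marked register along the DCSA run, and using the invariant (\ref{eq:cntmeaning}) that relates $\ii\smem$ to $\smem$ and hence to the set of CA-run values, I would extract two runs of $A_S$ over a common word, from a common state to a common state, whose increment counts differ by two — which by the bridge above yields a word in $L(S)^k$ with a prefix in $L(S)^{k+1}$, contradicting synchronization. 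Conversely, if $R$ does \emph{not} have synchronizing counting, the offending factorization yields (again via the bridge) two CA runs differing by two increments that merge at a common state; I would then trace the augmented construction along the DCSA run encoding the subset containing these two CA configurations and show that the two differing increment counts cannot both be represented by the available register forms $x_S$ and $x_{\marked q}$ (one mark, one postponed increment) without forcing a second postponement, i.e.\ a $\finc{\finc{t}}$, so the construction aborts.

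The main obstacle I expect is the bookkeeping in the ``unwinding'' step: the marked register $x_{\marked q}$ that finally triggers the double increment may have been created several transitions earlier, and between its creation and the abort its value may have been merged (via $\cup$) with other registers, passed through counter sharing into various $x_S$, and renamed by $\lval$. I must show that throughout this history the ``postponed increment is owed'' status is faithfully tracked and corresponds to a concrete pair of CA runs with a fixed increment gap of one, and that the aborting transition adds a second unit to that gap. Making invariant~(\ref{eq:cntmeaning}) sharp enough — e.g.\ strengthening it to a statement about which \emph{pairs} of CA runs (distinguished by their increment counts) land in which shared register $x_S$ versus $x_{\marked q}$ — is the technical heart of the argument; once that invariant is in place, both implications follow by contraposition from the run-level bridge. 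I would also need a short lemma that flatness guarantees no interference between distinct counted sub-expressions, so that the localization to a single $A_S$ is legitimate; this follows directly from properties~1, 4, and 5 of $\caof R$ listed in Section~\ref{sec:ca}.
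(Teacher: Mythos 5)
Your overall strategy is the same as the paper's: localize to a single counting loop using flatness, relate the prefix condition to pairs of CA paths through the loop with differing increment counts, and connect those pairs to the shared/marked registers of the augmented construction via a strengthened invariant (the paper does exactly this with its ``$x$-fanout'' paths and two lemmas stating that two fanout paths over a common word whose increment counts differ by one are tracked, by induction on the path, in a single shared counter $x_S$ containing a marked and an unmarked state, and conversely). However, your central ``bridge'' is formulated in a way that creates a genuine gap: you require two runs over a common word, \emph{ending in a common state}, with increment counts differing by \emph{at least two}. Neither side of the argument actually delivers or needs this. In the direction ``abort $\Rightarrow$ non-synchronizing'', the double increment $\finc{\finc{t}}$ arises when a counter with a postponed increment ($\marked q\in S$, $q\in\last{x}$) meets an increment transition out of $q$; unwinding this gives a pair of paths differing by \emph{one} increment (plus one extra letter read only by the ahead path), not a common-word pair differing by two at a common state. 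In the direction ``non-synchronizing $\Rightarrow$ abort'', the natural pair of runs induced by $v\in L(S)^{k+1}$ being a prefix of $u\in L(S)^k$ differs by one increment and ends in \emph{different} states; your plan to ``pad the shorter run'' to force a difference of two at a common state is exactly the step that is unclear and unnecessary. The paper instead takes a \emph{minimal} counterexample $k$ to guarantee the lower path has exactly $k-1$ increments, applies its difference-one lemma to obtain an active counter $x_S$ with $\marked q\in S$ and $q\in\last{x}$, and then the abort follows purely \emph{syntactically} from the existence of an increment transition in $\trinc$ leaving $q$ --- no second increment need be executed along any run, and no common end state is required.

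Two further points you would have to repair. First, your bridge speaks of ``runs of $A_S$''; actual runs are blocked by the guards $x<\mxof{x}$, so for large $k$ a run cannot keep incrementing, whereas the construction aborts at the level of transitions regardless of guards. The paper therefore works with syntactic paths (the fanout), and your invariant and bridge must be stated at that level. Second, the backward direction of your bridge does go through without the common-end-state hypothesis (decompose the higher-increment path into $L(S)$-fragments, extend the last partial fragment of the lower path to a full word of $L(S)$ without increments), so the fix is to restate the bridge as the paper's lemmas do: non-synchronizing counting corresponds to two fanout paths over a common word whose increment counts differ by one, with the ahead path ending in $\last{x}$; with that reformulation, and with your proposed strengthening of invariant (1) carried out as an induction on the CSA path (which is precisely the paper's two technical lemmas), the rest of your plan matches the paper's proof.
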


\begin{corollary}
  Regexes with flat synchronizing counting have a fast matching algorithm.
\end{corollary}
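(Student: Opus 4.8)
The plan is to derive the corollary as a direct composition of \cref{theo:class,theo:complexity}: I assemble the full matching pipeline and verify that every phase except the final simulation is independent of both the text length and the counter bounds, while the simulation itself is linear in the text length and bound-independent.

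First I would fix a regex $R$ with flat synchronizing counting and a text $w$ of length $n$. Since $R$ has synchronizing counting, \cref{theo:class} (in the direction from synchronizing counting to successful construction) guarantees that the augmented determinization of \cref{sec:augmented} does not abort but returns $\csaiiof{\caof R}$; in particular this automaton exists. Feeding this existence into \cref{theo:complexity} yields that $\csaiiof{\caof R}$ is non-replicating, satisfies $L(\caof R) = L(\csaiiof{\caof R})$, has size $O(2^{2^m}m)$ with $m = \sharp R$, and can be simulated on $w$ in time $O(2^{2m}mn)$.

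Next I would spell out the matching algorithm and tally its cost. It runs three phases: (i) compile $R$ into $\caof R$ by the construction of \cref{sec:ca}; (ii) run the augmented determinization to build $\csaiiof{\caof R}$; and (iii) simulate $\csaiiof{\caof R}$ on $w$ with the offset-list machinery of \cref{sec:matching}, declaring a match iff the reached configuration satisfies the final condition. Correctness is immediate from the chain of language equalities $L(R) = L(\caof R) = L(\csaiiof{\caof R})$, so the simulation accepts exactly when $w \in L(R)$. For the timing, the crucial observation is that the size of $\caof R$ (at most $\sharp R + 1$ states and $\sharp R^2$ transitions for flat $R$, by the properties listed in \cref{sec:ca}) and the size of $\csaiiof{\caof R}$ (by \cref{theo:complexity}) are functions of $m$ alone, hence independent of $\boundof R$; consequently phases (i) and (ii) take time depending only on $m$, not on $n$ or on the bounds, while phase (iii) costs $O(2^{2m}mn)$.

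Summing the three phases gives a total matching time that is linear in $|w|$ and independent of $\boundof R$, which is exactly the notion of \emph{fast} matching fixed in \cref{sec:prelims}, establishing the corollary. I do not expect a genuine obstacle, since the statement is a straightforward chaining of the two theorems; the only point demanding care is confirming that the two preprocessing phases carry no hidden dependence on $\boundof R$. This matters precisely because $\boundof R$ is written in decadic notation and is therefore exponential in $|R|$, so any residual dependence on it would defeat the whole claim.
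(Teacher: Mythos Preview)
Your proposal is correct and follows exactly the paper's own approach, which is simply to invoke \cref{theo:class} and \cref{theo:complexity} in sequence; you have merely spelled out in more detail the chaining of the two theorems and the verification that the preprocessing phases carry no dependence on $\boundof R$. The paper's proof is a one-line reference to these two theorems, so your expanded argument is a faithful elaboration of the same reasoning.
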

\begin{proof}
  From \cref{theo:complexity,theo:class}.
\end{proof}

\paragraph{Counting with Markers.}
Even though designing and recognizing synchronizing counting is usually intuitive, 
it may also be tricky.  
For instance, $\regex{(\textbackslash \textbackslash \textbackslash \textbackslash d+\textbackslash \textbackslash \textbackslash \textbackslash .)\{3\}}$, 
from the database of real-world regexes we use in our experiment, has synchronizing counting, 
while
$\regex{ICE\_Dims.\{92\}((\_?(X|\textbackslash d+))\{13\})}$ does not.%
\footnote{
An automated way of identifying synchronizing counting would be running the CSA-to-DCSA determinization from \cref{sec:augmented}, but this is exponential to $|R|$.
}
A vast majority of real-world regexes we examined fortunately belong to very easily recognizable subclasses of synchronizing counting.
The most wide-spread and easy to recognize are regexes with \emph{letter-marked counting},
where every sub-expression $S\regex{\{m,n\}}$ has a set of marker letters such that every word from $L(S)$ has exactly one occurrence of a marker letter. \footnote{That letter-marked counting is a strict superset of the class that is in  \cite{oopsla} conjectured as handled by the algorithm of \cite{oopsla}. The conjecture of \cite{oopsla} is also not correct, as shown in \cref{app:counterexample}.}

Marker letters may be generalized to \emph{marker words}, though, 
markers that can arise by concatenation of several words from $L(S)$ cannot be used. The condition that has to be satisfied is that any word from $L(S)^k$, $k\in \nat$, has exactly $k$ non-overlapping occurrences of marker words as infixes. 
Another sufficient property of $S$ is that it has words of a \emph{uniform length}. 
The idea of markers may be generalized further until the point when the set of marker words is specified by general regexes, when we get precisely the synchronizing counting.
The regexes with letter-marked counting are easily human as well as machine recognizable (see a simple $O(|R|^2)$-time algorithm in \cref{sec:checker}).

\section{Practical Considerations}
\label{sec:practical}

Although the main point of this work is the theoretical feasibility of fast matching with synchronizing counting,
we will also argue that the results are of practical relevance. 
To this end, we show experimentally that synchronizing counting and marked
counting cover a majority of practical regexes. 
We also give arguments that matching with the CSA constructed in \cref{sec:augmented} can be done efficiently. 

\subsection{Occurrence of Synchronizing Counting in Practice}

To substantiate the practical relevance of synchronizing counting regexes, we examined a large sample of practical regexes using a simple checker of letter-marked counting (see \cref{sec:checker}). 
The benchmark consists of over 540\,000 regexes collected from 
(1) a large scale analysis of software
projects~\cite{DavisMCSL19}; 
(2) regexes used by network intrusion detection systems Snort~\cite{snort},
Bro~\cite{bro},
Sagan~\cite{sagan}, and the academic
papers~\cite{yang2010,tacas18-appred};
(4) the RegExLib database of regexes~\cite{regexlib}.
%


%
%
From the regexes that we could parse\footnote{We did not parse 38\,558 regexes since their syntax was broken or contained some advanced features we do not support.}, 31\,975 contained counting. 
We selected those with flat counting and with the sum of upper bounds of counters larger than 20 (as was done in \cite{oopsla} to filter out counting with small bounds that can be handled through counter unfolding and traditional methods)\footnote{926 regexes contain nested counting and 25297 regexes contain small upper bounds.}. 
This left us with 5\,751 regexes.
%
From these, only 46 regexes ($0.8\,\%$) have counting that is not letter-marked.
Furthermore, we manually checked these regexes and we identified that 22 of them have synchronizing counting.
We have therefore found only 24 regexes with non-synchronizing counting, i.e., 0.4\,\% of the examined set of regexes with flat counting. 

The 24 non-synchronizing regexes are listed in \cref{sec:nonvisible}. Some of
them may clearly be rewritten with synchronizing counting, such as
$\regex{(.+)\{25\}(.*)}$, which can be rewritten as $\regex{.\{25,\}(.*)}$. We
speculate that some of them might in fact represent a mistake, such as
$\regex{(.*)\{1,32000\}[bc]}$ where the counter matches the empty word, or
$\regex{(\textbackslash n\textbackslash s+)(criterion .*\textbackslash
n)(\textbackslash s.+)\{1,99\}}$ where the $\regex{\textbackslash s.+}$ might
have been intended as $\regex{\textbackslash s\textbackslash S+}$
($\regex{\textbackslash s}$ are white spaces, $\regex{\textbackslash S}$ are all
the other characters). Synchronizing counting seems to capture the intuition with which counting is often written, hence reporting non-synchronizing counting might help identifying bugs.  

By the same methodology and from a nearly identical benchmark, \cite{oopsla} arrived to a sample of 5\,000 regexes with flat counting with the sum of bounds larger than 20.
The algorithm of \cite{oopsla} did not cover 571 regexes from the 5\,000, which is
11\,\% of the examined set of regexes with flat counting (in contrast to the
$0.4\,\%$ with non-synchronizing counting and the $0.8\,\%$ with counting that is not letter-marked, measured on a slightly larger set of regexes).
The two sets of regexes with flat counting, the 5\,751 of ours and the 5\,000 of \cite{oopsla}, are not perfectly identical, however. Differences are to a small degree caused by differences in the base database
(\cite{oopsla} uses about 18 more regexes that are proprietary and excludes 26
regexes with counter bounds larger than 1\,000), and to a larger degree by small differences in the parsers.

\subsection{Practical Efficiency of Matching with Synchronizing Counting}
The size and the worst-case time of simulation of $\csaiiof{\caof R}$ are still
exponential to the number of states of $\caof R$ (namely, $O(
2^{2^m} m)$ and $O(2^{2m} m n)$ where $m=\sharp R$ equals the number of states of $\caof R$, cf. \cref{theo:complexity}). The potential problem is that the algorithm may generate at most $2^m$ counters, and this potentially threatens practicality of our matching algorithm. 

First, it should be noted that the $m$ in the exponent can be decreased from the size of the entire regex to the size of the counted sub-expression, which is usually very small. 
Then, although an efficient implementation is beyond the scope of this paper and we are leaving it as a future work, we give some indirect arguments for practicality of the CA-to-CSA algorithm.%
\footnote{A competitive matcher that runs on real-world regexes requires an extensive infrastructure, optimized data structures for the shared registers, and ideally an on-the-fly version of the CA-to-CSA determinization (similar to the online DFA simulation).}

By the standard techniques of register allocation \cite{dragoonBook}, 
it is possible to decrease the number of counters and counter assignments other than identity dramatically.
In fact, simply eliminating needless renaming of counters and reusing the same name whenever possible,   
our algorithm creates CSA isomorphic to those of \cite{oopsla} when run on regexes handled by \cite{oopsla}. 
The work \cite{oopsla} already shows that simulating these CSA may be done
efficiently and that it brings dramatic improvements over best matchers on
counting-intensive examples. 

In our experience with hand-simulating the algorithm on practical examples,
cases not handled by \cite{oopsla} do not behave much differently, and the numbers of CSA counters do not have a strong tendency to explode.

\section{Conclusions}
We have extended the regex matching algorithm of \cite{oopsla} and shown that
the extended version allows fast pattern matching of so-called synchronising
regexes, a class of regexes that we have newly introduced.
The class of synchronising regexes significantly extends all previously known
classes of regexes that allow fast matching and covers a majority of regexes
appearing in practice (wrt. our empirical study).

In the future, we plan to study extensions of the presented techniques to regexes with nested counting (non-flat). This will probably require a more sophisticated alternative of the offset-list data structure for sets, capable of storing relations of numbers. 
An interesting question is also how and when regexes can be rewritten to a synchronizing form and for what cost. 

\section*{Acknowledgment}
This work has been supported by the Czech Ministry of Education, Youth and
Sports project LL1908 of the ERC.CZ programme, the Czech Science Foundation
project 23-06506S, and the FIT BUT internal project FIT-S-23-8151.

\bibliography{literature_usenix.bib,bibliography.bib}

\begin{thebibliography}{10}

\bibitem{Sperberg-McQueen-ExtendedFAWeb}
Sperberg-McQueen, M.:
\newblock Notes on finite state automata with counters.
\newblock \url{https://www.w3.org/XML/2004/05/msm-cfa.html} Accessed:
  2018-08-08.

\bibitem{regexwiki}
contributors, W.:
\newblock Regular expression---wikipedia (2019)

\bibitem{rethinkingregexes}
Davis, J.C.:
\newblock Rethinking regex engines to address {ReDoS}.
\newblock In: ESEC/FSE'19, ACM (2019)  1256--1258

\bibitem{DBLP:conf/sigsoft/WangS18}
Wang, P., Stolee, K.T.:
\newblock How well are regular expressions tested in the wild?
\newblock In Leavens, G.T., Garcia, A., Pasareanu, C.S., eds.: Proceedings of
  the 2018 {ACM} Joint Meeting on European Software Engineering Conference and
  Symposium on the Foundations of Software Engineering, {ESEC/SIGSOFT} {FSE}
  2018, Lake Buena Vista, FL, USA, November 04-09, 2018, {ACM} (2018)  668--678

\bibitem{DBLP:conf/sigsoft/DavisCSL18}
Davis, J.C., Coghlan, C.A., Servant, F., Lee, D.:
\newblock The impact of regular expression denial of service (redos) in
  practice: an empirical study at the ecosystem scale.
\newblock In Leavens, G.T., Garcia, A., Pasareanu, C.S., eds.: Proceedings of
  the 2018 {ACM} Joint Meeting on European Software Engineering Conference and
  Symposium on the Foundations of Software Engineering, {ESEC/SIGSOFT} {FSE}
  2018, Lake Buena Vista, FL, USA, November 04-09, 2018, {ACM} (2018)  246--256

\bibitem{DBLP:conf/issta/ChapmanS16}
Chapman, C., Stolee, K.T.:
\newblock Exploring regular expression usage and context in python.
\newblock In Zeller, A., Roychoudhury, A., eds.: Proceedings of the 25th
  International Symposium on Software Testing and Analysis, {ISSTA} 2016,
  Saarbr{\"{u}}cken, Germany, July 18-20, 2016, {ACM} (2016)  282--293

\bibitem{ADMIN2020}
W\"{u}bbeling, M.:
\newblock Regular expression security.
\newblock ADMIN \textbf{55} (2020)

\bibitem{cloudfareoutage}
Graham-Cumming, J.:
\newblock Details of the {Cloudflare} outage on july 2, 2019.
\newblock
  \url{https://blog.cloudflare.com/details-of-the-cloudflare-outage-on-july-2-2019/}
  (2019)

\bibitem{stackoutage}
Exchange, S.:
\newblock Outage postmortem.
\newblock
  \url{http://stackstatus.net/post/147710624694/outage-postmortem-july-20-2016}
  (2016)

\bibitem{expressjsoutage}
Baldwin, A.:
\newblock Regular expression denial of service affecting express.js.
\newblock \texttt{https://medium.com/node-security/regular-
  expression-denial-of-service-affecting- express-js-9c397c164c43} (2016)

\bibitem{redosimpact}
Davis, J.C., Coghlan, C.A., Servant, F., Lee, D.:
\newblock The impact of regular expression denial of service ({ReDoS}) in
  practice: An empirical study at the ecosystem scale.
\newblock In: ESEC/FSE'18, ACM (2018)  246--256

\bibitem{recentjamie}
Davis, J.C., Servant, F., Lee, D.:
\newblock Using selective memoization to defeat regular expression denial of
  service ({ReDoS}).
\newblock In: 42nd {IEEE} Symposium on Security and Privacy, {SP} 2021, San
  Francisco, CA, USA, 24-27 May 2021, {IEEE} (2021)  1--17

\bibitem{usenix}
Turo\v{n}ov\'{a}, L., Hol\'{\i}k, L., Leng\'{a}l, O., Veanes, M., Vojnar, T.:
\newblock Counting in regexes considered harmful (2022)

\bibitem{thompsonmatching}
Thompson, K.:
\newblock Programming techniques: Regular expression search algorithm.
\newblock Commun. ACM \textbf{11}(6) (1968)  419--422

\bibitem{hyperPaper}
Wang, X., Hong, Y., Chang, H., Park, K., Langdale, G., Hu, J., Zhu, H.:
\newblock Hyperscan: A fast multi-pattern regex matcher for modern {CPUs}.
\newblock In: 16th USENIX Symposium on Networked Systems Design and
  Implementation (NSDI 19), Boston, MA, USENIX Association (February 2019)
  631--648

\bibitem{re2}
{Google}:
\newblock {\retwo}.
\newblock \url{https://github.com/google/re2}

\bibitem{grep}
Haertel, M.,  et~al.:
\newblock {GNU \grep}.
\newblock \url{https://www.gnu.org/software/grep/}

\bibitem{VSXW19}
Saarikivi, O., Veanes, M., Wan, T., Xu, E.:
\newblock Symbolic regex matcher.
\newblock In Vojnar, T., Zhang, L., eds.: TACAS'2019. Volume 11427 of LNCS.,
  Springer (2019)  372--378

\bibitem{rust}
docs.rs:
\newblock regex - rust.
\newblock \url{https://docs.rs/regex/1.5.4/regex/} (2021)

\bibitem{Antimirov96_partderivatives}
Antimirov, V.:
\newblock Partial derivatives of regular expressions and finite automaton
  constructions.
\newblock Theoretical Computer Science \textbf{155}(2) (1996)  291 -- 319

\bibitem{Glu61}
Glushkov, V.M.:
\newblock The abstract theory of automata.
\newblock Russian Math. Surveys \textbf{16} (1961)  1--53

\bibitem{hro97}
Hromkovi{\v{c}}, J., Seibert, S., Wilke, T.:
\newblock Translating regular expressions into small $\epsilon$-free
  nondeterministic finite automata.
\newblock In Reischuk, R., Morvan, M., eds.: STACS 97, Berlin, Heidelberg,
  Springer Berlin Heidelberg (1997)  55--66

\bibitem{oopsla}
Turo\v{n}ov{\'{a}}, L., Hol{\'{\i}}k, L., Leng{\'{a}}l, O., Saarikivi, O.,
  Veanes, M., Vojnar, T.:
\newblock Regex matching with counting-set automata.
\newblock Proc. {ACM} Program. Lang. \textbf{4}({OOPSLA}) (2020)  218:1--218:30

\bibitem{cikm15}
Bj\"{o}rklund, H., Martens, W., Timm, T.:
\newblock Efficient incremental evaluation of succinct regular expressions.
\newblock In: {CIKM}'15. ACM (2015)

\bibitem{GGM12}
Gelade, W., Gyssens, M., Martens, W.:
\newblock Regular expressions with counting: Weak versus strong determinism.
\newblock SIAM J. Comput. \textbf{41}(1) (2012)  160--190 Extended version of
  paper in MFCS'09.

\bibitem{Hovland09}
Hovland, D.:
\newblock Regular expressions with numerical constraints and automata with
  counters.
\newblock In: ICTAC. Volume 5684 of LNCS., Springer (2009)  231--245

\bibitem{KilTuh07}
Kilpel\"ainen, P., Tuhkanen, R.:
\newblock One-unambiguity of regular expressions with numeric occurrence
  indicators.
\newblock Information and Computation \textbf{205}(6) (2007)  890--916

\bibitem{SEJ08}
Smith, R., Estan, C., Jha, S.:
\newblock {XFA:} faster signature matching with extended automata.
\newblock In: IEEE Symposium on Security and Privacy, IEEE (2008)

\bibitem{aplas19}
Hol\'{\i}k, L., Leng\'{a}l, O., Saarikivi, O., Turo\v{n}ov\'{a}, L., Veanes,
  M., Vojnar, T.:
\newblock Succinct determinisation of counting automata via sphere
  construction.
\newblock In: Proc. of APLAS'19. Volume 11893 of LNCS., Springer (2019)
  468--489

\bibitem{KT03}
Kilpel{\"{a}}inen, P., Tuhkanen, R.:
\newblock Regular expressions with numerical occurrence indicators -
  preliminary results.
\newblock In: {SPLST}'03, University of Kuopio, Department of Computer Science
  (2003)  163--173

\bibitem{jha_extended}
Smith, R., Estan, C., Jha, S., Siahaan, I.:
\newblock Fast signature matching using extended finite automaton {(XFA)}.
\newblock In: {ICISS}'08. Volume 5352 of LNCS., Springer (2008)  158--172

\bibitem{Gelade09_countingregex}
Gelade, W., Gyssens, M., Martens, W.:
\newblock Regular expressions with counting: Weak versus strong determinism.
\newblock In: Mathematical Foundations of Computer Science 2009, Berlin,
  Heidelberg, Springer Berlin Heidelberg (2009)  369--381

\bibitem{Hovland-Membership2012}
Hovland, D.:
\newblock The membership problem for regular expressions with unordered
  concatenation and numerical constraints.
\newblock In: Language and Automata Theory and Applications, Berlin,
  Heidelberg, Springer Berlin Heidelberg (2012)  313--324

\bibitem{DavisMCSL19}
Davis, J.C., Michael~IV, L.G., Coghlan, C.A., Servant, F., Lee, D.:
\newblock Why aren't regular expressions a lingua franca? {An} empirical study
  on the re-use and portability of regular expressions.
\newblock In: ESEC/FSE'19, ACM (2019)  1256--1258

\bibitem{snort}
{M. Roesch et al.}:
\newblock {Snort: A Network Intrusion Detection and Prevention System,}.
\newblock {\url{http://www.snort.org}}

\bibitem{bro}
{{Robin Sommer et al.}}:
\newblock {The {Bro} Network Security Monitor} {\url{http://www.bro.org}}.

\bibitem{sagan}
{{The Sagan team}}:
\newblock {The {Sagan} Log Analysis Engine}
  \url{https://quadrantsec.com/sagan\_log\_analysis\_engine/}.

\bibitem{yang2010}
Yang, L., Karim, R., Ganapathy, V., Smith, R.:
\newblock Improving {NFA}-based signature matching using ordered binary
  decision diagrams.
\newblock In: Recent Advances in Intrusion Detection, Springer Berlin
  Heidelberg (2010)  58--78

\bibitem{tacas18-appred}
\v{C}e\v{s}ka, M., Havlena, V., Hol\'{i}k, L., Leng\'{a}l, O., Vojnar, T.:
\newblock Approximate reduction of finite automata for high-speed network
  intrusion detection.
\newblock In: Proc.\ of TACAS'18. Volume 10806 of LNCS., Springer (2018)

\bibitem{regexlib}
RegExLib.com:
\newblock {The Internet's first Regular Expression Library}.
\newblock {\url{http://regexlib.com/}}

\bibitem{dragoonBook}
Aho, A.V., Lam, M.S., Sethi, R., Ullman, J.D.:
\newblock Compilers: Principles, Techniques, and Tools (2nd Edition).
\newblock {Addison Wesley} (August 2006)

\end{thebibliography}

\newpage
\eject
\appendix
\section{Proof of \cref{theo:alg1corectness}}
\label{app:alg1correctnessproof}

\renewcommand{\csaof}[1]{\scaut{#1}}

\correctness*
\begin{proof}
  Let $A = (\vset, Q, \Delta, I, F)$. $\scaut{A}$ is deterministic since it has a single initial configuration and the guards of transitions originating in the same state are minterms.

  The size of $|\scaut{A}| \in \bigo{2^|Q| + 2^|Delta|.|Delta|} = \bigo{|A|}$ where $2^|Q|$ is the number of states and $2^|Delta|.|Delta|$ is the number of transitions times their size.
  
  As $L(A) = L(\confaut{A})  = L(\dfaof{\confaut{A}})$ and $L(\csaof{A}) = L(\confaut{\csaof{A}})$ we need to only show that $L(\dfaof{\confaut{A}}) = L(\confaut{\csaof{A}})$.
  We do this by showing that $\dfaof{\confaut{A}}$ and $\confaut{\csaof{A}}$ are isomorphic.

  We show that $\encode$ is the bijection that gives the isomorphism.
  Because $\dfaof{\confaut{A}}$ has only Cartesian states, $\encode$ is injective.
  To show that $\encode$ is the bijection that gives the isomorphism we need to show that
  \begin{enumerate}[(1)]
    \item as $I$ is the initial state of $\dfaof{\confaut{A}}$, $\encode(I)$ is the initial state of $\confaut{\csaof{A}}$,\label{enum:iso:init}
    \item for each state $\alpha$ of $\dfaof{\confaut{A}}$, $\encode(\alpha)$ is a state of $\confaut{\csaof{A}}$,\label{enum:iso:encstate}
    \item for each state $\beta$ of $\confaut{\csaof{A}}$, $\decode(\beta)$ is a state of $\dfaof{\confaut{A}}$, i.e. $\encode$ is surjective,\label{enum:iso:decstate}
    \item if $\move{\alpha}{a}{\alpha'}$ is a transition of $\dfaof{\confaut{A}}$ then there is a transition $\move{\encode(\alpha)}{a}{\encode(\alpha')}$ in $\confaut{\csaof{A}}$,\label{enum:iso:enctrans}
    \item if $\move{\beta}{a}{\beta'}$ is a transition of $\confaut{\csaof{A}}$ then there is transition $\move{\decode(\beta)}{a}{\decode(\beta')}$ in $\dfaof{\confaut{A}}$,\label{enum:iso:dectrans}
    \item $\alpha$ is a final state of $\dfaof{\confaut{A}}$ iff $\encode(\alpha)$ is a final state of $\confaut{\csaof{A}}$.\label{enum:iso:final}
  \end{enumerate}

  \ref{enum:iso:init} holds trivially from the construction.

  \begin{sloppypar}
  We now assume that this \emph{Argument} holds: for each transition $\move{\alpha}{a}{\alpha'}$ of $\dfaof{\confaut{A}}$, if $\encode(\alpha)$ is a state of $\confaut{\csaof{A}}$, then there is a transition $\move{\encode(\alpha)}{a}{\encode(\alpha')}$ in $\confaut{\csaof{A}}$.
  If this holds, then we can easily show that \ref{enum:iso:encstate} holds, because if $\alpha$ is a state of $\dfaof{\confaut{A}}$, then there is a path $\move{\alpha_0}{a_1}{\alpha_1}\move{}{a_2}{\dots}\move{}{a_{n}}{\alpha}$ in $\dfaof{\confaut{A}}$ and by induction on the length of this path, we can show that $\move{\encode(\alpha_0)}{a_1}{\encode(\alpha_1)}\move{}{a_2}{\dots}\move{}{a_{n}}{\encode(\alpha)}$ is a path in $\confaut{\csaof{A}}$ which means that $\encode(\alpha)$ is a state of $\confaut{\csaof{A}}$. 
  From the \emph{Argument} and \ref{enum:iso:encstate} it immediately follows that \ref{enum:iso:enctrans} holds too.
  From
  \begin{itemize}
    \item \ref{enum:iso:enctrans},
    \item the fact that $\confaut{\csaof{A}}$ is deterministic,
    \item the fact that $\dfaof{\confaut{A}}$ is deterministic and total (i.e. for each state $\alpha$ and symbol $a$, there is exactly one transition $\move{\alpha}{a}{\alpha'}$),
    \item and the fact that $\encode(\decode(\beta)) = \beta$ for each state $\beta$ of $\confaut{\csaof{A}}$ 
  \end{itemize}
  it easily follows (again by using induction on the length of path to $\beta$) that \ref{enum:iso:decstate} and \ref{enum:iso:dectrans} hold too.
  \end{sloppypar}

  We now only need to prove that the \emph{Argument} holds.
  Let therefore $\move{\alpha}{a}{\alpha'}$ be a transition of $\dfaof{\confaut{A}}$.
  We need to show that if $\encode(\alpha)$ is a state of $\confaut{\csaof{A}}$, then there is a transition $\move{\encode(\alpha)}{a}{\encode(\alpha')}$ in $\confaut{\csaof{A}}$.
  Let $\encode(\alpha) = (R,\smem)$ and $\encode(\alpha') = (R',\smem')$. 
  Let $\Delta^{A}_{R,a} = \{\,\move{q}{a,\phi, u}{p} \in \Delta \mid q \in R\,\}$ be the set of transitions in RA $A$ starting in some state from $R$.
  The transition $\move{\alpha}{a}{\alpha'}$ is formed from the constituent transitions $\move{(q,\mem)}{a}{(q',\mem')}$ of $\confaut{A}$ where $(q,\mem) \in \alpha$ and $(q',\mem') \in \alpha'$.
  For each such transition $\move{(q,\mem)}{a}{(q',\mem')}$, there has to exist some transition $\move{q}{a,\phi,u}{q'}$ in $A$ where $\mem \models \phi$ and $\mem'(x) = u(x)(\mem)$ for each $x \in X$.
  From the assumption of the construction, we know that $\phi$ is a conjunction of some unary predicates $p[x]$ which means that $\mem \models p[x]$.
  From this we know that $\smem \models p[x_q]$ for each such $p[x]$, because $\mem(x) \in \smem(x_q)$ and  $\smem \models p[x_q]$ holds if there exists some value satisfying $p[x_q]$ in $\smem(x_q)$.
  This also means that $\smem \models \indexby \phi q$.
  
  Let therefore $\Delta^{\alpha}_{R,a} = \{\,\move{q}{a,\phi, u}{q'} \in \Delta^{A}_{R,a} \mid \mem \models \phi$ for some $(q,\mem) \in \alpha\,\}$ be the set of transitions in $A$ that create some transition in $\confaut{A}$ that are constituent for $\move{\alpha}{a}{\alpha'}$.
  Note that $\Delta^{\alpha}_{R,a}$ is a subset of $\Delta^{A}_{R,a}$ and that for each transition $\move{q}{a,\phi, u}{q'}$ in $\Delta^{A}_{R,a} \setminus \Delta^{\alpha}_{R,a}$ there is no $(q,\mem) \in \alpha$ such that $\mem \models \phi$.
  Using this, we can show that $\smem \not\models \indexby \phi  q$, i.e. $\smem \models \neg\indexby \phi q$.
  Assume that this is not true, i.e. $\smem \models \indexby \phi q$.
  Then for each atomic predicate $p[x]$ in $\phi$ there has to exist some value in $\smem(x_q)$ that satisfies $p[x_q]$.
  For each $x \in X$ there is at most one $p[x]$ in $\phi$ therefore we can take $\mem$ such that $\mem(x)$ is the value in $\smem(x_q)$ that satisfies $p[x]$ (if there is no $p[x]$, take some random value from $\smem(x_q)$).
  There must be a configuration $(q, \mem)$ in $\alpha$ as $\alpha$ is Cartesian.
  We then have $\mem \models \phi$ which is a contradiction with $\move{q}{a,\phi, u}{q'} \not\in \Delta^{\alpha}_{R,a}$.

  From the assumption that $\encode(\alpha) = (R, \smem)$ is the state of $\confaut{\csaof{A}}$, we know that $R$ must be a state of $\csaof{A}$, which means that we create some transitions from $R$ labelled with $a$.
  We have that $\encode(\alpha') = (R',\smem')$ and we now show that the algorithm creates a transition $\move{R}{a,\psi,u'}{R'}$ such that $(R',\smem')$ is the image of $(R,\smem)$ under this transition, i.e. $\move{\encode(\alpha)}{a}{\encode(\alpha')}$ is the transition of $\confaut{\csaof{A}}$. 
  Let $\Delta_{R, a}$ be the set of constituent $a$-transitions as defined in \cref{sec:determinization}, i.e. set-versions of transitions in $\Delta^{A}_{R,a}$.
  Let $\psi$ be a conjunction of
  \[\bigwedge_{\move{q}{a,\phi, u}{q'} \in \Delta^{\alpha}_{R,a}} \indexby \phi  q
  \quad\text{and}\quad
  \bigwedge_{\move{q}{a,\phi, u}{q'} \in \Delta^{A}_{R,a} \setminus \Delta^{\alpha}_{R,a}} \neg\indexby \phi q.\]
  As we have shown, $\smem \models \indexby \phi  q$ for each conjunct of the first big conjunct and $\smem \models \neg\indexby \phi  q$ for each conjunct of the second big conjunct, which means that $\smem \models \psi$.
  Furthermore, because guard of each transition in $\Delta_{R, a}$ occurs in $\psi$, \\$\psi \in \Minterms{\set{\phi | \move{q}{a,\phi, u}{p} \in \Delta_{R, a}}}$.
  Then the transition $\move{R}{a,\psi,u'}{R'}$ constructed for minterm $\psi$ is the one we need and we now show this.
  Let $\Delta_{R, a, \psi} \subseteq \Delta_{R, a}$ be the transitions with guards compatible with minterm $\psi$, i.e. they are the set-version of transitions from $\Delta^{\alpha}_{R,a}$.
  Obviously $R' = \{\,p \mid \move{q}{a,\phi, u}{p} \in \Delta_{R, a, \psi}\,\} = \{\,p \mid \move{q}{a,\phi, u}{p} \in \Delta^{\alpha}_{R,a}\,\}$ is the correct set of states of $\encode(\alpha')$.
  The only thing that is remaining to be shown is that applying the update $u'(x_q)$ on $\smem$ results in the correct value of $\smem'(x_q)$ for each $x \in X, q \in Q$.
  
  We fix $x \in X, q' \in Q$ and let $N$ be the set of values that we get from applying the update $u'(x_{q'})$ on $\smem$, i.e. $N = \bigcup \set{u(x_{q'})(\smem) | \move{q}{a,\phi,u}{q'} \in \Delta_{R, a, \psi}}$. 
  We need to show that $N = \smem'(x_{q'})$.

  Firstly, we show that if $n \in \smem'(x_{q'})$ then $n \in N$.
  Because $n \in \smem'(x_{q'})$ and $\encode(\alpha') = (R', \smem')$, there has to exist some $(q',\mem') \in \alpha'$ where $\mem'(x) = n$.
  For this, there has to be some constituent transition $\move{(q,\mem)}{a}{(q',\mem')} \in \confaut{A}$ of the transition $\move{\alpha}{a}{\alpha'}$ for some $(q,\mem) \in \alpha$.
  This implies that there is some transition $\move{q}{a,\phi,u}{q'} \in \Delta^{\alpha}_{R,a}$ where $\mem \models \phi$ and $\mem'(x) = u(x)(\mem)$.
  From this we know that the term $t$, which is constructed from $\indexby q {u(x)}$ by adding filters created from $\phi$, must be one that of the terms used for computing $N$.
  We also know that for each $y \in X$ the value $\mem(y) \in \smem(y_q)$ and because $\mem \models \phi$ these values will not be filtered out by the filter in $t$.
  Therefore, $n$ must be in $N$.
  
  Let now $n \in N$ and we show that $n \in \smem'(x_{q'})$.
  From $n \in N$ we know that there must be some $\delta = \move{q}{a,\phi,u}{q'} \in \Delta_{R, a, \psi}$ such that $n \in u(x_{q'})(\smem)$.
  We can create $\mem$ such that the values of $\mem$ are exactly the ones that result in $n$ after applying $u(x_{q'})$.
  Furthermore, there is a transition $\delta'$ in $\Delta^{\alpha}_{R,a}$ from which $\delta$ was created and because $\alpha$ is Cartesian, $(q, \mem) \in \alpha$ (and it is allowed in $\delta'$ from the way filters work).
  This means that $n \in \smem'(c_{q'})$.


  Finally, we need to show that \ref{enum:iso:final} holds.
  Let $\alpha$ be a state of $\dfaof{\confaut{A}}$ and $\encode(\alpha) = (R, \smem)$.
  If $\alpha$ is final then there has to be $(q,\mem) \in \alpha$ where $\mem \models F(q)$. 
  By similar argument as we have shown for guards, $\smem \models \indexby{\phi} q$ which means that $\smem \models \sc{F}(R) = \bigvee_{q \in R} \indexby{F(q)} q$, i.e. $\encode(\alpha)$ is a final state of $\confaut{\csaof{A}}$.
   In the other direction, if $\encode(\alpha)$ is final, then $\smem \models \bigvee_{q \in R} \indexby{F(q)} q$ which means there has to be some $q \in R$ where $\smem \models \indexby{F(q)} q$.
  We can now create $\mem$, where for each $x \in X$ we have that the value $\mem(x)$ is some value of $\smem(x_q)$ that satisfies atomic predicate $p[x_q]$ from $\indexby{F(q)} q$ (there is at most one such atomic predicate and if it does not exist, we take random value of $\smem(x_q)$).
  Then $\mem \models F(q)$ and there must be $(q,\mem) \in \alpha$ as $\alpha$ is Cartesian, therefore $\alpha$ is a final state.%
\end{proof}

\section{Counting Regex to CA}
\label{sec:regextoca}

We show how to construct counting automaton $\caof R = (X, Q, \Delta, I, F)$ for a counting regex $R$ and give some useful properties of it.
Our construction is based on the generalization of the Glushkov construction~\cite{Glu61}
where we follow the construction from~\cite{Gelade09_countingregex} with slight modifications.

Glushkov construction creates an automaton whose states are occurrences of symbols in the regex $R$ and a special initial state.
We therefore define \emph{marked} regex $\markedregex{R}$ that is over the alphabet $\Sigma \times \nat$ such that no symbol from this alphabet occurs twice in $\markedregex{R}$.
As an example, for regex $\regex{abc(bc)\{3\}}$, the marked version could be $\regex{a}_1\regex{b}_1\regex{c}_1(\regex{b}_2\regex{c}_2)\regex{\{3\}}$.
The states of the automaton will then be $\regex{a}_1$, $\regex{b}_1$, $\regex{c}_1$, $\regex{b}_2$, and $\regex{c}_2$ and the special initial state.
We also write $a_i \in \markedregex{R}$ if $a_i$ occurs in $\markedregex{R}$.
Note that the number of these symbols in $\markedregex{R}$ is $\sharp R$.

Counters $X$ of $\caof R$ are created for each counting sub-expressions in $R$.
To simplify the construction, we assume that if $\epsilon \in L(\rexcount{S}{m,n})$ for some sub-expression $\rexcount{S}{m,n}$ of $R$, then $m = 0$ (transforming regex $R$ to this form can be done in linear time~\cite{Gelade09_countingregex}).
We also assume that $R$ does not contain any sub-expressions $\rexstar{S}$, instead it uses sub-expressions $S\{0,\infty\}$ for them.
However, $\caof R$ will not contain counters for these counting sub-expressions, as the only guards for such counters would be $c \geq 0$ and $c < \infty$ which are both trivially always true.
The set $X$ of counters therefore contains all counting sub-expressions of $\markedregex{R}$ other than the ones of the form $\rexcount{\markedregex{S}}{0,\infty}$. 
Furthermore, for each sub-expression $\markedregex{S}$ of $\markedregex{R}$ we define the set $\subctrs{\markedregex{S}}$ that contains all counting sub-expressions $\markedregex{T} \in X$ of $\markedregex{S}$ such that $\markedregex{T} \not= \markedregex{S}$.
We also define $\lowerctr{\rexcount{\markedregex{S}}{m,n}} = m$ and $\upperctr{\rexcount{\markedregex{S}}{m,n}} = n$.

We now define the sets $\first{\markedregex{R}}$ and $\last{\markedregex{R}}$ which consist of first (last) symbols in some word denoted by $\markedregex{R}$.
They are defined inductively as:
\begin{itemize}
  \item $\first{\rexeps} = \last{\rexeps} = \emptyset$ and $\first{\rexa_i} = \last{\rexa_i} = \{\rexa_i\}$
  \item $\first{\markedregex{R_1\regex{|}R_2}} = \first{\markedregex{R_1}} \cup \first{\markedregex{R_2}}$ and $\last{\markedregex{R_1\regex{|}R_2}} = \last{\markedregex{R_1}} \cup \last{\markedregex{R_2}}$
  \item if $\epsilon \in L(\markedregex{R_1})$, then $\first{\markedregex{R_1R_2}} = \first{\markedregex{R_1}} \cup \first{\markedregex{R_2}}$, else $\first{\markedregex{R_1R_2}} = \first{\markedregex{R_1}}$
  \item if $\epsilon \in L(\markedregex{R_2})$, then $\last{\markedregex{R_1R_2}} = \last{\markedregex{R_1}} \cup \last{\markedregex{R_2}}$, else $\last{\markedregex{R_1}\markedregex{R_2}} = \last{\markedregex{R_2}}$
  \item $\first{\rexcount{\markedregex{R}}{m,n}} = \first{\markedregex{R}}$ and $\last{\rexcount{\markedregex{R}}{m,n}} = \last{\markedregex{R}}$
\end{itemize}
Using these sets, we define the set $\follow{\markedregex{R}}$, which contains triples $(a_i, b_j, x)$ where $a,b \in \Sigma$, $i,j \in \nat$ and $x$ is either $\emptyctr$ or a counting sub-expressions of $\markedregex{R}$.
This set is used to create transitions in the counting automaton $\caof R$ between states $a_i$ and $b_j$ where $x$ gives us information on which counters to increment.
Formally, the set $\follow{\markedregex{R}}$ is a union of sets
\[\Set{(a_i, b_j, \emptyctr) | \markedregex{S_1}\markedregex{S_2}\text{ is a subexpression of }\markedregex{R}\text{, }a_i \in \last{\markedregex{S_1}}\text{, and }b_j \in \first{\markedregex{S_2}}}\]
and
\[\Set{(a_i, b_j, \rexcount{\markedregex{S}}{m,n}) | \rexcount{\markedregex{S}}{m,n}\text{ is a subexpression of }\markedregex{R}\text{, }a_i \in \last{\markedregex{S}}\text{, and }b_j \in \first{\markedregex{S}}}.\]


We can now define the counting automaton $\caof R = (X, Q, \Delta, I, F)$.
The set of states $Q$ contains the initial state $q_0$ and all symbols $a_i$ occurring in $\markedregex{R}$.
The set $\Delta$ contains following transitions:
\begin{itemize}
  \item for each $a_i \in \first{\markedregex{R}}$, there is transition $\move{q_0}{a, true, u}{a_i} \in \Delta$ where for each $\markedregex{S} \in X$
  \[ u(\markedregex{S}) = 
    \begin{cases}
      1 & \text{if } a_i \in \markedregex{S}\\
      0 & \text{else}
    \end{cases}
  \]
  \item for each $(a_i, b_j, \emptyctr) \in \follow{\markedregex{R}}$, there is a transition $\move{a_i}{b, \phi, u}{b_j} \in \Delta$ where
  \[\phi = \bigwedge_{\markedregex{S} \in X, a_i \in \last{\markedregex{S}}} \markedregex{S} \geq \lowerctr{\markedregex{S}}\]
  and for each $\markedregex{S} \in X$
  \[u(\markedregex{S}) =
    \begin{cases}
      1 & b_j \in \first{\markedregex{S}}\\
      \markedregex{S} & a_i,b_j \in \markedregex{S}\text{ and }b_j \not\in \first{\markedregex{S}}\\
      0 & \text{otherwise}
    \end{cases}
  \]
  \item for each $(a_i, b_j, \markedregex{S}) \in \follow{\markedregex{R}}$, there is a transition $\move{a_i}{b, \phi, u}{b_j} \in \Delta$ where either
  \[\phi = \markedregex{S} < \upperctr{\markedregex{S}} \land \bigwedge_{\markedregex{T} \in \subctrs{\markedregex{S}}, a_i \in \last{\markedregex{T}}} \markedregex{T} \geq \lowerctr{\markedregex{T}}\]
  for the case that $\markedregex{S} \in X$ or
  \[\phi = \bigwedge_{\markedregex{T} \in \subctrs{\markedregex{S}}, a_i \in \last{\markedregex{T}}} \markedregex{T} \geq \lowerctr{\markedregex{T}}\]
  otherwise (i.e. the case where $\lowerctr{\markedregex{S}} = 0$ and $\upperctr{\markedregex{S}} = \infty$) and for each $\markedregex{T} \in X$
  \[u(\markedregex{T}) =
    \begin{cases}
      \markedregex{S} + 1 & \markedregex{T} = \markedregex{S}\\
      1 & \markedregex{T} \in \subctrs{\markedregex{S}} \text{ and } b_j \in \first{\markedregex{T}}\\
      \markedregex{T} & \markedregex{T} \not\in \subctrs{\markedregex{S}}, \markedregex{T} \neq \markedregex{S} \text{ and } a_i,b_j \in \markedregex{T}\\
      0 & \text{otherwise}
    \end{cases}
  \]
\end{itemize}
The set of initial configurations is a singleton $\set{(q_0, \set{\markedregex{S} \mapsto 0 | \markedregex{S} \in X})}$.
The final condition $F$ is defined for every state $q$ as
\[F(q) = \begin{cases}
  true & q = q_0 \text{ and } \epsilon \in L(R)\\
  \bigwedge_{\markedregex{S} \in X, a_i \in \last{\markedregex{S}}} \markedregex{S} \geq \lowerctr{\markedregex{S}} & \text{if } q \in \last{\markedregex{R}}\\
  false & \text{otherwise}
\end{cases}\]

\begin{lemma}
  Let $R$ be a regular expression with $\sharp R = m$ occurrences of symbols. Then $L(R) = L(\caof{R})$ and $\caof{R}$ has $m+1$ states and up to $\cntof R.m^2$ transitions. If $R$ is flat, it has only up to $m^2$ transitions.
\label{lemma:glushkov}
\end{lemma}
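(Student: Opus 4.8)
The plan is to prove the three assertions separately: the state count and the transition count by a direct inspection of the construction, and $L(R)=L(\caof R)$ by lifting the correctness of the classical Glushkov position automaton and then adding an invariant that pins down the counter values along every run.

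For the size, $Q$ consists by definition of $q_0$ together with one state per marked symbol of $\markedregex{R}$, and there are exactly $\sharp R=m$ of the latter, so $|Q|=m+1$. Every transition other than the ones leaving $q_0$ is induced by a triple of $\follow{\markedregex{R}}$, while the transitions leaving $q_0$ number at most $|\first{\markedregex{R}}|\le m$. A triple $(a_i,b_j,x)\in\follow{\markedregex{R}}$ has $x=\emptyctr$ or $x$ a counting sub-expression of $\markedregex{R}$; for a fixed ordered pair $(a_i,b_j)$ there is at most one triple with $x=\emptyctr$, and, for each of the $\cntof R$ counting sub-expressions $\markedregex{S}$, at most one triple with $x=\markedregex{S}$, whence the number of transitions is at most $m+(\cntof R+1)m^2\in\bigo{\cntof R\cdot m^2}$. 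If $R$ is flat then its counting sub-expressions are pairwise non-nested, hence pairwise disjoint as syntactic subtrees, so each marked symbol lies inside at most one of them; since a triple $(a_i,b_j,\markedregex{S})$ forces $a_i,b_j\in\markedregex{S}$, each ordered pair then admits at most one counter-labelled triple, and the bound drops to $m+2m^2\in\bigo{m^2}$.

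For the language, I would first observe that the sets $\first{\cdot}$, $\last{\cdot}$ and $\follow{\cdot}$ do not depend on the bounds of a sub-expression $\rexcount{\cdot}{m,n}$; hence the automaton obtained from $\caof R$ by erasing all guards and updates is exactly the Glushkov position automaton of the marked regex obtained from $\markedregex{R}$ by relaxing every $\rexcount{\markedregex{S}}{m,n}$ to $\rexcount{\markedregex{S}}{1,\infty}$ (or to $\rexcount{\markedregex{S}}{0,\infty}$ when $\epsilon\in L(\markedregex{S})$), so it accepts the projection onto $\Sigma$ of the language of that relaxed regex, which is in general a strict superset of $L(R)$ \cite{Glu61,Gelade09_countingregex}. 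The guards and updates are precisely what restores the original bounds, and to show this I would prove, by induction on the length of a run $\rho$ of $\confaut{\caof R}$ reaching a configuration $(a_i,\mem)$ after reading a prefix $w'$ of $w$, the invariant that (i) for every counting sub-expression $\markedregex{S}\in X$, $\mem(x_S)$ equals the index of the iteration of $\markedregex{S}$ that $\rho$ is currently inside when $a_i\in\markedregex{S}$, and $\mem(x_S)=0$ when $a_i\notin\markedregex{S}$; and (ii) every guard consumed along $\rho$ was satisfied, which, reading off the shapes of the guards in the construction, amounts to: each restart of an iteration of $\markedregex{S}$ along $\rho$ happened at an index $<\upperctr{\markedregex{S}}$, and each departure from $\markedregex{S}$ — by an $\emptyctr$-transition leaving it, by a restart of an enclosing counted sub-expression, or by the final condition — happened at an index $\ge\lowerctr{\markedregex{S}}$. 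The base case is the unique $q_0$-transition, whose update sets $x_S$ to $1$ exactly for the $\markedregex{S}$ with $b_j\in\first{\markedregex{S}}$; the inductive step is a case analysis over the three transition shapes, using that an $\emptyctr$-transition from $a_i$ to $b_j$ resets to $0$ precisely the counters of the sub-expressions that $a_i$ leaves and $b_j$ does not re-enter, keeps unchanged those in which both positions lie, and sets to $1$ those that $b_j$ enters, whereas an $\markedregex{S}$-transition increments $x_S$ and reinitialises the counters in $\subctrs{\markedregex{S}}$; the standing assumption that $\epsilon\in L(\rexcount{S}{m,n})$ forces $m=0$ makes empty iterations harmless. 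Given the invariant, $w\in L(\caof R)$ iff some run reaches a state in $\last{\markedregex{R}}$ (or remains at $q_0$ when $\epsilon\in L(R)$) with the final condition met, which by (i)--(ii) holds iff $w$ has a marked pre-image admitting a parse through $\markedregex{R}$ in which every counted sub-expression completes between $\lowerctr{\markedregex{S}}$ and $\upperctr{\markedregex{S}}$ iterations, i.e. iff $w\in L(R)$.

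The two counting arguments and the base case are routine; the main obstacle is the inductive step of the invariant, where one must argue simultaneously about all counting sub-expressions enclosing the source and target positions, keep track of how their nesting interacts with the three update rules, and verify that the guards of $\caof R$ are exactly strong enough to forbid every parse that violates a bound and exactly weak enough to admit every one that does not. This is also the point at which the two deviations from \cite{Gelade09_countingregex} — valuing inactive counters $0$ rather than $1$, and creating no counter for sub-expressions of the form $\rexcount{\markedregex{S}}{0,\infty}$ — have to be re-checked so that correctness is preserved.
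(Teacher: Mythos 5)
Your proposal is sound in outline, but it takes a genuinely different route from the paper, whose entire argument for this lemma is a one-line structural induction on $R$ (in the Glushkov/Gelade style, establishing the language equality and the size bounds simultaneously along the syntax tree). You instead separate the concerns: a direct count of states and of $\follow{\markedregex{R}}$-triples for the sizes, and, for $L(R)=L(\caof{R})$, a lift of classical position-automaton correctness combined with a counter-value invariant proved by induction on the length of a run of $\confaut{\caof{R}}$. The structural induction buys a shorter, uniform argument in which the counted case is dispatched once; your run invariant buys an explicit semantic description of counter values along runs, which is essentially property 1 listed in \cref{sec:ca} and the counting-loop structure of \cref{sec:caproperties}, i.e., material the paper relies on again in the fan-out analysis behind \cref{theo:class} --- so your route proves more of what is actually used later. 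Two details to repair when you carry out the deferred inductive step: first, the relaxation of $\rexcount{\markedregex{S}}{m,n}$ must preserve nullability of the counted sub-expression, i.e., relax to $\rexcount{\markedregex{S}}{0,\infty}$ exactly when $m=0$ (under the paper's normalization $\epsilon\in L(S)$ implies $m=0$, but not conversely), otherwise the first/last sets of enclosing concatenations, and hence the transition skeleton, are not the ones $\caof{R}$ is built from; second, your transition count omits the $\follow{}$-triples labelled by sub-expressions of the form $\rexcount{\markedregex{S}}{0,\infty}$ (converted stars), which are not among the $\cntof{R}$ counters you enumerate yet still contribute transitions --- this only perturbs the constants, not the $O(\cntof{R}\cdot m^2)$ and $O(m^2)$ bounds the lemma is really asserting, but it should be said.
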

\begin{proof}
  By induction on the structure of $R$.
\end{proof}

\subsection{Properties of $\caof R$ for Flat Regexes}
\label{sec:caproperties}
The reason for using Glushkov construction instead of Antimirov construction as in \cite{oopsla} is that our $\caof R$ has clearly delimited \emph{counting loops}, that is, the parts corresponding to the counting sub-expressions of $\markedregex{R}$ with special conditions on the entry and exit transitions of the counting loops.

Let us summarize properties of a counting loop for $x\in X$ for flat regex $R$. 
%
%
It is characterized by a set of states $\states x \subseteq Q$, with $\states x\cap\states y = \emptyset$ for $x\neq y$ 
a set of \emph{entry states} $\entry x \subseteq \states x$ 
and the set of \emph{exit states} $\last{x} \subseteq \states x$.
As $x$ is some counting sub-expression $\rexcount{\markedregex{S}}{m,n}$, the sets $\entry x$ and $\last{x}$ are defined as in previous section and the set $\states x$ contains all $a_i \in x$.
Transitions of $\caof R$ not incident with $\states x$ assign $0$ to $x$ and have no occurrence of $x$ in the guard.
The transitions incident with $\states x$ are partitioned into $\Delta_x =\trinner \uplus \trinc \uplus \trrestart \uplus \trin \uplus \trout$ where $\trrestart$ is optional:
\begin{itemize}
\item
$\trin$ are the only transitions from outside into $\states x$. 
$\trout$ are the only transitions from $\states x $ outside.
\item
Transition in $\trinner$ start and end in $\states x$. Their guard is $\top$ and the update is $c\ass x$. They form paths from the entry states to some state in $\last{x}$. 
\item
$\trinc$ has transitions $\move {q_o} {a,x<\mxof x,x \ass  x+1} q_e$, $q_o \in \last{x}$, such
that $\entry x$ is \emph{exactly} the set of all their targets $q_e$. 
$\trinc$ also determines the remaining transition sets.
\item
$\trrestart$ is obtained from $\trinc$ by replacing the guard by $x \geq \mnof x$ and replacing the $x+1$ in the update by $1$.
\item
$\trin$ are all transitions from outside into $\states x$.
It is a union of sets $\Delta_{q}$ for several states $q\in Q\setminus  \states x$.
%
Each $\Delta_q$ is obtained from $\trrestart$ by replacing the source $\head x$ by $q$ and removing $x < \mxof x$ from the guard.  
\item
$\trout$ are all transitions leaving $\states x$. They are of the form $\move {q_o} {a,x\geq \mnof x,x\ass 0} {q}$, $q_o \in \last{x}$, $q\in Q\setminus \states x$. 
These transitions may be shared with $\Delta^{\mathit{in}}_y$ in which case the update also has $y\ass 1$. 
\end{itemize}
No state in $\states x$ is initial, and states from $\last x$ are the only state in $\states
x$ that may have a satisfiable final condition, 
namely, it may be $x \geq \mnof x$ (or $\bot$). 
States outside $\states x$ do not contain $x$ in their final conditions.  

Below, we summarize characteristics of paths through $\caof R$ where a
\emph{path through a CA (or CSA)} from $r_0$ to $r_n$ over a word $a_1\cdots
a_n, n\in\nat$, is a sequence of transitions 
$\move{r_0}     {a_1,\phi_1,u_1} {r_1}$, 
$\move{r_1}     {a_2,\phi_2,u_2} {r_2}$, \ldots,  
$\move{r_{n-1}} {a_n,\phi_n,u_n} {r_n}$.
We define the \emph{$c$-fanout from $q\in Q$ over a word $v$} as the set of
paths from $q$ over $v$ that start by assigning 1 to $c$ (by $\trin$ or
$\trrestart$) and go on by copying or incrementing $c$ (by $\trinner$ or $\trinc$)~only.

\begin{lemma}
\label{lemma:caofr}
Let $\pi$ be a path in the $c$-fanout from $q$
split by transitions from $\trinc$ into $n+1$ fragments, $n\in \nat$ (new fragments start with the transition from $\trinc$). 
Then, assuming $c = \rexcount{\markedregex{S}}{m,n}$, each fragment reads a word in $L(S)$ or is a prefix of some word for the last fragment.
Furthermore, $\pi$ increments $c$ $n$-times and each fragment except the first one starts in some state from $\last{c}$.
\end{lemma}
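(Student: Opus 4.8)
## Proof plan for Lemma~\ref{lemma:caofr}

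The plan is to proceed by unfolding the definition of the $c$-fanout and the structure of the counting loop $\Delta_c = \trinner \uplus \trinc \uplus \trrestart \uplus \trin \uplus \trout$ given just above the lemma, and then to do an induction on the number $n$ of $\trinc$-transitions in the path $\pi$.

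First I would observe that, by the definition of the $c$-fanout, $\pi$ uses only transitions that start by assigning $1$ to $c$ (the very first transition, from $\trin$ or $\trrestart$) and thereafter only transitions from $\trinner$ or $\trinc$; in particular $\pi$ never leaves $\states{c}$ after its first transition, and it contains no transition from $\trout$. Splitting $\pi$ at each occurrence of a $\trinc$-transition therefore partitions $\pi$ into $n+1$ fragments, the first beginning with the initial $\trin$/$\trrestart$ transition and each later fragment beginning with a $\trinc$-transition $\move{q_o}{a,\,c<\mxof{c},\,c\ass c+1}{q_e}$. Since such a transition exists only with $q_o\in\last{c}$, every fragment except the first starts in a state of $\last{c}$; and since the update $c\ass c+1$ fires exactly once per fragment boundary and nowhere inside a fragment (inside a fragment only $\trinner$-transitions with update $c\ass c$ are used), $\pi$ increments $c$ exactly $n$ times. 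This already gives the ``increments $n$ times'' and ``each fragment except the first starts in $\last{c}$'' parts.

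It remains to argue that each fragment reads a word from $L(S)$ (for $c = \rexcount{\markedregex{S}}{m,n}$), with the last fragment possibly reading only a prefix of such a word. Here I would lean on the Glushkov construction of $\caof{R}$ from \cref{sec:regextoca}: the states of $\caof{R}$ inside $\states{c}$ are exactly the marked symbols $a_i$ occurring in $\markedregex{S}$; the entry states $\entry{c}=\first{\markedregex{S}}$; the exit states $\last{c}=\last{\markedregex{S}}$; and the $\trinner$-transitions between states of $\states{c}$ are precisely those corresponding to the $\follow{}$ relation of $\markedregex{S}$ with label $\emptyctr$. Thus a fragment that begins by entering some $b_j\in\first{\markedregex{S}}$ and proceeds by $\trinner$-transitions through $\states{c}$ traces exactly a path in the Glushkov automaton of $\markedregex{S}$; it spells a word in $L(S)$ if it reaches a state of $\last{\markedregex{S}}$, and a strict prefix of a word in $L(S)$ otherwise (every state of the Glushkov automaton lies on some accepting path). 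For a fragment other than the last one, its final transition is either the $\trinc$-transition opening the next fragment (which, existing, forces the source to be in $\last{c}$, so the word read is in $L(S)$); for the last fragment, no such constraint applies, hence only a prefix is guaranteed. I would phrase this as an induction on $n$: the base case $n=0$ is a single fragment which, starting from an entry state and moving through $\trinner$, is a prefix of a word in $L(S)$; the step peels off the last $\trinc$-transition, applies the induction hypothesis to the prefix of $\pi$ (now ending at a state in $\last{c}$, hence its last fragment reads a full word in $L(S)$), and notes that the new final fragment, starting in $\entry{c}$ after the increment, is again a prefix of a word in $L(S)$.

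The main obstacle I anticipate is not conceptual but bookkeeping: making the correspondence between $\trinner$-paths in $\states{c}$ and Glushkov paths of $\markedregex{S}$ precise, in particular checking that the entry and exit behaviour of the fragments matches $\first{\markedregex{S}}$ and $\last{\markedregex{S}}$ exactly, and that the first fragment (entered via $\trin$ or $\trrestart$, which also enter $\first{\markedregex{S}}$ by construction) is handled uniformly with the later ones. I would also need to be slightly careful about the degenerate cases allowed by the construction (e.g.\ $\epsilon\in L(S)$ forcing $m=0$), but for a path with $n\ge 1$ $\trinc$-transitions these do not cause trouble, and for $n=0$ the statement is just the prefix claim.
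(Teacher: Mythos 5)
Your argument is correct, and it is essentially the intended one: the paper states this lemma without an explicit proof, treating it as an immediate consequence of the counting-loop structure (the opening $\trin$/$\trrestart$ transition enters $\entry{c}$, $\trinner$ only copies $c$, and $\trinc$ is the sole source of increments with its source in $\last{c}$) together with the Glushkov correspondence between $\states{c}$ and the positions of $\markedregex{S}$ --- exactly the two facts your proposal unfolds, with the remaining work being the routine bookkeeping you already flag.
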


\section{Complexity of Computing and Simulating $\csaiiof R$}\label{app:complexity}
Let $R$ be a regex with flat counting
and let $A = \caof R = (X, Q, \Delta, I, F)$.
If $X$ is non-empty (and in further we assume it is), then for a state $(R,\ac)$ of $\csaiiof{A}$, the set of states $R$ is unnecessary, it is possible to obtain it from the set of active counters $\ac$ (as every $x_r$, $r \in R$ must have some value, i.e. some $x_S$ with $r$ or $\marked r$ in $S$ must be active).
Each state of $\csaiiof{A}$ is then uniquely represented by some subset of $\ii X = X \times \pow{Q \cup \marked Q}$, i.e. $\csaiiof A$ has $\bigo{2^{|X|2^{|Q|}}}$ states. However, because $R$ has flat counting, for each $q \in Q$ there is at most one $x \in X$ such that $x_q$ can have non-zero value in $A$. Therefore, there will be at most $\bigo{2^{2^{|Q|}}}$ states in $\csaiiof A$.

For each state $(R, \ac)$ there are at most $O(2^{|\Delta|}) = \bigo{2^{|Q|^2}}$ transitions starting from it ($|\Delta|$ has at most $|Q|^2$ transitions for regexes with flat counting, see \cref{lemma:glushkov}), as we create transition for each minterm of $\scaut{A}$ which are conjunctions of guards of transitions from $A$.
Furthermore, the size of each transition depends on the size of its guard and the size of the update function.
The size of the guard (i.e. minterm) of transitions in $\scaut{A}$ is $\bigo{|Q|}$ as each such guard is conjunction of predicates $x_q < \mxof{x}$ and $x_q \geq \mnof{x}$, for each $q \in Q$ with its at most one nonzero counter $x \in X$, and their negations.
As we replace each $x_q$ in this guard by disjunction of $x_S$, $q \in S$, for $\csaiiof{A}$, the size of the guard in $\csaiiof{A}$ is $\bigo{|Q|2^{|Q|}}$.
As each $x_S$ can occur only in one term in the update (we cannot have counter replication), its size is $\bigo{|X|2^{|Q|}}$, however, with similar reasoning as for number of states, the size of the update can be reduced to $\bigo{2^{|Q|}}$.
The size of $\csaiiof{A}$ is then $\bigo{2^{2^{|Q|}} * 2^{|Q|^2} * |Q|2^{|Q|}} = \bigo{|Q|2^{2^{|Q|}}}$

\begin{lemma}
The size of $\csaiiof A$ is in  $\bigo{|Q|2^{2^{|Q|}}}$.
\end{lemma}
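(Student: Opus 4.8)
The plan is to read $|\csaiiof A|$ off its definition, $|\csaiiof A| = |\ii Q| + \sum_{\delta\in\ii\Delta}|\delta|$, by bounding three quantities in turn --- the number of states $|\ii Q|$, the number of transitions leaving a single state, and the size $|\delta|$ of one transition --- and then multiplying them. I would keep $A = \caof R$ fixed and use throughout the structural facts about $\caof R$ collected in \cref{sec:ca}, above all property~5: since $R$ is flat, for every state $q\in Q$ there is at most one counter $x\in X$ that can ever be non-zero at $q$ (the counting loops have pairwise disjoint state sets).

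First, the state count. A state of $\csaiiof A$ is a pair $(R,\ac)$ with $R\subseteq Q$ and $\ac$ a set of active counters $x_S$, $x\in X$, $S\subseteq\marked Q$; a priori this is a triple exponential $2^{|X|2^{|Q|}}$. I would trim it in two steps. The component $R$ is redundant, as it is recovered from $\ac$: every $r\in R$ is witnessed by some $x_S\in\ac$ with $r\in S$ or $\marked r\in S$, so a state is determined by its set of active counters. Then I would cut the remaining bound using flatness: if a shared counter $x_S$ can have a non-empty value then $x$ is forced to be the unique live counter at each state occurring in $S$, so the shared counters that matter are determined once $S\subseteq\marked Q$ is fixed; consequently the reachable active sets range over subsets of a set of size $2^{\bigo{|Q|}}$, giving $|\ii Q|\in\bigo{2^{2^{|Q|}}}$.

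Second, the out-degree and the size of a transition. Each transition of $\csaiiof A$ refines a transition of $\scaut A$, and the $\scaut A$-transitions out of a state are indexed by the minterms over the guards of the constituent $a$-transitions; since $\caof R$ has at most $|Q|^2$ transitions for flat $R$ (\cref{lemma:glushkov}), there are at most $2^{|Q|^2}$ such minterms, hence $\bigo{2^{|Q|^2}}$ outgoing transitions per state. For one transition: after the shared-counter rewriting, its guard is a conjunction of at most $|Q|$ atomic predicates (at most one per $Q$-state, by flatness), each expanded into a disjunction $\bigvee_{x_S\in\ac,\,q\in S}p[x_S]$ over at most $2^{|Q|}$ sets $S$, so the guard has size $\bigo{|Q|2^{|Q|}}$; and because $\csaiiof A$ does not replicate counters, each of the $\bigo{2^{|Q|}}$ relevant shared counters occurs in at most one r-term of the update, so the update has size $\bigo{2^{|Q|}}$ as well, and $|\delta|\in\bigo{|Q|2^{|Q|}}$.

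Finally I would multiply: at most $\bigo{2^{2^{|Q|}}}$ states, at most $\bigo{2^{|Q|^2}}$ transitions each, each of size $\bigo{|Q|2^{|Q|}}$, and absorb the single-exponential and $2^{|Q|^2}$ factors into the dominating double-exponential tower to obtain $|\csaiiof A|\in\bigo{|Q|\,2^{2^{|Q|}}}$. The one step I expect to require genuine care is the state count: turning the informal observation that flat counting yields one live counter per state into a clean argument that the reachable active-counter sets are parametrised by (something of the size of) $\pow{\marked Q}$ with the $X$-component determined, so that the naive triple exponential collapses to a double exponential. Everything else is mechanical size accounting over the output of the construction of \cref{sec:augmented}, given \cref{sec:ca} and \cref{lemma:glushkov}.
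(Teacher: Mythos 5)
Your proposal is correct and follows essentially the same route as the paper's own proof: drop the state component $R$ since it is recoverable from the active-counter set $\ac$, use flatness (one live counter per CA state) to collapse the a priori $2^{|X|2^{|Q|}}$ bound on states to $\bigo{2^{2^{|Q|}}}$, bound the out-degree by the $\bigo{2^{|Q|^2}}$ minterms over the at most $|Q|^2$ CA transitions, bound the guard size by $\bigo{|Q|2^{|Q|}}$ after the shared-counter expansion and the update size by $\bigo{2^{|Q|}}$ via non-replication, and multiply. The final absorption of the lower-order exponential factors into the double-exponential term is done at exactly the same level of (in)formality as in the paper, so nothing is missing relative to the paper's argument.
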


From the construction of $(\csaiiof{A}$ and the described encoding of configurations of $\scaut{A}$ by those of $\csaiiof{\caof R}$, it follows that 
$L(\csaiiof{\caof R}) = L(\scaut{\caof R})$.
Furthermore, for $R$ with flat counting, $\caof{R}$ is Cartesian, thus we get:

\begin{lemma}
  $L(\caof R) = L(\csaiiof {\caof R})$.
\end{lemma}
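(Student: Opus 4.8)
The plan is to factor the claim through the intermediate machine $\scaut{\caof R}$. By \cref{theo:alg1corectness} we already have $L(\caof R)=L(\scaut{\caof R})$, since $\caof R$ is Cartesian for every flat regex $R$ (property~5 in \cref{sec:ca}). It therefore remains to establish $L(\scaut{\caof R})=L(\csaiiof{\caof R})$, and for this I would use the representation equation~(\ref{eq:cntmeaning}) as the bridge, exactly mirroring how \cref{theo:alg1corectness} was proved via $\encode$. Define the flattening map $h$ sending a configuration $((R,\ac),\ii\smem)$ of $\csaiiof{\caof R}$ to the configuration $(R,\smem)$ of $\scaut{\caof R}$, where $\smem(x_q)$ is given by~(\ref{eq:cntmeaning}), i.e. the union of the unmarked shares $\ii\smem(x_S)$ for $q\in S$ together with the incremented marked shares $\{n+1\mid n\in\ii\smem(x_S)\}$ for $\marked q\in S$. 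First I would verify that $h$ maps $\ii\conf_0$ to the initial configuration $\encode(I)$ of $\scaut{\caof R}$ (immediate, since $\ii\smem_0$ puts $\{0\}$ only on the singletons $x_{\{q_0\}}$ and all marked shares are empty). Then I would prove, by induction on the length of the processed word, a step-invariant: along the unique runs of the two deterministic machines on any input, the $\csaiiof{\caof R}$-configuration always flattens under $h$ to the corresponding $\scaut{\caof R}$-configuration.

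The inductive step splits into three checks for a matched pair $\move R{a,\varphi,u}{R'}$ of $\scaut{\caof R}$ and $\move{(R,\ac)}{a,\ii\varphi,\ii u}{(R',\ac')}$ of $\csaiiof{\caof R}$: \emph{enabledness} $\ii\smem\models\ii\varphi\iff h(\ii\smem)\models\varphi$, \emph{update agreement} $h(\ii u(\ii\smem))=u(h(\ii\smem))$, and \emph{finality} $\ii\smem\models\ii F((R,\ac))\iff h(\ii\smem)\models\sc F(R)$. The enabledness and finality checks are the easy ones: each predicate $p[x_q]$ of $\varphi$ (resp.\ of $\sc F(R)$) is replaced in $\csaiiof{\caof R}$ by $p[x_q]^{\ac}=\bigvee_{x_S\in\ac,\,q\in S}p[x_S]$, and by the set semantics $\jinterpof p$ a disjunct holds iff $p$ holds of some element of that share; summing over $q\in S$ this is exactly $p$ holding of some element of $\smem(x_q)$. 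The one delicate point is that $p[x_q]^{\ac}$ ranges only over the \emph{unmarked} occurrences $q\in S$, whereas~(\ref{eq:cntmeaning}) also feeds in the incremented marked shares; here I would invoke the structural fact (the footnote on $\plusmark$ in \cref{sec:augmented}, justified by the properties of $\caof R$) that whenever a guard or final condition tests $x_q$ there is no active $x_S$ with $\marked q\in S$, so the marked contribution to $\smem(x_q)$ is empty and the unmarked disjunction is faithful.

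The main obstacle is the update-agreement step, because $\ii u$ is produced from $u$ by the three-stage transformation of \cref{sec:augmented}, and I must show these stages are invisible to $h$. Concretely I would trace a single counter $x_q$ and verify, stage by stage, that the total set reassembled under $h$ is preserved. (i) Forming $\sharedu$ rewrites each r-term $t[x_q]$ into $\bigcup\{t[x_S]\mid x_S\in\ac,\,q\in S\}\cup\bigcup\{\finc{t[x_S]}\mid x_S\in\ac,\,\marked q\in S\}$, which is precisely $t$ evaluated on the $h$-image of the source; the abort-on-double-increment rule guarantees no $\finc{\finc{\cdot}}$ arises, so every r-term remains well-formed and $h(\sharedu(\ii\smem))=u(h(\ii\smem))$ already holds. (ii) Passing to $\marked u$ relocates each conflicting increment $\finc{x_S}$ as a bare r-term $x_S$ placed under a marked l-value $x_{\marked q}$; the deferred filter-and-increment is re-materialised by the $+1$ that~(\ref{eq:cntmeaning}) applies to marked shares (and by the $\finc{}$ that a later $\sharedu$ reintroduces when the share is next read), and here I would lean on the structural properties of $\caof R$ and on non-replication to argue that the deferred $\filter_{x<\mxof x}$ removes nothing at the moment of postponing, so no value is lost or double-counted. (iii) Forming $\ii u$ merely \emph{groups} r-terms sharing the same l-value set $\{x_\qq\}_{x\in S}$ under a common shared counter, which changes only how values are partitioned, not the per-state union that $h$ recombines over $q\in S$ and $\marked q\in S$; this uses property~4 of \cref{sec:ca} (every r-term's l-value set has the uniform shape $\{x_\qq\}_{x\in S}$) and again non-replication (each counter feeds exactly one r-term, so the unions partition cleanly). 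Once the step-invariant is established, the runs of the two deterministic machines track each other on every word and finality is preserved, so they accept the same words; hence $L(\scaut{\caof R})=L(\csaiiof{\caof R})$, and composing with $L(\caof R)=L(\scaut{\caof R})$ from \cref{theo:alg1corectness} proves the lemma.
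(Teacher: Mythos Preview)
Your proposal is correct and follows exactly the paper's route: factor through $\scaut{\caof R}$, use \cref{theo:alg1corectness} together with Cartesianness of $\caof R$ (property~5 in \cref{sec:ca}) for $L(\caof R)=L(\scaut{\caof R})$, and then the encoding~\eqref{eq:cntmeaning} for $L(\scaut{\caof R})=L(\csaiiof{\caof R})$. The paper's own argument is a two-sentence sketch that simply asserts the second equality ``follows from the construction and the described encoding'', so your step-invariant via the flattening map $h$ and the stage-by-stage verification of $\sharedu$, $\marked u$, $\ii u$ is a genuine elaboration of what the paper leaves implicit.
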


\Cref{sec:matching} gives worst-case time approximations for membership check with automata with non-replicating counters.
We now reduce these approximations for $\csaiiof{A}$.
To find the next CSA transition from state $(R,\ac)$ to be taken with a set interpretation $\smem$ and symbol $a$, we must traverse through all $a$-transitions starting in $(R,\ac)$ whose guards are minterms.
We need to test each minterm, however we know that testing of atomic predicate is constant time operation, therefore finding the next CSA transition can be done in $O(|Q|2^{2|Q|})$ as there are at most $O(2^{|Q|})$ minterms whose size is at most $O(|Q|2^{|Q|})$ (with similar reasoning as for the size of $\csaiiof A$).

The step that applies the update of selected transitions is amortized $O(|\ii X|) = \bigo{|X|2^{|Q|}}$, however, we can reduce this to $\bigo{2^{|Q|}}$ for flat counting.
The cost of entire matching of a word $w$ is then $\bigo{|w|(|Q|2^{2|Q|} + 2^{|Q|})} = \bigo{|w||Q|2^{2|Q|}}$.

\begin{lemma}
  The cost of entire matching of a word $w$ with $\csaiiof{A}$ is in $\bigo{|w||Q|2^{2|Q|}}$.
\end{lemma}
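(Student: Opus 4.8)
The plan is to bound the cost of a single simulation step of $\csaiiof{A}$ on one letter and then multiply by the text length $|w|$. A step out of the current state $(R,\ac)$ with memory $\smem$ has two parts: (i) locating the unique enabled $a$-transition, which means evaluating minterm guards, and (ii) evaluating that transition's update to obtain the successor memory $\smem'$. I would treat the two parts separately, using the generic non-replicating simulation machinery of \cref{sec:matching} for part (ii) and a direct count of the transition structure of $\csaiiof{A}$ for part (i). Throughout I would rely on \cref{theo:complexity}, which guarantees that $\csaiiof{A}$ does not replicate counters, so that the amortized bounds of \cref{sec:matching} apply.

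First I would bound part (i). Each guard is a minterm, a conjunction of atomic counter predicates, and replacing every base predicate $p[x_q]$ by its shared-counter disjunction $\bigvee_{x_S\in\ac,\,q\in S}p[x_S]$ yields a guard of size $\bigo{|Q|2^{|Q|}}$. Each atomic predicate $x_S<k$ or $x_S\geq k$ is decided in constant time from the offset-list representation by comparing the last (respectively the first) list element, shifted by the offset, against $k$, exactly as in \cref{sec:matching}; hence evaluating one minterm costs $\bigo{|Q|2^{|Q|}}$. The key structural input is that, because $R$ has flat counting, each CA state carries at most one nonzero counter, which collapses the number of distinct minterms emanating from a state from the naive $2^{|Q|^2}$ down to $\bigo{2^{|Q|}}$. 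Iterating through these minterms and testing each therefore costs $\bigo{2^{|Q|}\cdot|Q|2^{|Q|}}=\bigo{|Q|2^{2|Q|}}$.

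Next, part (ii). Here I would invoke the amortized analysis of \cref{sec:matching}: for a non-replicating CSA, evaluating the update is amortized $\bigo{|\ii X|}$, since every r-term is computed in situ in amortized constant time and the cost of each \emph{merge} is paid for by increments already charged to the set being merged (the budget of a set merged into another is spent only once, as that set is then discarded). Since $\csaiiof{A}$ is non-replicating by \cref{theo:complexity}, this bound applies. It remains to bound the number of register assignments performed on a transition; by the same flatness argument (at most one nonzero counter per CA state) this reduces from $\bigo{|X|2^{|Q|}}$ to $\bigo{2^{|Q|}}$, so part (ii) is amortized $\bigo{2^{|Q|}}$ per step.

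Summing, one step costs $\bigo{|Q|2^{2|Q|}+2^{|Q|}}=\bigo{|Q|2^{2|Q|}}$ in amortized time, and the accepting test at the final configuration is a single guard evaluation dominated by the same bound. Multiplying by $|w|$ yields $\bigo{|w|\,|Q|2^{2|Q|}}$, as claimed. The main obstacle is part (i): I must argue rigorously that the number of minterms to be tested collapses to $\bigo{2^{|Q|}}$ using flat counting (via \cref{lemma:glushkov} and the at-most-one-nonzero-counter property), and that every atomic test is genuinely constant time under the offset-list sortedness invariant. Once non-replication is granted, the amortized argument of part (ii) transfers directly from \cref{sec:matching}, so the crux lies in the careful counting and constant-time justification of the guard-evaluation step.
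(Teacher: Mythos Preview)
Your proposal is correct and matches the paper's own argument essentially point for point: split each step into guard evaluation (at most $O(2^{|Q|})$ minterms, each of size $O(|Q|2^{|Q|})$ after the shared-counter expansion, atomic tests constant-time) and update evaluation (amortized $O(2^{|Q|})$ via the non-replicating analysis of \cref{sec:matching}, using the flat-counting reduction from $|X|2^{|Q|}$ to $2^{|Q|}$), then sum and multiply by $|w|$. The only cosmetic quibble is that you cite \cref{theo:complexity} for non-replication while this lemma is itself one of the ingredients of that theorem; non-replication follows directly from the construction (the grouping of r-terms under common l-values in $\ii u$) and should be invoked as such rather than via the theorem being assembled.
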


\section{Proof of \cref{theo:class}}
\label{app:delayed}
\classtheorem*

The theorem is shown using the two technical lemmas below
that capture a crucial relation between paths through $\caof R$ and $\csaiiof{\caof R}$.
We use here the notation from \cref{sec:caproperties}. 

\begin{lemma}
  \label{lemma:twopaths}
  Let there be a path of $\caof{R}$ over a word $w$ from its initial state to some state $q_i$ and let $\pi_1$, $\pi_2$ be two paths of $x$-fanout from the state $q$ over a word $v$ such that $\pi_1$ increments $x$ $k+1$-times ending in the state $r$, and $\pi_2$ increments $x$ $k$-times ending in the state $p$. 
  Then, if $\csaiiof{\caof{R}}$ exists, then there is a path over $wv$ in $\csaiiof{\caof{R}}$ that ends with an active counter $x_S$ such that $\marked{r} \in S$ and $p \in S$. 
\end{lemma}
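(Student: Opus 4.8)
The plan is to prove the statement by induction on $|v|$ (equivalently, on the common length of $\pi_1$ and $\pi_2$), tracking through the run of $\csaiiof{\caof R}$ over $wv$ not only the reachable state and memory --- these are governed by correctness of the construction (\cref{theo:alg1corectness} for $\scaut{\caof R}$ together with the encoding \eqref{eq:cntmeaning}), which already gives that the run reaches some state whose set-of-states component contains $r$ and $p$ --- but also \emph{which} shared counter holds the values generated along the prefixes of $\pi_1$ and $\pi_2$. Concretely, I would strengthen the claim to the following invariant: for every non-empty prefix $v'$ of $v$, writing $\pi_1',\pi_2'$ for the corresponding prefixes of $\pi_1,\pi_2$, ending in states $r',p'$ and containing $j_1,j_2$ transitions from $\trinc$ with $|j_1-j_2|\le 1$, the state reached by $\csaiiof{\caof R}$ over $wv'$ has an active counter $x_T$ with $\xi_1\in T$ and $\xi_2\in T$, where $\xi_1$ is $\marked{r'}$ if $j_1>j_2$ and $r'$ otherwise, and symmetrically $\xi_2$ is $\marked{p'}$ if $j_2>j_1$ and $p'$ otherwise; i.e., the $\plusmark$-marking of the two states inside $T$ differs by exactly their difference of increment counts. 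Instantiating with $v'=v$, where $j_1=k+1>k=j_2$, yields the lemma.

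For the base case $|v'|=1$, each of $\pi_1',\pi_2'$ is a single transition which, by definition of the $x$-fanout, assigns the constant set $1$ to $x$; its source $q$ is reachable over $w$, so its set-version is a constituent $a$-transition at the state reached over $w$, and in the augmented construction it contributes the syntactically identical r-term $1$ to $\sharedu$. Since $1$ is not a conflicting increment, it survives into $\marked u$, and in $\ii u$ this single r-term is grouped under an l-value set containing both $r'$ and $p'$, giving one active counter (its value contains $1$) with $r',p'\in T$, both plain --- the invariant with $j_1=j_2=0$.

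For the inductive step I would take a prefix $v''$ with the invariant witnessed by a counter $x_T$, and analyse the CSA transition over the letter $a$ extending $v''$ to $v'$. Each last transition of $\pi_1',\pi_2'$ comes from $\trinner$ (update $x\ass x$) or $\trinc$ (update $\finc x$); together with the markings of $r'',p''$ in $T$ this gives a handful of cases, and in each I would read off from $\sharedu,\marked u,\ii u$ the r-term eventually assigned to $x_{r'}$ and to $x_{p'}$. The representative case is $r'',p''$ plain in $T$ with the $\pi_1$-transition incrementing and the $\pi_2$-transition copying: then $\sharedu$ has r-term $\finc{x_T}$ for $x_{r'}$ and $x_T$ for $x_{p'}$; since $x_T$ itself then occurs as an r-term of $\sharedu$, $\finc{x_T}$ is conflicting, so $\marked u$ postpones it, moving $x_T$ onto the l-value $x_{\marked{r'}}$; now the one r-term $x_T$ is shared by $\marked u(x_{\marked{r'}})$ and $\marked u(x_{p'})$, so $\ii u$ groups it into one active counter $x_{T'}$ with $\marked{r'}\in T'$ and $p'\in T'$ --- the invariant with $j_1-j_2$ increased by one. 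The remaining cases --- both copy; both increment (with or without an independent conflicting copy into $T$, making both successor states plain resp. both marked); a marked state copying while the other increments, which resolves the mark; a marked state copying while the other copies, which preserves the mark --- are handled the same way: $x_{r'}$ and $x_{p'}$ always end up sharing a single r-term, hence a common shared counter whose marking on the two states tracks the increment difference. One also checks that other constituent transitions active on the same step cannot divert $r'$ or $p'$ into a different shared counter, since the grouping in $\ii u$ is per-r-term and membership in $\conflict$ is decided globally.

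The only case that does not close is a state marked in $T$ taking a $\trinc$-transition: $\sharedu$ would then contain the double increment $\finc{\finc{x_T}}$ and the construction aborts. By the invariant a state is marked only when its path is strictly ahead in increments, so this case amounts to one branch reaching two increments ahead of the other --- exactly what \eqref{eq:cntmeaning} cannot encode --- and it contradicts the assumption that $\csaiiof{\caof R}$ exists; hence it never arises and the induction goes through. I expect the main obstacle to be the bookkeeping of the inductive step together with, conceptually, turning ``double increment aborts'' into this clean contradiction; correctness of the tracked values themselves I would simply import from \cref{theo:alg1corectness} and \eqref{eq:cntmeaning}.
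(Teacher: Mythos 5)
Your proposal is correct and follows essentially the same route as the paper's proof: induction over prefixes of $v$ maintaining the invariant that the two branches' increment counts differ by at most one and that their current end states share an active counter whose $\plusmark$-marking records that difference, with the base case given by the shared assignment of $1$ and the only problematic case (a marked, i.e.\ already-ahead, branch incrementing again) yielding a double increment that contradicts the assumed existence of $\csaiiof{\caof R}$. Your extra bookkeeping remarks (per-r-term grouping in $\ii u$, other constituent transitions not diverting the shared counter) are details the paper's proof treats implicitly, but they do not change the argument.
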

\begin{proof}
  Assume that $\csaiiof{\caof{R}}$ exists.
  As $\csaiiof{\caof{R}}$ is created using extension of classical subset construction, the path over $w$ must lead to a state $(T, \ac)$ s.t. $q \in T$.
  Let $v = a_1\dots a_n$, $\pi_1[i]$ ($\pi_2[i]$), $1 \leq i \leq n$, be the path containing first $i$ transition of $\pi_1$ ($\pi_2$), and $r_i$ ($p_i$) be the state in which of the path $\pi_1[i]$ ($\pi_2[i]$) ends (i.e. $r_i = r$).
  Furthermore, let $\pi$ be the path from $(T, \ac)$ over $v$ in $\csaiiof{\caof{R}}$ and $(T_i, \ac_i)$ be the ending state of $\pi[i]$, for each $i$, $1 \leq i \leq n$.
  We now show that for each $i$, $1 \leq i \leq n$, the numbers of increments of $x$ occurring in $\pi_1[i]$ and $\pi_2[i]$ are either
  \begin{itemize}
    \item same and there is $x_{S_i} \in \ac_i$ where $r_i, p_i \in S_i$, or
    \item differ by one. Here, if $\pi_1[i]$ contains one more increment than $\pi_2[i]$, then there is $x_{S_i} \in \ac_i$ where $\marked{r_i}, p_i \in S_i$ (similarly if $\pi_2[i]$ contains one more increment).
  \end{itemize}
  We show this by induction on $i$.

  For $i = 1$, 
  because $\pi_1$ and $\pi_2$ are in $x$-fanout, their first transition from $q$ to $r_1$ and $p_1$ must assign $1$ to $x$.
  Therefore, the counters will be shared during the construction, i.e. there is $x_{S_1}$ to which we assign $1$ in the first transition of $\pi$ and $r_1, p_1 \in S_1$.
  
  Let now assume the statement holds for $1 \leq k < n$ and we want to show that it also holds for $k+1$.
  First, because $k+1 > 1$ and $\pi_1$ and $\pi_2$ are in $x$-fanout, we have that the transitions at the end of $\pi_1[k+1]$ and $\pi_2[k+1]$ must either copy or increment $x$.
  We now assume that the number of increments of $x$ occurring in $\pi_1[k]$ is one more than in $\pi_2[k]$ (the case when they are the same has similar reasoning).
  We know that there is $x_{S_k} \in \ac_k$ where $\marked{r_k}, p_k \in S_k$.
  If the transitions at the end of $\pi_1[k+1]$ and $\pi_2[k+1]$ have the same operation for $x$, then the same operation (possibly with filter) is done at the end of $\pi[k+1]$ for $x_{S_k}$ and the statement hold for $k+1$ (because the counters are still shared with the same increment postponing).
  If the transition at the end of $\pi_1[k+1]$ just copies $x$ while the one at the end of $\pi_2[k+1]$ increments it, then $x_{S_k}$ is incremented at the end of $\pi[k+1]$ (i.e. increments are synchronized), there is $x_{S_i} \in \ac_i$ where $r_i, p_i \in S_i$, and $\pi_1[k+1]$ and $\pi_2[k+1]$ have the same number of increments.
  Finally, the situation where the transition at the end of $\pi_1[k+1]$ increments $x$ while the one at the end of $\pi_2[k+1]$ copies it cannot happen.
  This is because the transition at the end of $\pi$ would have to do two increments on $x_{S_k}$ which is not allowed, i.e. $\csaiiof{\caof{R}}$  would not exist which is a contradiction.
\end{proof}
\begin{lemma}
  \label{lemma:twopaths2}
  Let during the construction of $\csaiiof{\caof{R}}$ be there a path $\pi$ over a word $w$ such that it ends in an active counter $x_S$ with some $\marked{r} \in S$.
  Then there are words $u,v$, $w = uv$ such that there is a path of $\caof{R}$ over $u$ going to some state $q$ and two paths $\pi_1$ and $\pi_2$ of $x$-fanout from the state $q$ over a word $v$ such that $\pi_1$ increments $x$ $k+1$-times ending in the state $r$, and $\pi_2$ increments $x$ $k$-times, for some $k \in \nat$.
\end{lemma}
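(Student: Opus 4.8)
The plan is to prove the lemma as the converse counterpart of \cref{lemma:twopaths}: where \cref{lemma:twopaths} reads a pair of diverging fanout paths forward into a $\plusmark$-marked shared counter, here I want to trace a $\plusmark$-mark backwards to the fanout paths that produced it. To make the induction go through I would first strengthen the claim to a characterization of \emph{every} active shared counter, marked or not: for every path over a word $w$ in the construction of $\csaiiof{\caof R}$ ending in a state $(T,\ac)$, and every active counter $x_S\in\ac$, there are a factorization $w = uv$, a path of $\caof R$ over $u$ to some state $q$, and an integer $m$, such that for every $\qq\in S$ there is an $x$-fanout path from $q$ over $v$ ending in the state underlying $\qq$ (which is $p$ whether $\qq=p$ or $\qq=\marked p$), using exactly $m$ increments when $\qq$ is unmarked and $m+1$ increments when $\qq$ is marked; moreover $S$ always contains an unmarked state. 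The lemma is then the instance obtained by choosing the hypothesized marked member $\marked r\in S$ together with an unmarked companion $p\in S$, and taking $k=m$.

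I would prove the strengthened claim by induction on $|w|$. The base case is the initial configuration, whose only active counters are the $x_{\{q_0\}}$ holding the ``inactive'' value $\{0\}$, and is checked directly (it corresponds to no counting loop having been entered yet). For the inductive step, consider the last transition $\delta$ of the path, say from $(T_0,\ac_0)$ to $(T,\ac)$ over the letter $a$, built from an $\scaut{\caof R}$-transition with update $u$ through the intermediate updates $\sharedu$, $\marked u$ and $\ii u$ of \cref{sec:augmented}. Since $x_S\in\ac$, by the construction of $\ii u$ the counter $x_S$ is the common $\lval$-group of some r-term $t$ of $\marked u$; by the definition of $\lval$ this $t$ occurs in $\marked u(x_\qq)$ for \emph{every} $\qq\in S$, and it is a single counter $t=x_{S'}$ which, by the way $\sharedu$ is formed, is active at $(T_0,\ac_0)$, i.e.\ $x_{S'}\in\ac_0$. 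For each $\qq\in S$ I would then unfold how $x_{S'}$ reached $\marked u(x_\qq)$, tracing back through $\marked u$ and $\sharedu$ to the original r-term $t_0$ of $u$ and to the underlying CA transition into the state underlying $\qq$; the structural properties of $\caof R$ (the list in \cref{sec:ca} and \cref{sec:caproperties}) restrict $t_0$ to $x$ (a copy, a $\trinner$-transition) or $x{+}1$ (an increment, a $\trinc$-transition). Exactly three situations can occur: (i) $\qq$ unmarked and $t_0$ a copy from an unmarked state of $S'$, with no $\conflict$-postponement; (ii) $\qq$ marked and $t_0$ a copy from a marked state of $S'$, a $\conflict$-postponement carrying an already postponed increment forward; (iii) $\qq$ marked and $t_0$ an increment from an unmarked state of $S'$, a $\conflict$-postponement that creates the mark from a fresh divergence. (The fourth combination, an increment from a marked state of $S'$, would be a double increment $\finc{\finc{x_{S'}}}$, which cannot occur since $\csaiiof{\caof R}$ exists.) Applying the induction hypothesis to $x_{S'}$ gives a common base state $q$, a suffix $v'$ of the prefix of $w$ of length $|w|-1$, and fanout paths to all states underlying members of $S'$; extending each of them by the corresponding $\trinner$/$\trinc$ transition over $a$ yields fanout paths to all states underlying members of $S$ over $v'a$, and the increment counts line up in all three cases, a copy keeping the count and an increment raising it by one, turning ``unmarked'' into ``marked'' precisely in case~(iii). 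That $S$ still contains an unmarked state is immediate in case~(iii) — the target of the copy that triggered the conflict is the unmarked companion of the new mark — and is preserved by (i)--(ii).

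I expect the main obstacle to be exactly this case analysis on the pipeline $u\mapsto\sharedu\mapsto\marked u\mapsto\ii u$ together with the set $\conflict$: one must show that every r-term of $\marked u$ sitting on an l-value $x_\qq$ has a unique provenance of one of the three forms above, that the $\lval$-grouping really places the target of a postponed increment and its unmarked companion into the \emph{same} group (this is where it matters that $\lval(t)$ collects \emph{all} l-values of $t$), and that the active-counter set is tracked consistently so that the predecessor counter $x_{S'}$ we trace back is indeed active and the induction hypothesis applies to it. Once that is settled, the rest — the base case, the check that the extended sequences are genuine $x$-fanout paths in the sense of \cref{sec:caproperties}, and the specialization back to the stated form of the lemma — is routine.
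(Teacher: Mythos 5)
Your overall route is essentially the paper's argument reorganized rather than a different one: the paper extracts the provenance chain $x_{S_l},\dots,x_{S_n}$ of the single counter $x_S$ (each update's r-term feeding the next) back to the point where it is assigned a constant, and then inducts forward along that chain with exactly your marked/unmarked increment-count invariant; you fold this into an induction on $|w|$ with the invariant strengthened to all active counters. The core machinery — tracing through $\sharedu$, $\marked u$, $\ii u$, the copy/increment dichotomy, $\conflict$-postponement, impossibility of a double increment because $\csaiiof{\caof R}$ exists, and the presence of an unmarked companion in $S$ — matches the paper's proof.

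There is, however, a genuine gap: your case analysis of the r-term feeding $x_S$ is incomplete, and the missing case is the one that carries the conclusion of the lemma. R-terms of $\marked u$ need not be of the form $x_{S'}$ or $\finc{x_{S'}}$ for a counter active in the predecessor state; they can be the constants $0$ and $1$ (property 4 of $\caof R$; see $x_{\set{s}}\ass 1$ in \cref{example:augmented}). The constant-$1$ case — entry into the counting loop via $\trin$ or $\trrestart$ — is precisely where the factorization $w=uv$, the base state $q$, and the fanout paths are created; the paper anchors its chain exactly there ($u_l(x_{S_l})=1$, $S_l=\entry{x}$, $q$ a source of the entry/restart transition). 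As written, your induction never produces any fanout path: the base case only offers the dormant counters $x_{\set{q_0}}$ holding $\{0\}$, for which the strengthened claim is actually false (no $x$-fanout path reaches them with any increment count, and the same holds for every counter arising from a constant-$0$ r-term outside $\states{x}$), and your inductive step only extends fanout paths inherited from a previously active counter $x_{S'}$. So for no loop counter is the invariant ever established. The repair is in the spirit of what you wrote — exclude the dormant-$0$ counters from the strengthened claim and add the initialization case ($t=1$: all members of $S$ unmarked, $m=0$, $v$ the current letter, $q$ the source of the $\trin$/$\trrestart$ transition), after which your three cases (i)–(iii) carry the step — but this case must be made explicit, since without it neither the factorization nor the fanout paths required by the lemma ever come into existence. (A smaller point to watch: several distinct r-terms may share one $\lval$-group, so $x_S$ can have more than one source counter; you then need to fix one source consistently when applying the induction hypothesis.)
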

\begin{proof}
  Let $w = a_1\dots a_n$ and $\move{(T_{i-1}, \ac_{i-1})}{a_i, \phi_i, u_i}{(T_i,\ac_i)}$, $1 \leq i \leq n$, be the transitions of $\pi$.
  Let $x_{S_l},x_{S_{l+1}}, \dots, x_{S_n}$, for some $1 \leq l \leq n$, be a sequence of counters where ${S_n} = {S}$, for each $l < j \leq n$ we have that term $u_j(x_{S_j})$ contains $x_{S_{j-1}}$, and $u(x_{S_l})$ does not contain any counter.
  Intuitively, this is a sequence of names of the same counter that starts by mapping either 0 or 1 to $x_{S_l}$ (and we will show it can actually be only 1) and then the values of this counter are further just copied or incremented to following counter until we end with $x_S$.
  Furthermore, $l \not= n$, otherwise $u_n(x_{S})$ would contain no counter and there would be no reason for $\marked{r}$ to be in $S$.
  It also holds that each $S_j \subseteq \states{x}$, as each $u_j(x_{S_j})$ contains $x_{S_{j-1}}$ and from the way the construction works, this means that there is a transition from each state of $S_{j-1}$ to some state of $S_{j}$ with $x$ being copied or incremented, i.e. they are transitions from $\trinner$ or $\trinc$ (which also means that $S_l \subseteq \states{x}$).
  We can also show that $S_l = \entry{x}$, because $u(x_{S_l})$ does not contain any counter, i.e. $u(x_{S_l}) = 1$ and this update is constructed from some transitions of $\trin$ or $\trrestart$ 
  We now take $u = a_1\dots a_{l-1}$, $v = a_l \dots a_n$ and the state $q$ is some initial state of the transitions from $\trin$ or $\trrestart$.
  Obviously, there is a path of $\caof{R}$ over $u$ going to $q$ and we can show by induction that for each $x_{S_j}$ we have that:
  \begin{itemize}
    \item if $q_1, q_2 \in S_j$, then there are paths $\pi_1$ to $q_1$ and $\pi_2$ to $q_2$ of $x$-fanout from the state $q$ over a word $a_l \dots a_j$ such that they both increment $x$ same amount of times and
    \item if $\marked{q_1}, q_2 \in S_j$, then there are paths $\pi_1$ to $q_1$ and $\pi_2$ to $q_2$ of $x$-fanout from the state $q$ over a word $a_l \dots a_j$ such that $\pi_1$ increments $x$ one more time than $\pi_2$.
  \end{itemize}
  The proof of this is very similar to the proof used for \cref{lemma:twopaths}. 
  Finally, $S$ must contain some non-marked state $p$, otherwise it would not be possible to have $\marked{r}$ in $S$ ($S$ cannot contain only marked states, because otherwise increment would have been synchronized in $u_n(c_{S})$).
\end{proof}


\begin{proof}[Proof of \cref{theo:class}]
%
\item{($\Rightarrow$)}
By contradiction.
Assume that $\caof{R}$ exists but $R$ is does not have synchronizing counting, hence, it has a sub-expression $S\regex{\{m,n\}}$ which does not have synchronizing counting.
There exist words $u = u_1,\ldots,u_k$ and $v = v_1,\ldots,v_{k+1}$ that are a counterexample to $S\regex{\{m,n\}}$ having synchronizing counting, i.e. $v$ is a prefix of $u$.  
We assume w.l.o.g. that $k$ is the smallest number with which such a counterexample can be obtained.
In $\caof R$, since it satisfies the conditions of \cref{sec:caproperties}, there is
a path $\pi$ to a source $q_i$ of an entry transition on the counting loop of
the counter $x$ that is connected with the sub-expression $S\regex{\{m,n\}}$.
Let $w$ be the word of that path. 
%
%
There must be two paths of $\caof R$ in $x$-fanout from $q_i$ over $v$. 
One of them, $\pi_1$, ends by a transition to some state $q$ in $\last{x}$ and it increments $x$ $k$-times (corresponding to processing
the word $v_1\cdots v_{k+1}$, by \cref{lemma:caofr}); the other one, $\pi_2$, corresponds to a prefix of $u_1\cdots u_{k}$. 
By \cref{lemma:caofr} we know it must increment $x$ at most $k-1$ times and if it incremented it fewer times, it would mean that we could remove some words $u_l\ldots u_k$, $l \leq k$, from $u$ and $v$ would still be a prefix of this word.
This would mean that we could have taken smaller $k$ and we would be still able to find counterexample to synchronizing counting, but $k$ is already smallest such one.
Therefore, $\pi_2$ increments $x$ exactly $k-1$ times and we can now apply \cref{lemma:twopaths} to find path of $\csaiiof{\caof{R}}$ over $wv$ such that it ends in state $(T, \ac)$ with some active counter $x_S \in \ac$ where $\marked{q} \in S$.
However, because $q \in \last{x}$, there is some transition $\delta$ in $\caof{R}$ that starts in the state $q$ and has $x \ass x+1$ in its update.
Now, during the construction of some transition from $(R, \ac)$ that uses $\delta$ we would get irresolvable counter replication and $\csaiiof{\caof{R}}$ would not exist, which is a contradiction.

\item{($\Leftarrow$)} By contradiction. Assume that $R$ has synchronizing counting, but $\csaiiof{\caof R}$ does not exist. 
Take the shortest path $\pi$ over some word $w$ to some state $(T, \ac)$ such that the construction of $\csaiiof{\caof{R}}$ failed while creating some transition from $(T, \ac)$.
This must mean that there is some $x_S \in \ac$ with some $\marked{q} \in S$ and $q \in \last{x}$.
Let $S\regex{\{m,n\}}$ be the sub-expression of $R$ whose counting loop in
$\caof R$ uses a counter $x$.
From \cref{lemma:twopaths2} we get two words $u,v$, $w = uv$ and two paths $\pi_1$ and $\pi_2$ in some $x$-fanout of some $q_i$ where $\pi_1$ ends in $q$ and increments $x$ $k+1$-times and $\pi_2$ increments $x$ $k$-times.
From \cref{lemma:caofr} and  $q \in \last{x}$, we can divide $v$ to $v_1,\ldots,v_{k+2}$ (corresponding to the path $\pi_1$) where $v_i \in L(S)$.
If we add to $\pi_2$ some path over some word $v'$ so that it ends in some state from $\last{x}$ (without adding any transition that increment $x$), we can then divide $vv'$ to $u_1\ldots u_{k+1}$.
However, $v$ is a prefix of $vv'$, which means that we found words $v_1,\ldots,v_{k+2}$ and $u_1\ldots u_{k+1}$ that show that $S\regex{\{m,n\}}$ does not have synchronizing counting, which is a contradiction.
\end{proof}

\section{Counterexample for Class From~\cite{oopsla}}
\label{app:counterexample}
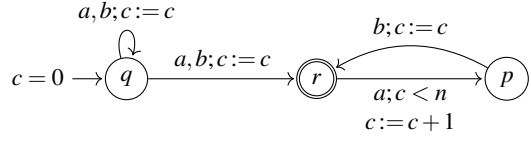
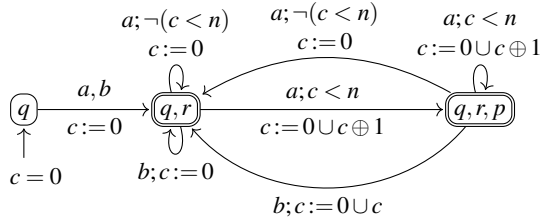
\begin{figure}
  \centering
  \begin{subfigure}{\columnwidth}
  \centering
  \begin{tikzpicture}[shorten >=1pt,node distance=2.5cm,on grid,initial text={$\phantom{\{\}}c=0$}]
    \node[state,initial]   (q_0)                {$q$};
    \node[state,accepting] (q_1) [right=of q_0] {$r$};
    \node[state]           (q_2) [right=of q_1] {$p$};
    \path[->] (q_0) edge                node [above] {$a,b; c \ass c$} (q_1)
                    edge [loop above]   node         {$a,b; c \ass c$} ()
              (q_1) edge                node [below,align=center] {$a;c < n$\\$c \ass c + 1$} (q_2)
              (q_2) edge [bend right]   node [above] {$b; c \ass c$} (q_1);
  \end{tikzpicture}
  \caption{CA created using the construction from~\cite{oopsla}}
  \label{fig:CAnonuniform}
  \end{subfigure}
  \begin{subfigure}{\columnwidth}
  \centering
  \begin{tikzpicture}[shorten >=1pt,node distance=4cm,on grid,initial text={$\phantom{\{\}}c=0$}]
    \node[mstate,initial below]   (q_0)                {$q$};
    \node[mstate,accepting
           ] (q_1) [right=2cm of q_0] {$q,r$};
    \node[mstate,accepting]           (q_2) [right=of q_1] {$q,r,p$};
    \path[->] (q_0) edge                node [above] {$a,b$} node [below] {$c \ass 0$} (q_1)
              (q_1) edge                node [above] {$a; c < n$} node [below] {$c \ass 0 \cup \finc{c}$} (q_2)
                    edge [loop above]   node [align=center] {$a; \neg (c < n)$\\$c \ass 0$} ()
                    edge [loop below]   node                {$b; c \ass 0$} ()
              (q_2) edge [bend right]   node [above,align=center] {$a; \neg (c < n)$\\$c \ass 0$} (q_1)
                    edge [bend left=50] node [below] {$b; c \ass 0 \cup c$} (q_1)
                    edge [loop above]   node [align=center] {$a; c < n$\\$c \ass 0 \cup \finc{c}$} ();
  \end{tikzpicture}
  \caption{Determinized CSA by algorithm from~\cite{oopsla}}
  \label{fig:CsAnonuniform}
  \end{subfigure}
  \caption{CA and determinized CSA for regex \regex{.*.(ab)\{n\}} where the final condition is $c \geq m$ for the denoted states and $\bot$ for all the other, and the missing guards are $\top$.}
\end{figure}

The authors of~\cite{oopsla} claimed that if regex $R$ contains counting loops only of type\\ \regex{$(\alpha_1\alpha_2\dots\alpha_n)\{n\}$} where $\alpha_1$ denotes a set of symbols disjoint with any symbols from $\alpha_i$, $2 \leq i \leq n$, then the created CA can be determinized by their algorithm into CSA that accepts the same language.
However, for regex \regex{.*.(ab)\{n\}}, which satisfies the syntactic criterion, we get a counting automaton $A$ on \cref{fig:CAnonuniform} created by construction from~\cite{oopsla}. 
The determinized CSA $A'$ shown on \cref{fig:CsAnonuniform} is created according to the algorithm from~\cite{oopsla}.
However, the word containing only $m+1$ symbols $a$ is accepted by $A'$ but not by $A$.

\section{Recognizing Letter Marked Counting}
\label{sec:checker}
%
The following simple procedure determines whether a regex $R$ has letter-marked counting. It inductively computes for a regex $R$ the set $T_R$ of all sets of marker-letters. We use $\Sigma(R)$ as the set of all symbols occurring in $R$.
\begin{itemize}
\item For $R = \regex a, a\in \Sigma$, we have $T_R = \set{\set{a}}$, as there is only one word $a \in L(R)$. For $R = \epsilon$, we have $T_{R} = \{\}$.
\item For $R = \rexuni{R_1}{R_2}$, we know that for each $T_1 \in T_{R_1}$, each word in $L(R_1)$ contains exactly one symbol from $T_1$. However, there might be words in $L(R_2)$ that do not contain any symbol from $T_1$ or they contain more than one symbol. If we are given some $T_2$ from $T_{R_2}$, we can see that symbols from $\Sigma(R_2) \setminus T_2$ can occur more than once, or not occur at all, therefore, we can just test if $ T_1 \cap (\Sigma(R_2) \setminus T_2) = \emptyset$. We then take \[T_R = \{\,T_1 \cup T_2\,|\,T_1 \in T_{R_1}, T_2 \in T_{R_2}, T_1 \cap (\Sigma(R_2) \setminus T_2) = \emptyset \text{ and } T_2 \cap (\Sigma(R_1) \setminus T_1) = \emptyset\,\}.\]
\item For $R = \rexconcat{R_1}{R_2}$, we know that symbols from $T \in T_{R_1}$ occur in words from $\rexconcat{R_1}{R_2}$ exactly once, if they do not occur in the word given by $R_2$, i.e. if $T \cap \Sigma(R_2) = \emptyset$. We can therefore take \[T_R = \{\,T\,|\,T \in T_{R_1} \text{ and } T \cap \Sigma(R_2) = \emptyset\,\} \cup \{\,T\,|\,T \in T_{R_2} \text{ and } T \cap \Sigma(R_1) = \emptyset\,\}.\]
\item For $R = \rexstar{R'}$, no set of marker-letters exists, as repeating words from $R'$ forces each letter to occur more than once. For this case, $T_{R} = \{\}$.
\end{itemize}

\section{Regexes with Non-Synchronizing Counting}
\label{sec:nonvisible}
The following is a list of 24 non-synchronizing flat regexes with the sum of upper bounds of counters larger than 20 from the benchmark of \cite{oopsla}:
\begin{lstlisting}
^(.*?){1,128}$
(.*){1,32000}[bc]
^(.*){0,254}$
(.+){25}(.*)
((?:[^\n]*\n?){1,40})
(\n\s+)(criterion .*\n)(\s.+){1,99}
^(([\w\d\-_]+)\W([\w\d]+)\W){1,32}? *(.+)
^(([\w\d\-_]+)\W([\w\d]+)\W){1,32}? *(\w+.+)
^[a-z0-9]+([._\\-]*[a-z0-9])*@([a-z0-9]+[-a-z0-9]*[a-z0-9]+.){1,63}[a-z0-9]+$
^[a-z0-9]+([._\\-]*[a-z0-9])*@(\w+([-.]\w+)*\.){1,63}[a-z0-9]+$
^(([0-9]+\s*){1,255})(.*)?$
REK\: ([a-zA-Z]{2}[0-9]{2}[a-zA-Z0-9]{4}[0-9]{7}([a-zA-Z0-9]?){0,16})
ICE_Dims.{92}((_?(X|\d+)){13})
[a-zA-Z]{2}[0-9]{2}[a-zA-Z0-9]{4}[0-9]{7}([a-zA-Z0-9]?){0,16}
https?://(?:\S+/){4}([0-9a-f]{40})/?([^#\s]+)?(?:#(\S+))?
/\r\n\s*Accept-Language\s*|3a|\s*([^\r\n]*?\x2c){20}/mi
^jdbc:db2://((?:(?:(?:25[0-5]|2[0-4][0-9]|[01]?[0-9][0-9]?).){3}(?:25[0-5]|2[0-4][0-9]|[01]?[0-9][0-9]?))|(?:(?:(?:(?:[A-Z|a-z])(?:[\w|-]){0,61}(?:[\w]?[.]))*)(?:(?:[A-Z|a-z])(?:[\w|-]){0,61}(?:[\w]?)))):([0-9]{1,5})/([0-9|A-Z|a-z|_|#|$]{1,16})$
^[-\w&#39;+*$^&%=~!?{}#|/`]{1}([-\w&#39;+*$^&%=~!?{}#|`.]?[-\w&#39;+*$^&%=~!?{}#|`]{1}){0,31}[-\w&#39;+*$^&%=~!?{}#|`]?@(([a-zA-Z0-9]{1}([-a-zA-Z0-9]?[a-zA-Z0-9]{1}){0,31})\.{1})+([a-zA-Z]{2}|[a-zA-Z]{3}|[a-zA-Z]{4}|[a-zA-Z]{6}){1}$
^([a-z0-9]+([\-a-z0-9]*[a-z0-9]+)?\.){0,}([a-z0-9]+([\-a-z0-9]*[a-z0-9]+)?){1,63}(\.[a-z0-9]{2,7})+$
[-\w'+*$^&%=~!?{}#|/`]{1}([-\w'+*$^&%=~!?{}#|`.]?[-\w'+*$^&%=~!?{}#|`]{1}){0,31}[-\w'+*$^&%=~!?{}#|`]?@(([a-zA-Z0-9]{1}([-a-zA-Z0-9]?[a-zA-Z0-9]{1}){0,31})\.{1})+([a-zA-Z]{2}|[a-zA-Z]{3}|[a-zA-Z]{4}|[a-zA-Z]{6}){1}
jdbc:db2://((?:(?:(?:25[0-5]|2[0-4][0-9]|[01]?[0-9][0-9]?).){3}(?:25[0-5]|2[0-4][0-9]|[01]?[0-9][0-9]?))|(?:(?:(?:(?:[A-Z|a-z])(?:[\w|-]){0,61}(?:[\w]?[.]))*)(?:(?:[A-Z|a-z])(?:[\w|-]){0,61}(?:[\w]?)))):([0-9]{1,5})/([0-9|A-Z|a-z|_|#|$]{1,16})
/^\s*Accept-Language\s*|3a|\s*([^\r\n]*?\x2c){20}/mi
/PUTOLF\s+((\S+\s+){4}[^\s]{256}|(\S+\s+){6}[^\x3c]{512})/i
/^.*HTTP.*\r\n(.+\x3a\s+.+\r\n){31,}/
\end{lstlisting}

\end{document}